\documentclass[aps,prx,twocolumn,superscriptaddress,nofootinbib,floatfix]{revtex4-2}
\usepackage{dblfloatfix}

\usepackage{amsmath}
\usepackage{amssymb}
\usepackage{amsthm}
\usepackage{amsfonts}
\usepackage{mathtools}
\usepackage{bm}

\usepackage[T1]{fontenc}
\usepackage[utf8]{inputenc}

\usepackage{graphicx}
\usepackage[dvipsnames]{xcolor}

\usepackage{booktabs}
\usepackage{array}
\usepackage{braket}

\usepackage{float}
\usepackage{algorithm}
\usepackage{algorithmic}

\usepackage[most]{tcolorbox}

\usepackage[colorlinks=true,linkcolor=blue,citecolor=blue,urlcolor=blue]{hyperref}

\theoremstyle{plain}
\newtheorem{theorem}{Theorem}
\newtheorem{lemma}[theorem]{Lemma}
\newtheorem{proposition}[theorem]{Proposition}
\newtheorem{corollary}[theorem]{Corollary}

\theoremstyle{definition}
\newtheorem{definition}[theorem]{Definition}

\newtheorem{problem}[theorem]{Problem}

\theoremstyle{remark}
\newtheorem{remark}[theorem]{Remark}

\newcommand{\Heff}{H_{\mathrm{eff}}}
\newcommand{\HR}{H_{R}}
\newcommand{\HI}{H_{I}}
\newcommand{\alphaR}{\alpha_{R}}
\newcommand{\betaI}{\beta_{I}}
\newcommand{\Htilde}{\widetilde{H}}

\DeclareMathOperator{\poly}{poly}

\DeclareMathOperator{\spec}{spec}

\DeclareMathOperator{\rank}{rank}

\newcommand{\op}{\mathrm{op}}

\newcommand{\calO}{\mathcal{O}}
\newcommand{\eps}{\varepsilon}

\newcommand{\bbT}{\mathbb{T}}
     
\newcommand{\C}{\mathbb{C}} 
\newcommand{\R}{\mathbb{R}}

\newcommand{\cT}{\mathcal{T}}
\newcommand{\cP}{\mathcal{P}}

\newcommand{\cF}{\mathcal{F}}
\newcommand{\cG}{\mathcal{G}}
\newcommand{\cH}{\mathcal{H}}

\newcommand{\bTheta}{\boldsymbol{\Theta}}

\begin{document}

\title{Optimal Bounds, Barriers, and Extensions for Non-Hermitian Bivariate Quantum Signal Processing}

\author{Joshua M. Courtney}
\affiliation{University of Georgia, Department of Physics and Astronomy}

\date{\today}

\begin{abstract}
Multivariate quantum signal processing (M-QSP) has recently been shown to be applicable for non-Hermitian Hamiltonian simulation, opening several problems regarding the optimization landscape, angle-finding, and constant-factor analysis.
We resolve several of these problems here.
We find the anti-Hermitian query complexity $d_I = \Theta(\betaI T + \log(1/\eps)/\log\log(1/\eps))$ to be tight, established via Chebyshev coefficient bounds, modified Bessel function asymptotics, and Lambert~$W$ inversion.
Fast-forwarding to $d_I = \calO(\sqrt{\betaI T})$ is impossible in the bivariate polynomial model, though a linear state-dependent improvement to $d_I = \calO(\beta_{\mathrm{eff}} T + \log(1/\eps)/\log\log(1/\eps))$ is achievable.
The optimization landscape of M-QSP admits spurious local minima, but a warm-start basin guarantee ensures the two-stage algorithm converges.
CRC-exploiting block peeling reduces angle-finding from $\calO(d^3)$ to $\calO(d^2)$ classical operations, and optimized error allocation yields a leading constant of approximately~$2$ relative to the information-theoretic lower bound.
A constant-ratio condition extends to non-identical signal operators, enabling time-dependent non-Hermitian simulation with query complexity $\calO(\int_0^T(\alphaR(s) + \betaI(s))\,ds + \log(1/\eps)/\log\log(1/\eps))$.
Block-encoding overhead $e^{-2\betaI T}$ holds across all function classes within the walk-operator oracle model, and dilational methods (Schr\"odingerization) achieve the walk-operator barrier.
A precisely characterized direct-access construction achieves the intrinsic barrier $e^{-2\omega T}$ (with $\omega < \betaI$ for non-commuting Hamiltonians) on a restricted domain, though extension to the full bitorus remains open.
\end{abstract}

\maketitle

\section{Introduction}\label{sec:intro}

Simulating the dynamics of non-Hermitian effective Hamiltonians $\Heff = \HR + i\HI$ ($\HR, \HI$ Hermitian, $\HI \succeq 0$) is a central task in quantum simulation~\cite{lindblad1976generators, gorini1976completely}, arising in Lindblad no-jump
trajectories~\cite{dalibard1992wave}, reactive scattering with complex absorbing potentials~\cite{riss1993calculation}, and optical gain--loss systems~\cite{bender1998real, ruter2010observation, moiseyev2011non}. 
Non-unitarity forces any quantum implementation to incur a postselection cost that grows with the anti-Hermitian norm $\betaI = \|\HI\|_\op$ and the simulation time~$T$~\cite{courtney2026paper1}.

The companion paper~\cite{courtney2026paper1} develops bivariate quantum signal processing (M-QSP) for this problem, achieving query complexity
\begin{equation}\label{eq:main_result}
  Q = \calO\!\left((\alphaR + \betaI)T
  + \frac{\log(1/\eps)}{\log\log(1/\eps)}\right)
\end{equation}
in the separate-oracle model, with single-qubit postselection at success probability $e^{-2\betaI T}\|e^{-i\Heff T}\ket{\psi_0}\|^2(1 - \calO(\eps))$.
That paper identifies open problems whose resolution strengthens primary results, clarifying non-Hermitian simulation when performed with mutliple oracles.
Several of these problems have independent roots.
The $\log/\log\log$ gap connects to classical polynomial approximation theory going back to Bernstein~\cite{bernstein1922ordre}, while the optimization landscape question mirrors the analogous (resolved) conjecture for univariate QSP~\cite{dong2021efficient}. 
Fast-forwarding raised by the $\sqrt{\betaI}$ scaling of APS~\cite{hu2026amplitude} and the barrier applicability question implicit in normalization analysis~\cite{an2026quantum,jin2024quantum} prompts discussion and itemization of simulation boundaries.

Table~\ref{tab:summary} provides a one-line summary and resolution status for each problem, with the section reference.
Each section is self-contained but the result compound. 
For example, the tight lower bound (Sec.~\ref{sec:optimality}) feeds into the constant-factor optimization (Sec.~\ref{sec:constants}) and the fast-forwarding impossibility (Sec.~\ref{sec:fast_forward}), while the barrier analysis (Sec.~\ref{sec:barriers}) finds insight from all other sections.

\onecolumngrid
\begin{table}[!b]
\begin{minipage}{\textwidth}
\centering
\caption{Summary of the eight open problems and their resolutions.}
\label{tab:summary}
\begin{tabular}{clll}
\toprule
\textbf{Problem} & \textbf{Title} & \textbf{Resolution} & \textbf{Section} \\
\midrule
1 & $\log/\log\log$ gap & Tight: $d_I^* = \log(1/\eps)/\log\log(1/\eps)\,(1+o(1))$ & \ref{sec:optimality} \\
2 & SOS rank & $L = 2$ exactly, independent of $(d_R, d_I)$ & \cite{courtney2026paper1}, \S6 \\
3 & Optimization landscape & Conjecture false; warm-start basin guarantee & \ref{sec:landscape} \\
4 & Efficient angle-finding & Block peeling: $\calO(d_R \cdot d_I)$ & \ref{sec:anglefinding} \\
5 & Fast-forwarding & Impossible (worst-case); linear state-dependent & \ref{sec:fast_forward} \\
6 & Constant factors & Leading constant $\approx 2$; $5$--$17\times$ over LCHS & \ref{sec:constants} \\
7 & Time-dependent $\Heff(t)$ & CRC extends; multilinear M-QSP & \ref{sec:time_dependent} \\
8 & Barrier applicability & Universal in walk-operator model; oracle-model-dependent & \ref{sec:barriers} \\
\bottomrule
\end{tabular}
\end{minipage}
\end{table}
\twocolumngrid

\newpage\noindent \newline\newline

\subsection{Notation and conventions}

We adopt the notation and conventions of the companion paper~\cite{courtney2026paper1} throughout.
$\Heff = \HR + i\HI$ with $\HR = \HR^\dagger$, $\HI = \HI^\dagger \succeq 0$; $\alphaR = \|\HR\|_\op$, $\betaI = \|\HI\|_\op$; the walk operators $W_R$ and $U_I$ encode $\HR/\alphaR$ and $\HI/\betaI$ respectively; the M-QSP circuit $\cG(\bTheta, \mathbf{s})$ has rotation angles $\bTheta = \{(\theta_k, \phi_k)\}_{k=0}^d$ and schedule $\mathbf{s}: \{1,\ldots,d\} \to \{R, I\}$; and $P, Q \in \cP_{d_R,d_I}^+$ denote the signal and complementary polynomials satisfying $|P|^2 + |Q|^2 = 1$ on $\bbT^2$.

\subsection{Guide to sections}

Section~\ref{sec:optimality} proves the tight $\log/\log\log$ scaling for the anti-Hermitian query complexity, establishing that the M-QSP algorithm is optimal.
Section~\ref{sec:landscape} investigates the optimization landscape, refuting the no-spurious-minima conjecture and establishing a warm-start basin guarantee with quantitative scaling analysis.
Section~\ref{sec:anglefinding} presents the CRC-exploiting block peeling algorithm that reduces angle-finding from $\calO(d^3)$ to $\calO(d^2)$.
Section~\ref{sec:fast_forward} proves the impossibility of polynomial fast-forwarding and characterizes the achievable state-dependent improvement.
Section~\ref{sec:constants} optimizes the leading constant factors and provides detailed comparisons with Dyson LCU and LCHS.
Section~\ref{sec:time_dependent} extends the M-QSP framework to time-dependent non-Hermitian Hamiltonians via the CRC for non-identical signal operators.
Section~\ref{sec:barriers} analyzes the postselection barrier across oracle models, establishing function-class independence.

\section{Optimality of the Anti-Hermitian Query Complexity}
\label{sec:optimality}

The M-QSP construction~\cite{courtney2026paper1} achieves anti-Hermitian query count $d_I = \calO(\betaI T + \log(1/\eps)/\log\log(1/\eps))$.
The lower bound proved therein gives $Q_I \geq \Omega(\betaI T + \log(1/\eps))$.
This section closes the gap by proving that $\log(1/\eps)/\log\log(1/\eps)$ is a tight lower bound for the $\eps$-dependent term, establishing that the $\log\log$ factor is a genuine feature of the polynomial approximation problem and cannot be removed.
The rough scaling $d^* \sim \log(1/\eps)/\log\log(1/\eps)$ is implicit in classical work on exponential approximation going back to Bernstein~\cite{bernshtein1975number}, but we have not yet seen precise characterization $d^*(c,\eps) = L/W(2L/(ea)) + O(\log L)$ with matching quantitative upper and lower bounds via Lambert~$W$ inversion in quantum simulation literature, with previous treatments establishing asymptotic order without explicit constants or matching bounds.

\subsection{Precise statement}

\begin{problem}\label{prob:loglog}
Define $d^*(c,\eps)$ as the minimum degree of a polynomial $p$ satisfying $|p(x)| \leq 1$ for all $x \in [-1,1]$ and $\|p - f_c\|_{\infty,[0,1]} \leq \eps$, where $f_c(x) = e^{c(x-1)}$.
Determine the exact asymptotic scaling of $d^*(c,\eps)$ as $\eps \to 0$ for fixed $c > 0$.
\end{problem}

\subsection{Preliminaries}

We collect three standard results.

\begin{lemma}[Chebyshev coefficient lower bound]\label{lem:cheb_lb}
Let $f \in C[-1,1]$ have Chebyshev coefficients $a_k$.
Then for every $d \geq 0$,
\begin{equation}\label{eq:cheb_lb}
  E_d(f)_{[-1,1]}
  := \inf_{\deg p \leq d}\|f - p\|_{\infty,[-1,1]}
  \geq \frac{\pi}{4}\,|a_{d+1}|.
\end{equation}
\end{lemma}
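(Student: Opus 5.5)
The plan is to prove the bound by pairing $f - p$ against the single Chebyshev polynomial $T_{d+1}$ and using orthogonality to remove $p$. I use the standard normalization of the Chebyshev coefficients: for $k \geq 1$,
\begin{equation*}
  a_k = \frac{2}{\pi}\int_0^\pi f(\cos\theta)\cos(k\theta)\,d\theta,
\end{equation*}
obtained from $f(x) = \tfrac{a_0}{2} + \sum_{k\geq1} a_k T_k(x)$ with the substitution $x = \cos\theta$ and $T_k(\cos\theta) = \cos(k\theta)$. Since $d \geq 0$, the index $d+1 \geq 1$, so this formula applies to $a_{d+1}$.

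\textbf{Step 1 (orthogonality).} Let $p$ be any polynomial with $\deg p \leq d$. Expand $p = \sum_{k=0}^d c_k T_k$. By the orthogonality relation $\int_0^\pi \cos(k\theta)\cos((d+1)\theta)\,d\theta = 0$ for $k \leq d$, we get
\begin{equation*}
  \int_0^\pi p(\cos\theta)\cos((d+1)\theta)\,d\theta = 0 .
\end{equation*}
It follows that
\begin{equation*}
  a_{d+1} = \frac{2}{\pi}\int_0^\pi (f-p)(\cos\theta)\cos((d+1)\theta)\,d\theta .
\end{equation*}

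\textbf{Step 2 (Hölder estimate).} Bound the integrand by $\|f-p\|_{\infty,[-1,1]}\,|\cos((d+1)\theta)|$. The kernel has $\int_0^\pi |\cos(m\theta)|\,d\theta = 2$ for every integer $m \geq 1$: over each of the $m$ half-periods, of length $\pi/m$, the integral of $|\cos|$ equals $2/m$. Hence
\begin{equation*}
  |a_{d+1}| \leq \frac{4}{\pi}\,\|f-p\|_{\infty,[-1,1]} .
\end{equation*}

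\textbf{Step 3 (infimum).} This inequality holds for every $p$ of degree at most $d$. Taking the infimum over such $p$ gives $E_d(f)_{[-1,1]} \geq \tfrac{\pi}{4}|a_{d+1}|$, which is \eqref{eq:cheb_lb}.

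No step is a real obstacle. The only points needing care are fixing the normalization of $a_k$, since the constant $\pi/4$ depends on the $2/\pi$ factor and the $k=0$ case is not used, and evaluating the $L^1$ norm of the kernel exactly. Continuity of $f$ is all that is needed for the integrals to make sense; no convergence of the Chebyshev series is required.
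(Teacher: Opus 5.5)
Your proof is correct and follows the same route as the paper: orthogonality of $T_{d+1}$ against polynomials of degree at most $d$, an $L^\infty$--$L^1$ estimate of the coefficient integral, and an infimum over $p$. Your exact evaluation $\int_0^\pi|\cos((d+1)\theta)|\,d\theta = 2$ is the step that actually produces the constant $4/\pi$, since the paper's stated shortcut $|T_{d+1}|\le 1$ by itself only gives $|a_{d+1}|\le 2\|f-p\|_\infty$.
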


\begin{proof}
Let $p$ be any polynomial of degree $\leq d$.
Since $T_{d+1}$ is orthogonal to all polynomials of degree $\leq d$ with respect to the Chebyshev weight $w(x) = (1-x^2)^{-1/2}$,
\[
  a_{d+1}
  = \frac{2}{\pi}\int_{-1}^{1}
  \frac{(f(x) - p(x))\,T_{d+1}(x)}{\sqrt{1-x^2}}\,dx.
\]
Taking absolute values and using $|T_{d+1}(x)| \leq 1$ on $[-1,1]$, followed by the substitution $x = \cos\theta$:
$|a_{d+1}| \leq (4/\pi)\,\|f - p\|_\infty$.
Taking the infimum over $p$ yields~\eqref{eq:cheb_lb}.
\end{proof}

\begin{lemma}[Chebyshev coefficients of $e^{a(x-1)}$]\label{lem:cheb_exp}
For $a > 0$, the function $g(x) = e^{a(x-1)}$ has Chebyshev coefficients $b_0 = e^{-a}I_0(a)$ and $b_k = 2e^{-a}I_k(a)$ for $k \geq 1$, where $I_k$ denotes the modified Bessel function of the first kind.
\end{lemma}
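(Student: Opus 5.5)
The plan is to use the Jacobi--Anger generating function for the modified Bessel functions and then substitute $x = \cos\theta$. The starting point is the classical identity
\begin{equation*}
  e^{\frac{a}{2}(t + t^{-1})} = \sum_{k=-\infty}^{\infty} I_k(a)\, t^k,
  \qquad t \in \C\setminus\{0\},
\end{equation*}
which can be checked directly. Multiply the power series of $e^{at/2}$ and $e^{a/(2t)}$, and collect the coefficient of $t^k$; it equals $\sum_{m\ge 0} (a/2)^{2m+k}/(m!(m+k)!) = I_k(a)$. The symmetry $I_{-k}(a) = I_k(a)$ for integer $k$ follows from the same computation, since the coefficient of $t^{-k}$ is the same sum with the roles of the two factors exchanged.

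Next, set $t = e^{i\theta}$. Then $\tfrac12(t+t^{-1}) = \cos\theta$, so
\begin{equation*}
  e^{a\cos\theta} = I_0(a) + \sum_{k\ge1} I_k(a)\,(e^{ik\theta}+e^{-ik\theta}) = I_0(a) + 2\sum_{k\ge 1} I_k(a)\cos(k\theta).
\end{equation*}
Writing $x=\cos\theta$ and using $T_k(\cos\theta)=\cos(k\theta)$ gives $e^{ax} = I_0(a) + 2\sum_{k\ge1} I_k(a) T_k(x)$. Multiplying by the constant $e^{-a}$ yields the stated expansion of $g(x)=e^{a(x-1)}$.

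It remains to show that these numbers really are the Chebyshev coefficients in the convention of Lemma~\ref{lem:cheb_lb}, namely $a_k = \tfrac{2}{\pi}\int_{-1}^1 f T_k (1-x^2)^{-1/2}\,dx$ for $k\ge1$ and half of that for $k=0$. The series converges absolutely and uniformly on $[-1,1]$, because $|T_k|\le1$ and $\sum_k I_k(a) = \tfrac12(e^{a} + I_0(a)) < \infty$; the latter follows from the generating function at $t=1$. Uniform convergence allows term-by-term integration against $T_j$ with the Chebyshev weight. Orthogonality, $\int_0^\pi \cos k\theta\cos j\theta\,d\theta = \tfrac{\pi}{2}\delta_{jk}$ for $j,k\ge1$ and $=\pi$ for $j=k=0$, then recovers exactly $b_0 = e^{-a}I_0(a)$ and $b_k = 2e^{-a}I_k(a)$.

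The only step needing care is this last one, justifying the interchange of sum and integral and matching the normalization convention for $k=0$ versus $k\ge1$. Uniform convergence handles the interchange, and the convention is fixed by the prefactor $2/\pi$ used in Lemma~\ref{lem:cheb_lb}. Positivity $I_k(a)>0$ for $a>0$ is immediate from the series and will be useful later when these coefficients are fed into Lemma~\ref{lem:cheb_lb}.
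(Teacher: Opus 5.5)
Your proposal is correct and follows the paper's route. The paper cites the classical expansion $e^{ax} = I_0(a) + 2\sum_{k\ge1} I_k(a)\,T_k(x)$ from the NIST handbook and multiplies by $e^{-a}$; you instead derive that expansion from the generating function $e^{\frac{a}{2}(t+t^{-1})} = \sum_{k} I_k(a)\,t^k$ and check via uniform convergence and orthogonality that the coefficients match the $2/\pi$ normalization of Lemma~\ref{lem:cheb_lb}, which fills in what the paper leaves to the citation.
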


\begin{proof}
The classical identity $e^{ax} = I_0(a) + 2\sum_{k=1}^\infty I_k(a)\,T_k(x)$ on $[-1,1]$~\cite{frank2010nist} gives the result after multiplication by $e^{-a}$.
\end{proof}

\begin{lemma}[Series lower bound on $I_k$]\label{lem:bessel_lb}
For all $a > 0$ and $k \geq 0$: $I_k(a) \geq (a/2)^k/k!$.
\end{lemma}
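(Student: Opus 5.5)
The plan is to read the bound straight off the power series of the modified Bessel function. For real $a>0$ and integer $k\ge 0$ one has the standard series
\[
  I_k(a) = \sum_{m=0}^{\infty} \frac{1}{m!\,(m+k)!}\left(\frac{a}{2}\right)^{2m+k}.
\]
This can be cited from~\cite{frank2010nist}. Alternatively, it can be derived from the generating function $e^{(a/2)(t+1/t)} = \sum_{k\in\mathbb{Z}} I_k(a)t^k$, which is consistent with the Chebyshev expansion used in Lemma~\ref{lem:cheb_exp} upon setting $t=e^{i\theta}$. In that case one multiplies the two exponential series $e^{at/2}$ and $e^{a/(2t)}$ and collects the coefficient of $t^k$.

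The key observation is that for $a>0$ every term of the series is strictly positive. Hence $I_k(a)$ is bounded below by any single term, and in particular by the $m=0$ term $(a/2)^k/k!$, which is exactly the claimed bound. For $k=0$ this reads $I_0(a)\ge 1$, which also follows directly from the same series.

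No real obstacle arises. The only points to check are that the series converges absolutely, which holds because it is dominated by $e^{a^2/4}(a/2)^k$, so that dropping the terms with $m\ge 1$ is legitimate, and that the normalization of $I_k$ matches the one in Lemma~\ref{lem:cheb_exp}. The generating-function derivation settles both points at once.
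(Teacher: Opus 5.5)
Your proposal is correct and follows the paper's proof: expand $I_k(a)$ in its power series, observe that every term is positive for $a>0$, and keep only the $m=0$ term $(a/2)^k/k!$. The convergence check and the generating-function cross-check against the normalization in Lemma~\ref{lem:cheb_exp} are sound additions, though the paper does not need them.
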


\begin{proof}
Every term in the series $I_k(a) = \sum_{m=0}^\infty (a/2)^{k+2m}/(m!\,(m+k)!)$ is positive; the $m=0$ term gives the bound.
\end{proof}

\subsection{Main theorem: tight \texorpdfstring{$\log/\log\log$}{log log log} scaling}

\begin{theorem}[Tight lower bound]\label{thm:tight_loglog}
Let $c > 0$ and $\eps \in (0,1/2)$.
Setting $a = c/2$, $L = \log(1/\eps)$, and $L'' = L - a - \log(2e/\pi) - \tfrac{1}{2}\log(L - a - \log(2e/\pi))$:

\medskip
\noindent{(Lower bound.)}
\begin{equation}\label{eq:main_lb_quant}
  d^*(c,\eps) \geq
  \left\lfloor\frac{L''}{W(2L''/(ea))}\right\rfloor - 1,
\end{equation}
where $W$ is the principal branch of the Lambert $W$ function.

\medskip
\noindent{(Upper bound.)}
The Chebyshev partial sum of $f_c$ satisfies
\begin{equation}\label{eq:main_ub}
  d^*(c,\eps) \leq
  \min\{d : 2e^{-a}\textstyle\sum_{k=d+1}^\infty I_k(a) \leq \eps\}.
\end{equation}

\medskip
\noindent{(Asymptotics.)}
For fixed $c > 0$ as $\eps \to 0$:
\begin{equation}\label{eq:tight_asymptotics}
  d^*(c,\eps) = \frac{\log(1/\eps)}{\log\log(1/\eps)}\bigl(1 + o(1)\bigr).
\end{equation}
\end{theorem}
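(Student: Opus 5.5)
The plan is to move the approximation problem on $[0,1]$ to the standard interval $[-1,1]$ by the affine change $x = (1+y)/2$. Under this map $f_c(x) = e^{c(x-1)}$ becomes $g(y) = e^{a(y-1)}$ with $a = c/2$, and polynomial degree is unchanged. The bound $|p|\le 1$ on $[-1,1]$ plays no role in the lower bound. For the upper bound it is harmless: the Chebyshev partial sum is close to $e^{c(x-1)}$ on all of $[-1,1]$, where that function lies in $(0,1]$, so only a rescaling of the partial sum by $(1+\eps)^{-1}$ is needed. That rescaling costs at most a factor $2$ in $\eps$, which I absorb into the $O(\log L)$ slack.

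\textbf{Upper bound.} For the upper bound I will use Lemma~\ref{lem:cheb_exp}: truncating the Chebyshev series of $g$ at degree $d$ gives uniform error at most $\sum_{k>d}|b_k| = 2e^{-a}\sum_{k\ge d+1} I_k(a)$, because $|T_k|\le 1$. This yields~\eqref{eq:main_ub} directly.

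\textbf{Lower bound.} For the lower bound I will apply Lemma~\ref{lem:cheb_lb} on the transformed interval. If $\deg p \le d$ achieves error $\eps$, then $\eps \ge E_d(g) \ge \tfrac{\pi}{4}|b_{d+1}| = \tfrac{\pi}{2}e^{-a}I_{d+1}(a)$. Lemma~\ref{lem:bessel_lb} then gives
\[
\eps \ge \tfrac{\pi}{2}e^{-a}(a/2)^{n}/n!, \qquad n = d+1 .
\]
Stirling's bound $n! \le e\sqrt{n}\,(n/e)^n$ turns this into
\[
L \le a + \log(2e/\pi) + \tfrac12\log n + n\log\!\bigl(2n/(ea)\bigr),
\]
up to the harmless constant bookkeeping that defines $L''$. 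I then need $\tfrac12\log n \le \tfrac12\log(L-a-\log(2e/\pi))$. Since $n\log(2n/(ea))$ eventually dominates, one can assume $n \lesssim L$, and otherwise the lower bound is trivial. With that, the condition reads $n\log(2n/(ea)) \ge L''$.

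To invert this I substitute $u = \log(2n/(ea))$, so that $u e^{u} = 2L''/(ea)$ and therefore $u = W(2L''/(ea))$. Monotonicity of $n\mapsto n\log(2n/(ea))$ for $n > a/2$ then gives $n \ge L''/W(2L''/(ea))$. Taking integer parts and subtracting one gives~\eqref{eq:main_lb_quant}.

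\textbf{Asymptotics.} For fixed $a$, $L'' = L - O(\log L)$, and the expansion $W(z) = \log z - \log\log z + o(1)$ gives $L''/W(2L''/(ea)) = (L/\log L)(1+o(1))$, which is the lower half of~\eqref{eq:tight_asymptotics}. For the matching upper half I will bound the tail with the standard inequality $I_k(a) \le (a/2)^k e^{a^2/4}/k!$, obtained by comparing the series termwise, so that $\sum_{k>d} I_k(a) \le 2(a/2)^{d+1}e^{a^2/4}/(d+1)!$ once $d \ge a$. Choosing $d = (1+\delta)L/\log L$ makes $\log((d+1)!) \ge (1+\delta)L(1-o(1))$, which exceeds $L + O(d)$. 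Hence the upper bound~\eqref{eq:main_ub} is at most $(1+o(1))L/\log L$.

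The main obstacle I expect is the Lambert inversion step. The $\tfrac12\log n$ Stirling correction depends on $n$ itself, and I must justify replacing it by $\tfrac12\log(L-a-\log(2e/\pi))$ with the correct inequality direction. I will handle this with a case split. If $n > L - a - \log(2e/\pi)$, the claimed bound already holds, since $L''/W(\cdot) \le L''$ once $W \ge 1$; for the small-$L''$ regime where $W < 1$, I will check directly that the right-hand side is at most $0$ or trivial. Otherwise the substitution is valid and the monotone inversion goes through.
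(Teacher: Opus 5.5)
Your lower-bound chain and asymptotic argument follow the paper's route (Lemmas~\ref{lem:cheb_lb}--\ref{lem:bessel_lb}, Stirling, Lambert~$W$). Your termwise bound $I_k(a)\le(a/2)^k e^{a^2/4}/k!$ is a cleaner, rigorous replacement for the paper's asymptotic $I_k(a)\sim(ea/(2k))^k/\sqrt{2\pi k}$ in the upper half of \eqref{eq:tight_asymptotics}.

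\textbf{Upper bound.} The bound \eqref{eq:main_ub} has no slack, but your $(1+\eps)^{-1}$ rescaling only gives it with $\eps$ replaced by about $\eps/2$. So ``this yields \eqref{eq:main_ub} directly'' is inconsistent with your own preamble. The missing observation, which the paper uses, is positivity. Every $b_k=2e^{-a}I_k(a)>0$, so for $y\in[-1,1]$ the partial sum satisfies $|S_d(y)|\le\sum_{k\le d}b_k=S_d(1)<g(1)=1$, and no rescaling is needed. Separately, your claim that the partial sum is close to $e^{c(x-1)}$ on all of $x\in[-1,1]$ does not follow from $|T_k|\le1$. That bound only controls $y=2x-1\in[-1,1]$, i.e.\ $x\in[0,1]$. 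For $x\in[-1,0)$ you are evaluating at $y\in[-3,-1)$, where $|T_k(y)|$ grows like $(3+2\sqrt2)^k$. Using that growth bound, $|S_d|\le1$ there once $d$ exceeds an $a$-dependent threshold, so the asymptotics survive. However, the paper's positivity argument also checks only $y\in[-1,1]$, so neither argument establishes \eqref{eq:main_ub} for every $\eps\in(0,1/2)$.

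\textbf{Lower bound.} Write $L':=L-a-\log(2e/\pi)$. You rightly single out the $\tfrac12\log n$ term; the paper replaces it by $\tfrac12\log L'$ without comment. Your case $n\le L'$ is fine, but the case $n>L'$ does not close as planned.
\begin{itemize}
\item The branch $W\ge1$ uses $L''\le L'$, which needs $L'\ge1$.
\item For $W<1$ the right-hand side is $\lfloor(ea/2)e^{W}\rfloor-1\ge\lfloor ea/2\rfloor-1$, which is not trivial. The single-term bound $I_n(a)\ge(a/2)^n/n!$ is too lossy there, because $n$ is comparable to $a$. For $a=100$, $L'=6.5$, the chain forces only $d\ge139$, while \eqref{eq:main_lb_quant} asks for $d\ge140$. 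The claim is true there, but proving it needs more of the Bessel series.
\item Worse, as $L'\to0^+$ one has $L''\to\infty$, and the stated bound is false. Take $c=2$ ($a=1$) and $\eps\approx0.213$ chosen so that $L'=10^{-6}$. Then $L''\approx6.91$ and the right-hand side of \eqref{eq:main_lb_quant} equals $4$. Yet $p(x)=x$ satisfies $|p|\le1$ on $[-1,1]$ and $\|x-e^{2(x-1)}\|_{\infty,[0,1]}<0.16<\eps$.
\end{itemize}
So the quantitative lower bound needs an extra hypothesis, such as $L'\ge1$ together with $L''\ge e^2a/2$ (which forces $W\ge1$). Under that hypothesis both of your cases close. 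This costs nothing for \eqref{eq:tight_asymptotics}, since both conditions hold for all sufficiently small $\eps$.
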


\subsection{Proof}

\begin{proof}
We itemize the proof structure as follows:

\medskip\paragraph{Affine reduction.}
The affine bijection $\varphi: [-1,1] \to [0,1]$, $\varphi(t) = (t+1)/2$, transforms any constrained polynomial $p$ on $[0,1]$ into $q(t) = p(\varphi(t))$ on $[-1,1]$ with the same degree, $|q| \leq 1$ on $[-1,1]$ (since $\varphi([-1,1]) = [0,1] \subset [-1,1]$), and $\|q - g\|_{\infty,[-1,1]} = \|p - f_c\|_{\infty,[0,1]}$ where $g(t) = e^{a(t-1)}$ with $a = c/2$.

\medskip\paragraph{Boundedness constraint is free.}
Since $0 \leq g(t) = e^{a(t-1)} \leq 1$ on $[-1,1]$, the best unconstrained degree-$d$ approximation $q^*$ satisfies $|q^*(t)| \leq 1 + E_d(g)$.
Rescaling $\tilde{q} = q^*/(1 + E_d)$ gives a bounded polynomial with $\|\tilde{q} - g\|_\infty \leq 2E_d(g)$, establishing
\begin{equation}\label{eq:constraint_free}
  d_{\mathrm{unc}}(\eps) \leq d_{\mathrm{con}}(\eps) \leq d_{\mathrm{unc}}(\eps/2).
\end{equation}
Since $E_d(g)$ decreases super-exponentially in $d$, the gap is $O(1)$.

\medskip\paragraph{Chebyshev coefficient lower bound.}
By Lemmas~\ref{lem:cheb_lb}--\ref{lem:bessel_lb}:
\begin{equation}\label{eq:Ed_chain}
  E_d(g) \geq \frac{\pi}{4}\cdot 2e^{-a}\cdot\frac{(a/2)^{d+1}}{(d+1)!}
  = \frac{\pi e^{-a}}{2}\cdot\frac{(a/2)^{d+1}}{(d+1)!}.
\end{equation}
If $E_d(g) \leq \eps$, then $(a/2)^{d+1}/(d+1)! \leq 2\eps e^a/\pi$.

\medskip\paragraph{Stirling and Lambert $W$ inversion.}
Setting $n = d+1$ and applying $n! \leq e\sqrt{n}(n/e)^n$, the necessary condition becomes $(ea/(2n))^n \leq e\sqrt{n}\cdot 2\eps e^a/\pi$.
Taking logarithms and simplifying with $u = 2n/(ea)$, we reduce to $u\log u < 2L''/(ea) =: A$.
The solution $u_0 = A/W(A)$ via the Lambert $W$ function gives
$d^* \geq \lfloor L''/W(2L''/(ea))\rfloor - 1$.

\medskip\paragraph{Matching upper bound.}
The Chebyshev partial sum $S_d(x) = b_0 + \sum_{k=1}^d b_k T_k(x)$ satisfies $\|g - S_d\|_\infty \leq 2e^{-a}\sum_{k=d+1}^\infty I_k(a)$.
Since all coefficients $b_k > 0$, the partial sum satisfies $0 \leq S_d \leq g \leq 1$, so it automatically satisfies the boundedness constraint.
The tail sum, dominated by its first term via $I_k(a) \sim (ea/(2k))^k/\sqrt{2\pi k}$, yields $d_+ \leq L/W(2L/(ea)) + O(\log L)$.

\medskip\paragraph{Asymptotics.}
For fixed $a$ and $L \to \infty$: $L'' = L(1+o(1))$, $A = 2L''/(ea) \to \infty$, and $W(A) = \log A - \log\log A + o(1)$, giving $L''/W(2L''/(ea)) = L/\log L \cdot (1+o(1))$.
Both $d_-$ and $d_+$ share this leading term, establishing~\eqref{eq:tight_asymptotics}.
\end{proof}

\subsection{Supplementary remarks}

\begin{remark}[Boundedness is free]\label{rem:free}
For $f_c(x) = e^{c(x-1)}$ on $[0,1]$, the Chebyshev partial sums satisfy $\|S_d\|_{\infty,[-1,1]} \leq 1$ for all $d$, as coefficients $b_k = 2e^{-a}I_k(a) > 0$ ensure $S_d$ is a monotonically increasing approximation to $g \leq 1$.
So, $d_{\mathrm{con}}(\eps) = d_{\mathrm{unc}}(\eps)$.
\end{remark}

\begin{remark}[Implications for the M-QSP lower bound]\label{rem:lb_upgrade}
Theorem~\ref{thm:tight_loglog} upgrades the lower bound on the anti-Hermitian query count to
\[
  Q_I \geq \Omega\!\left(\betaI T + \frac{\log(1/\eps)}{\log\log(1/\eps)}\right),
\]
matching the upper bound exactly.
The $d_R$ component achieves $\Theta(\log(1/\eps))$ via Bessel tails, so the overall lower bound becomes
$Q \geq \Omega((\alphaR + \betaI)T + \log(1/\eps)/\log\log(1/\eps))$.
The M-QSP algorithm is optimal in the separate-oracle model.
\end{remark}

\section{Optimization Landscape of Bivariate M-QSP}
\label{sec:landscape}

\subsection{The conjecture and its refutation}

\begin{proposition}[Existence of spurious local minima]\label{prop:spurious}
For every tested bidegree $(d_R, d_I)$ with $d_R + d_I \leq 10$ and every tested schedule type (block and interleaved), there exist generic achievable target polynomials for which the M-QSP cost function
\begin{equation}\label{eq:cost}
  \cF(\bTheta) = \int_{\bbT^2}
  |P_{\bTheta}(e^{i\theta_1},e^{i\theta_2}) - P_{\mathrm{target}}|^2
  \frac{d\theta_1\,d\theta_2}{(2\pi)^2}
\end{equation}
possesses spurious local minima.
\end{proposition}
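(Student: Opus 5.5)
The statement is existential and finite ("every tested bidegree ... every tested schedule type"). The proof I would give is therefore a certified computational one, resting on an explicit witness construction. The plan is to exhibit, for each pair $(d_R,d_I)$ with $d_R+d_I\le 10$ and each of the two schedule types, a target $P_{\mathrm{target}} = P_{\bTheta^\star}$ that is achievable by construction. I draw $\bTheta^\star$ from a generic (e.g.\ uniformly random) angle distribution, so the global minimum value of $\cF$ is $0$, attained at $\bTheta^\star$. I then exhibit a second point $\bTheta^\circ$ with $\cF(\bTheta^\circ) > 0$ that is a strict local minimum. Since $\cF$ is a trigonometric polynomial in the angles, it is real-analytic. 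By Parseval on $\bbT^2$, $\cF$ equals the sum of squared moduli of coefficient differences of $P_{\bTheta}-P_{\mathrm{target}}$. This gives an exact finite expression that can be evaluated without quadrature error.

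First, I locate candidates. I run many random-initialized gradient descents (L-BFGS) on the finite coefficient form of $\cF$ and collect converged points whose cost stays bounded away from zero. Second, I remove the trivial degeneracies. The map $\bTheta\mapsto P_{\bTheta}$ has symmetries (global phase, and angle shifts that commute with the signal operators), so the Hessian has a known null space. I work on the quotient, either by fixing gauge angles or by projecting the Hessian onto the orthogonal complement of the symmetry tangent vectors. Third, I certify the candidate. I refine $\bTheta^\circ$ with Newton's method, and then use interval arithmetic to verify three things: a zero of $\nabla\cF$ lies within a small ball (Krawczyk/Kantorovich test), the reduced Hessian is positive definite on that ball, and $\cF>\delta>0$ there. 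This makes $\bTheta^\circ$ a rigorous strict local minimum modulo gauge, with positive cost, hence spurious.

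To justify the word ``generic'', I note that the set of targets admitting a nondegenerate spurious minimum is open. The certification conditions are strict inequalities that are stable under small perturbations of $\bTheta^\star$ by the implicit function theorem. Hence a positive-measure neighborhood of witnessing targets exists around each tested one.

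The main obstacle is the degeneracy in the second step. If the symmetry group is misidentified, the Hessian will show spurious zero eigenvalues and positive definiteness cannot be certified. Conversely, a genuine flat direction would turn the minimum into a saddle only at higher order. My first attempt is to compute the null space of the Jacobian of $\bTheta\mapsto$ (coefficients of $P_{\bTheta}$) at $\bTheta^\circ$ and check that its dimension equals the known gauge dimension. That confirms the remaining Hessian directions are the physical ones. If a candidate fails, I discard it and take another from the multistart pool. A smaller secondary issue is the $\bTheta^\circ$ that lie within $\calO(10^{-8})$ of zero cost: they must be rejected, since only the certified bound $\cF>\delta$ counts.
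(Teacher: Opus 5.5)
Your route differs from the paper's at the decisive step, though the candidate-generation stage is the same. The paper gives only numerical evidence. It runs multistart L-BFGS-B over $4{,}170$ trials and counts a stalled point as a spurious minimum when its floating-point Hessian has no negative eigenvalues and its cost lies in $[10^{-6},10^{-1}]$, while the best run reaches $\cF<10^{-10}$. That is the second-order \emph{necessary} condition, evaluated at an approximate critical point. It excludes neither degenerate saddles nor iterates stalled in a flat valley, and the paper's own $\kappa(J)=\infty$ entries make zero Hessian eigenvalues a real possibility.

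Your pipeline turns the same candidates into a computer-assisted proof. Its ingredients are: the target $P_{\bTheta^\star}$, so the zero global minimum holds by construction; exact Parseval evaluation of $\cF$; a gauge slice; a Krawczyk test for an exact zero of $\nabla\cF$; and interval positive-definiteness of the reduced Hessian with $\cF>\delta$ on the ball. Your implicit-function argument also gives ``generic'' a meaning the paper never supplies: a nonempty open, hence positive-probability, set of witnessing targets. That matches the paper's operational use of random targets, though it does not give density. What the survey buys instead is prevalence data (the $40.2\%$ failure rate and its dependence on bidegree and schedule). A one-witness-per-cell certificate does not give this, and the proposition does not need it.

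Two adjustments are needed. First, the statement covers only the tested configurations (Table~\ref{tab:landscape_full}, all with $d_R,d_I\geq 1$). Do not commit to every pair with $d_R+d_I\leq 10$, since trivial bidegrees such as $(0,0)$ may have no spurious minima at all.

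Second, your Jacobian null-space filter is the wrong diagnostic. At a positive-residual critical point the Hessian is $2\,\mathrm{Re}(J^{\mathsf{H}}J)$ plus a residual-weighted curvature term. A non-symmetry direction in $\ker J$ can therefore still carry positive curvature, so the relevant test is positive-definiteness on the complement of \emph{exact} symmetry directions. Conversely, suppose the rank deficiency the paper reports is structural, persisting at generic $\bTheta$ rather than being an artifact of its special target. Then those extra kernel directions are tangent to positive-dimensional fibres of $\bTheta\mapsto P_{\bTheta}$, along which $\cF$ is constant. Every candidate fails your check, and discard-and-retry never terminates in those cells. You would need to exhibit that redundancy explicitly, as vector fields identically annihilated by $J$, and quotient by it. Interval arithmetic can certify $\rank J\geq r$ via a nonsingular minor, but it cannot by itself certify the closed condition $\rank J\leq r$.
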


We provide numerical evidence for this proposition.
Across $4{,}170$ optimization trials (15 bidegree configurations, 3--6 random targets each, both schedules, 20--100 random initializations per target), $1{,}678$ (40.2\%) fail to reach the global minimum.
Of the non-converged critical points, $357$ are confirmed spurious local minima via Hessian eigenvalue analysis (all eigenvalues non-negative), with objective values $\cF(\bTheta) \in [10^{-6}, 10^{-1}]$ compared to the global minimum value $\cF(\bTheta^*) < 10^{-10}$, confirming spuriousness.
Spurious local minima are observed at every tested bidegree from $(1,1)$ through $(6,4)$ and for both schedule types.
Table~\ref{tab:landscape_full} in Appendix~\ref{app:convergence_survey} provides the complete breakdown across all 15 bidegree configurations with 95\% Wilson score confidence intervals.
An extended sweep (Appendix~\ref{app:landscape_v2}, Table~\ref{tab:landscape_v2}; $11{,}760$ trials across $80$ cells on a $(\alphaR T, \betaI T) \in \{0.25, 0.5, 1, 2\}^2$ grid for each of $5$ balanced bidegrees) uses a physically faithful full-tensor Dyson target with a $c_\infty$-relative convergence criterion that quantifies the truncation distance to the M-QSP achievable submanifold, confirming the spurious-basin onset extending to $d = 10$ and the rank-deficiency picture of Theorem~\ref{thm:basin}.
The lower aggregate convergence rate ($17.6\%$) of the extended sweep relative to the original survey ($59.8\%$) reflects the stricter $c_\infty$-relative criterion measured against a full Chebyshev--Taylor target, indicating the two convergence rates quantify related but distinct phenomena, namely landing in the truncation-residual basin (ext.) versus landing on the M-QSP image to absolute tolerance (original).

We find that block schedules yield higher convergence rates than interleaved schedules, suggesting the landscape geometry is sensitive to operator ordering.
No overparameterization threshold exists, wherein the M-QSP parameterization is always underparameterized and convergence rates remain in the 32--82\% range across all bidegrees.
Most asymmetric configurations exhibit $\kappa(J) = \infty$ (rank-deficient Jacobian) at the global minimum.

\subsection{Warm-start basin guarantee}

\begin{lemma}[Gauss--Newton structure]\label{lem:GN}
At any global minimum $\bTheta^*$ with $\cF(\bTheta^*) = 0$, the Hessian reduces to
\begin{equation}\label{eq:GN_min}
  \nabla^2\cF(\bTheta^*) = 2\,J(\bTheta^*)^\mathsf{H}\,J(\bTheta^*),
\end{equation}
where $J$ is the Jacobian of the map $\bTheta \mapsto \{c_{mn}(\bTheta)\}$ (Fourier coefficients of $P_{\bTheta}$).
\end{lemma}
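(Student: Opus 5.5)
We use Parseval on $\bbT^2$ to turn $\cF$ into a finite nonlinear least-squares objective and then differentiate it twice. Write $P_{\bTheta}(z_1,z_2)=\sum_{(m,n)} c_{mn}(\bTheta)\,z_1^m z_2^n$, a finite sum over the support of $\cP_{d_R,d_I}^+$, and expand the target in the same way, $P_{\mathrm{target}} = \sum t_{mn} z_1^m z_2^n$. A target of a different support would only add a $\bTheta$-independent constant. Orthonormality of the characters $e^{i(m\theta_1+n\theta_2)}$ under $d\theta_1 d\theta_2/(2\pi)^2$ gives
\[
  \cF(\bTheta) = \sum_{(m,n)} |r_{mn}(\bTheta)|^2 = \|r(\bTheta)\|_2^2,
  \qquad r_{mn}(\bTheta) := c_{mn}(\bTheta) - t_{mn}.
\]
The residual vector $r$ is a smooth function of the real angles $\bTheta$: each $c_{mn}$ is a trigonometric polynomial in the angles, because it is a product of $SU(2)$ rotations and signal operators. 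So $\cF$ is a smooth real function of real variables. The Jacobian $J$ of $\bTheta \mapsto \{c_{mn}\}$ coincides with that of $r$, since $t$ is constant.

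Next we compute the Hessian as a real Hessian. For real parameters $\theta_j$,
\[
  \partial_j \cF = 2\,\mathrm{Re}\sum_{mn}\overline{r_{mn}}\,\partial_j r_{mn}.
\]
Differentiating once more gives
\[
  \partial_k\partial_j \cF = 2\,\mathrm{Re}\sum_{mn}\overline{\partial_k r_{mn}}\,\partial_j r_{mn}
  + 2\,\mathrm{Re}\sum_{mn}\overline{r_{mn}}\,\partial_k\partial_j r_{mn}.
\]
In matrix form this reads
\[
  \nabla^2\cF = 2\,\mathrm{Re}(J^\mathsf{H}J) + 2\,\mathrm{Re}\sum \overline{r_{mn}}\,\nabla^2 r_{mn}.
\]
At a global minimum with $\cF(\bTheta^*)=0$, every term $|r_{mn}|^2$ vanishes. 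Hence $r(\bTheta^*)=0$ and the second-order term drops out, leaving exactly the Gauss--Newton form $2J^\mathsf{H}J$.

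The only delicate point is the real part. Because the parameters are real, the Hessian is $2\,\mathrm{Re}(J^\mathsf{H}J)$, where $J$ is complex-valued. Statement~\eqref{eq:GN_min} should therefore be read in one of two equivalent ways:
\begin{itemize}
\item $J$ is taken as the real Jacobian of the realified map $\bTheta\mapsto(\mathrm{Re}\,c_{mn},\mathrm{Im}\,c_{mn})$, in which case $J^\mathsf{H}=J^\mathsf{T}$ and $J^\mathsf{T}J$ equals $\mathrm{Re}(J_{\C}^\mathsf{H}J_{\C})$ identically;
\item or the identity holds up to taking the (real, symmetric) real part.
\end{itemize}
We will state the realified convention explicitly and verify the identity $J_{\R}^\mathsf{T}J_{\R} = \mathrm{Re}(J_{\C}^\mathsf{H}J_{\C})$ entrywise: $\mathrm{Re}(\bar a b) = \mathrm{Re}\,a\,\mathrm{Re}\,b + \mathrm{Im}\,a\,\mathrm{Im}\,b$. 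This bookkeeping is the main obstacle.

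The rest of the argument is routine. Smoothness is needed to justify exchanging differentiation with the finite sum, and it is immediate from the trigonometric-polynomial structure of the $c_{mn}$.
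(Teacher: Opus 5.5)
Your proposal is correct and follows the same route as the paper's one-line proof: the standard Gauss--Newton split of the Hessian into a $J^\mathsf{H}J$ term plus a residual-weighted curvature term, where the curvature term vanishes because $\cF(\bTheta^*)=0$ forces every residual to be zero (you make this explicit via Parseval). Your real-part bookkeeping is a genuine sharpening that the paper's proof glosses over: for real angles and complex coefficients the Hessian is literally $2\,\mathrm{Re}(J^\mathsf{H}J) = 2J_{\R}^\mathsf{T}J_{\R}$, which is how the statement must be read, and this does no harm to Theorem~\ref{thm:basin}(a) because $x^\mathsf{T}\mathrm{Re}(J^\mathsf{H}J)\,x = \|Jx\|^2$ for real $x$.
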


\begin{proof}
The standard Gauss--Newton decomposition at a zero-residual point eliminates the second-order residual term.
\end{proof}

\begin{theorem}[Warm-start basin]\label{thm:basin}
If $J(\bTheta^*)$ has full column rank, then:
(a)~$\nabla^2\cF(\bTheta^*) \succ 0$;
(b)~there exists $\rho > 0$ such that $\cF$ is strongly convex on $B(\bTheta^*, \rho)$;
(c)~L-BFGS initialized at any $\bTheta^{(0)} \in B(\bTheta^*, \rho)$ converges superlinearly.
\end{theorem}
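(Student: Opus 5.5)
The plan is to derive (a)--(c) in sequence from Lemma~\ref{lem:GN}, using only smoothness of the parameterization and standard local convergence theory.

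\textbf{Step (a).} First I would note that $\bTheta \mapsto P_{\bTheta}$ is a finite product of $2\times 2$ unitaries whose entries are trigonometric polynomials in the angles. Each Fourier coefficient $c_{mn}(\bTheta)$ is therefore real-analytic, and hence so is the cost $\cF$. By Parseval, $\cF(\bTheta) = \sum_{m,n} |c_{mn}(\bTheta) - c^{\mathrm{target}}_{mn}|^2$, a finite sum because the bidegree is bounded. This is a nonlinear least-squares problem in the real parameters, with residual vector $r(\bTheta)$. Its Hessian is $2\,\mathrm{Re}(J^\mathsf{H}J) + 2\sum_{mn}\mathrm{Re}(\bar r_{mn}\nabla^2 r_{mn})$, and the second term vanishes at $\bTheta^*$, giving Lemma~\ref{lem:GN}. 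Full column rank means $Jv \neq 0$ for every nonzero real $v$, so $v^\mathsf{T}\nabla^2\cF(\bTheta^*) v = 2\|Jv\|^2 > 0$. This proves (a), and it yields $\lambda_{\min} := \lambda_{\min}(\nabla^2\cF(\bTheta^*)) = 2\sigma_{\min}(J)^2 > 0$.

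\textbf{Step (b).} Because $\cF$ is $C^2$, and indeed analytic, the map $\bTheta \mapsto \nabla^2\cF(\bTheta)$ is continuous and in fact Lipschitz on compact sets, with constant $M$ bounded by third derivatives of trigonometric polynomials. Weyl's inequality gives $\lambda_{\min}(\nabla^2\cF(\bTheta)) \geq \lambda_{\min} - M\|\bTheta - \bTheta^*\|$. I would take $\rho = \lambda_{\min}/(2M)$, so the Hessian is bounded below by $\lambda_{\min}/2$ on the convex ball $B(\bTheta^*,\rho)$. Hence $\cF$ is $(\lambda_{\min}/2)$-strongly convex there, and $\bTheta^*$ is its unique minimizer in the ball.

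\textbf{Step (c).} This is the main obstacle, because L-BFGS is a limited-memory method and its superlinear convergence is not automatic. The clean classical results, due to Dennis--Mor\'e and Broyden--Dennis--Mor\'e, are for full BFGS. They give $Q$-superlinear convergence when the iterates start close enough to a minimizer with a positive-definite Lipschitz Hessian and the initial Hessian approximation is close enough to the true one. The line search must be Wolfe and must eventually accept unit steps; the Dennis--Mor\'e condition ensures this. My plan is to invoke those theorems directly on the neighborhood from (b), shrinking $\rho$ if necessary so that the iterates remain inside it. Strong convexity together with Wolfe descent keeps the level set $\{\cF \leq \cF(\bTheta^{(0)})\}\cap B$ invariant once $\rho$ is small relative to the condition number.

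For L-BFGS itself, Liu--Nocedal give only R-linear convergence on strongly convex functions. I would therefore either state (c) for the memory parameter $m \geq \dim \bTheta$, where the method coincides with full BFGS in the relevant sense, or accept R-linear convergence in general together with superlinear behavior under the Dennis--Mor\'e condition. I would flag this honestly in the text and treat (c) as conditional on the standard hypotheses.
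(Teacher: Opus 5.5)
Your (a) and (b) follow the paper's proof: the Gauss--Newton reduction at a zero-residual point with full column rank, then continuity of $\nabla^2\cF$. They are more careful than the paper's. Reading the real-parameter Hessian as $2\,\mathrm{Re}(J^{\mathsf H}J)$, with $v^{\mathsf T}\nabla^2\cF(\bTheta^*)v=2\|Jv\|^2$ for real $v$, is the correct version of the paper's $2J^{\mathsf H}J$. The Weyl/Lipschitz radius $\rho=\lambda_{\min}/(2M)$ makes ``by continuity'' quantitative. Those parts are fine.

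\textbf{The gap in (c).} The gap is in (c), and you share it with the paper, whose whole argument is ``standard for L-BFGS on strongly convex, smooth objectives.'' Your objection is right. The standard L-BFGS result (Liu--Nocedal, which the paper itself cites) gives only R-linear convergence, and superlinear convergence of fixed-memory L-BFGS is not an established theorem.

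Your repair does not close the gap. With memory $m\ge\dim\bTheta$, L-BFGS still discards pairs older than $m$ iterations and reseeds with $H_k^0=\gamma_k I$ at every step. It therefore agrees with full BFGS at most for the first $m$ iterations, and only with a fixed seed, which says nothing about an asymptotic rate. Invoking Dennis--Mor\'e does not help either: that condition is equivalent to superlinear convergence, not something you can verify independently here.

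What can actually be proved is one of two things. You can claim (i) R-linear convergence for L-BFGS. Or you can claim (ii) superlinear convergence for full BFGS with a Wolfe line search that tries $\alpha=1$ first, via Powell's theorem (Byrd--Nocedal--Yuan for the Broyden class). Version (ii) needs only strong convexity on a region containing the iterates and a Lipschitz Hessian at $\bTheta^*$, which is automatic since $\cF$ is a trigonometric polynomial. It does not need your hypothesis that the initial Hessian approximation be close to $\nabla^2\cF(\bTheta^*)$; that belongs to the line-search-free local theory.

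\textbf{Containment.} In either version the containment step needs an actual argument. Monotone decrease of $\cF$ does not keep the iterates in $B(\bTheta^*,\rho)$. The set $\{\cF\le\cF(\bTheta^{(0)})\}$ is in general disconnected (the paper documents many other minima), so a line-search step can land in another component. You need two things:
\begin{enumerate}
\item[(1)] A smaller starting ball $B(\bTheta^*,\rho')$ on which $\cF(\bTheta^{(0)})<\min_{\partial B(\bTheta^*,\rho)}\cF$. This exists because $\cF\ge\tfrac{\lambda_{\min}}{4}\|\bTheta-\bTheta^*\|^2$ on $B(\bTheta^*,\rho)$ and $\cF=\calO(\|\bTheta-\bTheta^*\|^2)$ near $\bTheta^*$.
\item[(2)] A bound on step lengths, so that no step leaves the component of the sublevel set that contains $\bTheta^*$. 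This could come from the uniform bounds on $H_k$ that Liu--Nocedal derive on a strongly convex region, applied inductively, or from a line search confined to the ball.
\end{enumerate}
With (1) and (2), an induction keeps all iterates and connecting segments inside the strongly convex ball, and the convergence theorem applies.
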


\begin{proof}
Part (a) follows from $J^\mathsf{H}J \succ 0$ when $J$ has full column rank.
Part (b) follows from continuity of $\nabla^2\cF$.
Part (c) is standard for L-BFGS on strongly convex, smooth objectives.
\end{proof}

\begin{remark}[Rank-deficiency caveat]\label{rem:rank_deficiency}
The full-rank hypothesis in Theorem~\ref{thm:basin} is violated for 6 of the 15 bidegree configurations tested (those with $\kappa(J) = \infty$ in Table~\ref{tab:landscape_full} of Appendix~\ref{app:convergence_survey}).
For these configurations, the standard Gauss--Newton theory does not guarantee a strongly convex basin~\cite{bjorck2024numerical, nocedal1980updating}.
Empirically, L-BFGS-B convergence remains robust (32--82\% from random initialization, and 100\% from warm starts with perturbation $\eps_{\mathrm{pert}} \leq 0.1$), suggesting that rank-deficiency creates a flat direction in the loss landscape rather than a saddle or divergence~\cite{liu1989limited, nocedal1980updating}.
A refined basin guarantee accommodating rank-deficient Jacobians (e.g., via projected Gauss--Newton or Riemannian trust-region methods) is an open direction.
\end{remark}

\subsection{Basin radius scaling}

The Jacobian condition number grows rapidly with $d$; the workstation sweep over balanced bidegrees $d \in [2, 10]$ admits both a power-law and an exponential fit, statistically indistinguishable in that window:
\begin{equation}\label{eq:kappa_scaling}
  \kappa_{\mathrm{med}}(J) \approx 1.30 \cdot d^{2.31}
\end{equation}
or approximately
\begin{equation}
  \kappa_{\mathrm{med}}(J) \approx 3.35 \cdot 1.61^d,
\end{equation}
driven primarily by $\sigma_{\min} \propto d^{-2.02}$ shrinking with $d$ while $\sigma_{\max} \propto d^{0.14}$ grows slowly.
The analytical basin radius consequently decays as
\begin{equation}\label{eq:rho_scaling}
  \rho_{\mathrm{analytical}} \sim \kappa(J)^{-2} \sim d^{-4.62},
\end{equation}
exceeding machine precision ($d \cdot \eps_{\mathrm{mach}}$) throughout the experimentally accessible $d \lesssim 30$ range.

Though concerning at first, we see the Dyson polynomial has highly structured coefficients that may yield better conditioning than random targets.
The analytical Hessian Lipschitz bound $C(d) = \calO(d^2)$ is a conservative one, and empirical warm-start data show basins $10^3$--$10^7$ times larger.
The basin is seen to be non-isotropic, with the effective radius in the operator-adapted norm $\|\cdot\|_{J^\mathsf{H}J}$ potentially much larger.

\begin{remark}[Practical-scale convergence]\label{rem:practical_scale}
The numerical landscape data (Tables~\ref{tab:landscape_full} and~\ref{tab:landscape_v2}) cover $d \leq 10$, while the practical regime is $d \sim 100$--$1000$. 
The Dyson polynomial's product structure (block-factored Chebyshev $\times$ Taylor coefficients) creates a Jacobian exhibiting condition numbers $10^2$--$10^4 \times$ smaller than random-target polynomials of the same degree (Appendix~\ref{app:kappa}).
Extending the numerical landscape survey to $d \sim 100$ is a natural direction but does not affect the algorithm's practicality, since the recommended pipeline (Remark~\ref{rem:practical}) computes angles deterministically via block peeling and uses the optimization landscape only for the optional refinement step.
\end{remark}

\subsection{Analytical bounds on critical points}

Morse theory gives a trivial lower bound of $\geq 1$ local minimum on $\bbT^n$ ($n = 2(d_R + d_I + 1)$)~\cite{milnor1963morse}.
The multi-homogeneous B\'ezout bound on the algebratized gradient system yields an upper bound of $4^n \cdot n!$ total critical points (loose)~\cite{shafarevich2016basic}.
A favorable approach for future use may be BKK mixed volume of Newton polytopes of the gradient system~\cite{bernstein1922ordre}.

\section{Efficient Angle-Finding via Block Peeling}
\label{sec:anglefinding}

\subsection{CRC-exploiting block peeling}

\begin{proposition}[Block peeling cost]\label{prop:block_peel}
Let $P_{\mathrm{Dyson}}(z_1, z_2)$ be a Dyson polynomial of bidegree $(d_R, d_I)$ with block schedule.
The recursive angle-finding algorithm with CRC-exploiting coefficient extraction computes rotation angles in total cost
\begin{equation}\label{eq:block_cost}
  C_{\mathrm{block}} = \calO(d_R \cdot d_I).
\end{equation}
For uniform segments, this simplifies to $C_{\mathrm{block}} = d_R \cdot d_I$ exactly.
\end{proposition}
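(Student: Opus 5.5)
The cost will be counted as the sum over peeling steps of the coefficient-extraction work, and the CRC will be used to show that the coefficients needed at each step can be read off a one-dimensional slice rather than recomputed from the full two-dimensional array. Concretely, with the block schedule $\mathbf{s} = (I^{d_I}, R^{d_R})$ (or the reverse), the Dyson polynomial has the product structure $P_{\mathrm{Dyson}}(z_1,z_2) = \sum_{m,n} c_{mn} z_1^m z_2^n$. Its coefficients factor in the Chebyshev--Taylor form inherited from the companion paper: the constant-ratio condition (CRC) gives $c_{mn}$ as a Chebyshev coefficient in $m$ times a Taylor weight in $n$, up to the known ratio.

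\textbf{Peeling step and its cost.} One peeling step removes the outermost signal-operator layer together with its rotation. This is the bivariate analogue of the univariate layer-stripping recursion. In the univariate case, stripping a layer requires the two extreme (leading and trailing) coefficients to determine $(\theta_k,\phi_k)$, and then an $\calO(\deg)$ update of the remaining coefficients by the inverse two-term recurrence. In the bivariate block setting, stripping an $R$-layer acts only on the $z_1$ degree. The update therefore touches each column $n$ through a two-term recurrence of length $\calO(d_R)$. Naively this costs $\calO(d_R d_I)$ per step and $\calO(d_R d_I(d_R+d_I))$ overall, which is the $\calO(d^3)$ baseline.

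\textbf{Using the CRC to avoid the cubic cost.} The CRC says the peeled polynomial retains the ratio structure, so the residual coefficient array stays rank-structured. The update of an entire $I$-column is then fixed by the update of a single reference column times the constant ratio. Hence each $R$-peel costs $\calO(d_I)$: one length-$d_R$ recurrence amortized over the $d_I$ ratio-scaled columns, with only the affected frontier entries recomputed. Symmetrically, each $I$-peel costs $\calO(d_R)$. Summing gives $d_R\cdot\calO(d_I) + d_I\cdot\calO(d_R) = \calO(d_R d_I)$.

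For uniform segments, the ratio is literally constant and no rescaling bookkeeping is needed. Each peel then touches exactly one new coefficient per cross-index, so the count is exactly $d_R d_I$. I would verify this by an explicit count over the $d_R\times d_I$ grid, in which each grid cell is touched exactly once.

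\textbf{Main obstacle.} The hard step is proving that the CRC is preserved under peeling, i.e., that after removing a layer the residual $(P,Q)$ pair still satisfies the constant-ratio condition, so the reference-column trick applies at every level. I would try induction on $d_R+d_I$, using $|P|^2+|Q|^2=1$ on $\bbT^2$ to show that the extreme coefficients determining the rotation are themselves ratio-scaled copies. The rotation is then schedule-independent across columns and commutes with the ratio scaling. If exact preservation fails, I would fall back to showing that the deviation from CRC is confined to $\calO(1)$ frontier entries per step, which still yields the $\calO(d_R d_I)$ total.
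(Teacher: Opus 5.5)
Your argument has a genuine gap. The load-bearing step, that the residual coefficient array stays rank one so that a single reference column times ``the constant ratio'' fixes every column, neither follows from the CRC nor holds.

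\textbf{What the CRC actually gives.} The CRC (Theorem~\ref{thm:crc_nonidentical}) is a statement about one scalar, $b_{\mathrm{lead}}^{(k)}/a_{\mathrm{lead}}^{(k)} = e^{-i\phi_{k-1}}\tan\theta_{k-1}$. This is the ratio of the leading coefficients of the two ancilla components at peeling step $k$. In the paper's proof its role is that each angle is read off at $\calO(1)$ cost. It says nothing about the shape of the array $c_{mn}$; the separable Chebyshev$\times$Taylor form belongs to the Dyson target, not to the CRC.

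\textbf{Why rank one fails.} Separability is not inherited by circuit polynomials. For a two-block circuit $M_I(z_2)M_R(z_1)$, the $(0,0)$ entry is $[M_I]_{00}[M_R]_{00} + [M_I]_{01}[M_R]_{10}$, which is generically a rank-two array. Rank one would force two entries of one factor to be proportional, hence of constant modulus on $\bbT$, i.e.\ monomials. With several segments the rank is generically larger still. So one reference column does not determine the others at any peeling level. The obstacle you single out, CRC preservation under peeling, is the part already supplied: the companion paper's ancilla-factorization lemma gives the CRC at every step. The property you actually rely on is the one that fails.

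\textbf{The cost count.} Even granting rank one, the accounting does not close. Updating a length-$d_R$ reference column costs $\calO(d_R)$ per $R$-peel, not $\calO(d_I)$. Your scheme therefore totals $\calO(d_R^2 + d_I^2)$, which is not $\calO(d_R d_I)$ for unbalanced bidegrees such as $d_I = \calO(1)$. Also, your own per-peel costs sum to $2d_Rd_I$, which is inconsistent with ``each grid cell touched exactly once.''

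\textbf{The missing segment structure.} The paper's proof runs segment by segment through the Dyson block schedule, which alternates $R$- and $I$-blocks with per-segment bidegrees $(d_R^{(j)}, d_I^{(j)})$. Within a segment:
\begin{itemize}
\item $R$-peels extract angles at $\calO(1)$ cost via the CRC, with $\calO(d_I^{(j)})$ updates;
\item $I$-peels use the constant Taylor-coefficient ratios of the Dyson factor, with $\calO(d_R^{(j)})$ updates;
\item inter-block boundary factors cancel identically in the leading-coefficient ratio, so the recursion passes across block boundaries without disturbing the CRC.
\end{itemize}
The segment costs are then summed using the telescoping decrease of residual degrees. You have no analogue of the third step. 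Your setup $(I^{d_I}, R^{d_R})$ is only the one-segment case, so the boundary step never arises. Also, ``uniform segments'' in the statement means equal per-segment bidegrees, not that ``the ratio is literally constant.''

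\textbf{How to repair it.} Use the CRC only for $\calO(1)$ angle extraction. Take the cheap updates from the Dyson (Chebyshev/Taylor) structure inside each segment, rather than from a global rank-one factorization. Then add the inter-block cancellation argument.
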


\begin{proof}
Within each segment of the Dyson block schedule, the three-phase coefficient-separability analysis applies:
(i)~intra-block $W_R$ peeling extracts angles at $\calO(1)$ cost with $\calO(d_I^{(j)})$ polynomial updates;
(ii)~intra-block $U_I$ peeling uses constant Taylor coefficient ratios at $\calO(d_R^{(j)})$ update cost;
(iii)~inter-block boundary factors cancel identically in the ratio.
Summing over all segments and using telescoping decrease in residual degrees gives $C_{\mathrm{block}} = d_R \cdot d_I$ for uniform segments.
Improvement over the standard $\calO((d_R + d_I) \cdot d_R \cdot d_I)$ algorithm is $\calO(d_R + d_I)$.
\end{proof}

\begin{table}[t]
\centering
\caption{Block peeling cost reduction. 
Speedup $\propto d$ confirmed.}
\label{tab:block_peel}
\small
\begin{tabular}{rrrrc}
\toprule
$d_R$ & $d_I$ & Standard & Block peel & Speedup \\
\midrule
10 & 10 & 715 & 100 & $7.2\times$ \\
50 & 50 & $84{,}575$ & $2{,}500$ & $33.8\times$ \\
100 & 100 & $671{,}650$ & $10{,}000$ & $67.2\times$ \\
50 & 20 & $23{,}385$ & $1{,}000$ & $23.4\times$ \\
\bottomrule
\end{tabular}
\end{table}

\subsection{FFT optimization convergence}

The warm-started FFT-based optimization variant (via angles perturbed by $\sigma \leq 0.1$) converges in $K = \calO(d^2)$ iterations to machine precision.
Cold-start convergence fails at $d \geq 10$ due to spurious local minima.

\subsection{Comprehensive complexity comparison}

\begin{theorem}[Angle-finding complexity hierarchy]\label{thm:angle_hierarchy}
Three strategies have the following costs for bidegree $(d_R, d_I)$ with $d = d_R + d_I$:
\begin{enumerate}
  \item Standard recursive: $\calO(d \cdot d_R \cdot d_I) = \calO(d^3)$.
  \item CRC-exploiting block peeling: $\calO(d_R \cdot d_I) = \calO(d^2)$.
  \item FFT (warm-started): $\calO(d^4 \log d)$ with empirical $K = \calO(d^2)$.
\end{enumerate}
Block peeling dominates at all tested sizes.
\end{theorem}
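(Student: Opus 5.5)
The plan is to prove the three items of Theorem~\ref{thm:angle_hierarchy} one at a time, each by counting the operations of the corresponding algorithm. The comparison claim ``block peeling dominates'' then follows from the orders of the three costs, together with the explicit counts in Table~\ref{tab:block_peel}.

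\emph{Item 1 (standard recursive).} The standard recursive peeling removes one signal-operator layer per step, so there are $d = d_R + d_I$ steps. At each step the current residual polynomial pair $(P,Q)$ has at most $(d_R^{\mathrm{res}}+1)(d_I^{\mathrm{res}}+1)$ Fourier coefficients. From these it extracts the angle pair $(\theta_k,\phi_k)$ by a ratio of leading coefficients, and it then applies the inverse rotation and divides out the peeled variable. Both operations touch each coefficient a constant number of times. Summing the residual sizes over the schedule gives at most $d\cdot(d_R+1)(d_I+1) = \calO(d\, d_R d_I)$. Using $d_R d_I \le d^2/4$ yields $\calO(d^3)$. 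To confirm the constants, I would check the count against the ``Standard'' column of Table~\ref{tab:block_peel}; those entries grow like $\sum_k$ (residual size), consistent with this bound.

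\emph{Item 2 (block peeling).} I would simply invoke Proposition~\ref{prop:block_peel}, which already gives $C_{\mathrm{block}} = \calO(d_R d_I)$. The bound $d_R d_I \le d^2/4$ then converts this to $\calO(d^2)$. The key point is that the CRC makes within-block angle extraction $\calO(1)$, so the per-step cost is an update of one slice of the coefficient array rather than the whole array. Telescoping over segments then charges each coefficient $\calO(1)$ times in total.

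\emph{Item 3 (warm-started FFT optimization).} I would bound the cost per iteration and multiply by the iteration count. One evaluation of $P_{\bTheta}$ on a $\calO(d_R)\times\calO(d_I)$ grid of $\bbT^2$ points costs $\calO(d)$ layer applications per point, and transforming to Fourier coefficients costs an FFT of $\calO(d^2\log d)$. The gradient has $n = 2(d+1)$ components. If these are computed by forward differences or by per-parameter perturbation, each iteration costs $\calO(d\cdot d^2\log d)$. Treated more crudely, each iteration costs $\calO(d^2\log d)$ per FFT pass, with $\calO(d^2)$ grid work absorbed. Multiplying by the empirical iteration count $K=\calO(d^2)$ from the preceding subsection should land at $\calO(d^4\log d)$. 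The hardest step is this per-iteration accounting: the claimed total requires an iteration cost of $\calO(d^2\log d)$. I would therefore use reverse-mode (adjoint) differentiation through the layered circuit, so that the full gradient costs a constant multiple of one forward evaluation. That evaluation costs $\calO(d^2\log d)$ once the layer-by-layer polynomial products are done in the Fourier domain. Since $K$ is only empirical, I would state item 3 as conditional on $K=\calO(d^2)$.

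Finally, I would compare the three orders, $d^2 \ll d^3 \ll d^4\log d$, and cite Table~\ref{tab:block_peel} to show the asymptotic ordering already holds at the smallest tested size $d_R=d_I=10$.
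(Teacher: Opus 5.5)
Your items 1 and 2 follow the paper's route. The paper takes the standard recursive cost $\calO((d_R+d_I)\,d_R d_I)$ as given (it is quoted in the proof of Proposition~\ref{prop:block_peel}), gets item 2 directly from that proposition, and supports ``dominates at all tested sizes'' with Table~\ref{tab:block_peel}. That table's Standard column is indeed $\sum_k d_R^{\mathrm{res}} d_I^{\mathrm{res}}$ over the peeling steps, e.g.\ $n(n+1)(4n-1)/6$ for $d_R=d_I=n$ under alternation. For item 3 the paper gives nothing beyond the empirical $K=\calO(d^2)$ times an implicit one-FFT-per-iteration cost. You are right that the per-iteration accounting is the crux, but the justification you give fails. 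Evaluating the circuit layer by layer costs $\Theta(d^3)$ for balanced bidegrees in either domain:
\begin{itemize}
\item On a grid of $\Theta(d_R d_I)$ points, each of the $d$ layers is an $\calO(1)$ update per point. This is your own first estimate, and it contradicts ``$\calO(d^2)$ grid work absorbed''.
\item In the coefficient domain, each layer is a shift-and-rotate of the whole current array. The total is $\sum_k(\text{current size})$, the same order as the Standard column, and FFTs cannot speed up multiplication by a degree-one layer.
\end{itemize}
Reverse mode keeps the gradient at a constant multiple of this forward cost. Your route therefore gives $\calO(d^3)$ per iteration and $\calO(d^5)$ overall with $K=\calO(d^2)$, not $\calO(d^4\log d)$.

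The missing idea is a balanced product tree. Arrange the $d$ layers as the leaves of a binary tree, and form each internal node's $2\times 2$ matrix of bivariate polynomials by multiplying its two children with 2D FFTs. A node covering $m$ consecutive layers, $a$ of type $R$ and $b$ of type $I$, has $(a+1)(b+1)$ coefficients, with $ab\le m^2/4$. So the $2^\ell$ nodes at depth $\ell$ have total size at most $d^2/2^{\ell+2}+2d$. These sizes decay geometrically, and the forward pass costs $\calO(d^2\log d)$ for every schedule. Then $\cF$ follows by Parseval in $\calO(d_R d_I)$. The Baur--Strassen theorem (reverse mode) gives the full $2(d+1)$-component gradient within a constant factor. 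One iteration therefore costs $\calO(d^2\log d)$, and the total is $\calO(d^4\log d)$, conditional on $K=\calO(d^2)$ as you note. Separately, Table~\ref{tab:block_peel} has no FFT column, so the finite-size comparison with item 3 needs a direct remark. For example, every FFT iteration already touches all $(d_R+1)(d_I+1)>d_R d_I$ coefficients.
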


\begin{remark}[Practical recommendation]\label{rem:practical}
For $d \sim 10^2$--$10^3$, block peeling scales to be $50$--$500\times$ faster than the original $\calO(d^3)$ algorithm.
For a circuit with $d_R = d_I = 500$ ($d = 1000$), block peeling requires $\approx 2.5 \times 10^5$ operations compared to $\approx 1.3 \times 10^8$ for the standard recursive algorithm.
Classical preprocessing cost becomes negligible relative to the quantum circuit depth and compares favorably with competing methods: Dyson LCU and LCHS require no angle-finding, but M-QSP's per-query advantage compensates for the polynomially-complex classical precomputation.
\end{remark}

\section{Fast-Forwarding within M-QSP}
\label{sec:fast_forward}

\subsection{Worst-case impossibility}

\begin{proposition}[No worst-case fast-forwarding]\label{prop:no_ff}
For any bivariate polynomial $P \in \cP_{d_R,d_I}^+$ satisfying $|P| \leq 1$ on $\bbT^2$ and $\eps$-approximating the normalized propagator on the full physical spectrum:
$d_I \geq \Omega(\betaI T)$.
\end{proposition}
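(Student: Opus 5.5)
The plan is to reduce Proposition~\ref{prop:no_ff} to a univariate polynomial-approximation lower bound by restricting $P$ to a one-dimensional slice of $\bbT^2$, and then to prove that lower bound with a Bernstein/Markov-type derivative argument.

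\textbf{Step 1: restrict to the commuting, $\HR$-free slice.} Consider the worst-case instance $\HR = 0$ and $\HI$ with spectrum filling $[0,\betaI]$. On this slice the first signal variable $z_1$ sits at a fixed point of $\bbT$, so $z_1$ is frozen. The restricted function $p(z_2) := P(z_1^{(0)}, z_2)$ is then a univariate Laurent polynomial of degree at most $d_I$ in $z_2$, and it still satisfies $|p| \le 1$ on $\bbT$. Writing $z_2 = e^{i\vartheta}$ with $\cos\vartheta = x = \lambda/\betaI$ for $\lambda \in \spec(\HI)$, the $\eps$-approximation hypothesis becomes
\[
  |q(x) - e^{-\betaI T(1-x)}| \le \eps \quad \text{for all } x \in [0,1],
\]
where $q$ is the symmetrized (even part) algebraic polynomial of degree $\le d_I$. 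The target here is the normalized propagator, with the $e^{-\betaI T}$ normalization absorbed so that the largest-eigenvalue mode has modulus $1$. The same reduction also forces $|q| \le 1$ on $[-1,1]$. This is exactly Problem~\ref{prob:loglog} with $c = \betaI T$.

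\textbf{Step 2: lower-bound the degree via Bernstein's inequality.} For $c$ large, the target $f_c(x) = e^{c(x-1)}$ has derivative $c$ at $x=1$. Markov's inequality gives $|q'(1)| \le d_I^2$, which only yields $d_I \gtrsim \sqrt{c}$. That is the APS-type $\sqrt{\betaI T}$ scaling, and it is too weak. I will therefore work in the angle variable: $r(\vartheta) = q(\cos\vartheta)$ is a trigonometric polynomial of degree $d_I$ with $|r| \le 1$, so Bernstein's inequality gives $|r'| \le d_I$ everywhere. I will compare this with the target's behaviour in $\vartheta$ at an interior point, for example $x = 1/2$, i.e.\ $\vartheta = \pi/3$. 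There $f_c$ changes by a factor $e^{-c\delta}$ across an $x$-interval of length $\delta$, which corresponds to an angle interval of length $\Theta(\delta)$. To make $r$ drop from $\approx e^{-c/2}$ to $\approx e^{-c/2}e^{-1}$, the angle derivative only needs to be about $c\,e^{-c/2}$. This is exponentially small, so the raw derivative bound gives nothing directly.

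\textbf{Step 3: the fix, which I expect to be the main obstacle.} Instead of derivatives, I will use a direct extremal comparison. Near $x=1$, the target decays from $1$ to $e^{-1}$ over an $x$-window of width $1/c$, which corresponds to $\vartheta \in [0, \Theta(c^{-1/2})]$; Markov-type bounds only see $\sqrt{c}$ there. The linear-in-$c$ bound must therefore come from exponential smallness across the bulk, not from local slope.

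To get it, I will apply Lemma~\ref{lem:cheb_lb} through the chain \eqref{eq:Ed_chain}. That chain requires $(a/2)^{n}/n! \lesssim \eps e^{a}$ with $a = c/2$. Then $n!\,e^{a} \gtrsim (a/2)^n/\eps$, and this forces only $n \gtrsim \log(1/\eps)/\log\log$, not $n \gtrsim a$.

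To obtain the $\Omega(c)$ term, I will instead use the tail-sum characterization \eqref{eq:main_ub} together with Lemma~\ref{lem:bessel_lb} applied at $k \approx a$. Since $I_k(a) \approx e^{a}/\sqrt{2\pi a}$ for $k \le \sqrt{a}$, and $e^{-a}I_k(a)$ remains non-negligible (at least $e^{-\Theta(k^2/a)}$) up to $k = \Theta(a)$, the uniform-norm error of any degree-$d_I$ approximant is bounded below by $(\pi/4)\cdot 2e^{-a}I_{d_I+1}(a)$. That quantity exceeds $\eps$ unless $d_I \ge \Omega(a) = \Omega(\betaI T)$, provided $\eps$ is below a fixed constant times $e^{-\Theta(1)}$.

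Verifying this Bessel estimate in the transition regime $k \sim a$ is the technical crux. I would try the uniform (Debye) asymptotics of $I_k(k\,\mathrm{sech}\,\alpha)$ first, and fall back to the Chernoff-style bound $e^{-a}I_k(a) \ge \Pr[\text{Skellam}(a/2,a/2) = k]$ if the Debye route is awkward. Combining this with the constraint-freeness step of Theorem~\ref{thm:tight_loglog} completes the proof.
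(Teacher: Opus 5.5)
There is a genuine gap, and it breaks the route itself, not just a technical step. The crux estimate of Step~3 is false. $e^{-a}I_k(a)$ is exactly the mass function of a $\mathrm{Skellam}(a/2,a/2)$ variable (your own fallback), which has variance $a$. So $e^{-a}I_k(a)\approx e^{-k^2/(2a)}/\sqrt{2\pi a}$ for $k\ll a$, and $e^{-a}I_k(a)=e^{-\Theta(a)}$ for $k=\Theta(a)$. Your own bound $e^{-\Theta(k^2/a)}$ is already $e^{-\Theta(a)}$ at $k=\Theta(a)$, which is negligible against any fixed $\eps$. Consequently $(\pi/4)\cdot 2e^{-a}I_{d_I+1}(a)>\eps$ holds only for $d_I\lesssim\sqrt{2a\log(1/\eps)}$. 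Lemma~\ref{lem:cheb_lb} therefore yields at best $d_I=\Omega(\sqrt{\betaI T\log(1/\eps)})$, the same square-root scaling as the Markov bound you discarded; a linear bound would need $\log(1/\eps)=\Omega(\betaI T)$.

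This is not slack in the technique. In your reduced problem, the Chebyshev partial sum $S_d$ of $e^{c(x-1)}$ on $[-1,1]$ has coefficients $b_0=e^{-c}I_0(c)$ and $b_k=2e^{-c}I_k(c)>0$, summing to $1$. Hence $|S_d|\le\sum_{k\le d}b_k\le 1$ on $[-1,1]$. Meanwhile $\|S_d-e^{c(x-1)}\|_{\infty,[-1,1]}\le\sum_{k>d}b_k=2\Pr[K>d]$ with $K\sim\mathrm{Skellam}(c/2,c/2)$, and a Chernoff bound makes this at most $\eps$ once $d=\calO(\sqrt{c\log(1/\eps)}+\log(1/\eps))$. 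So Problem~\ref{prob:loglog} has $d^*(c,\eps)=o(c)$ for fixed $\eps$; this is the Sachdeva--Vishnoi square-root phenomenon behind the APS $\sqrt{\betaI T}$ scaling. No argument confined to your slice can give $\Omega(\betaI T)$. The linear bound is lost in Step~1, where you symmetrize the one-sided $p(z_2)$ into an even algebraic polynomial in $\cos\vartheta$.

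The paper does not pass through Problem~\ref{prob:loglog} at all. Its proof simply imports the companion paper's $\Omega(\betaI T)$ bound, obtained by evaluating the M-QSP polynomial at a single $\HI$ eigenvalue. That bound rests on the structure of the M-QSP model. $P\in\cP^+_{d_R,d_I}$ is one-sided in $z_2$, and $d_I$ plays the role of a Taylor order. The relevant truncation error is then a Poisson tail whose mass sits at order $k\approx\betaI T$; compare the proof of Theorem~\ref{thm:state_dep}, where $M=\beta_{\mathrm{eff}}T(1+o(1))$ in the growth regime. The paper also states explicitly that an $\calO(\sqrt{\betaI T})$ filter cannot be realized as a bounded polynomial in $z_2$ alone. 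A self-contained proof would have to use this one-sided structure directly. For instance, you could compare the Taylor coefficients of $P$ in $z_2$, extracted by Cauchy integrals on $\bbT$, with those of the Taylor/Poisson target. Reducing to a Chebyshev problem does not work, because there the claimed linear bound does not hold.
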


\begin{proof}
By ~\cite{courtney2026paper1}'s lower bound of $\Omega(\betaI T)$ from polynomial evaluation at a single $\HI$
eigenvalue, any bivariate polynomial $P$ with $|P| \leq 1$ on
$\bbT^2$ that $\eps$-approximates the normalized propagator on the
physical spectrum must have $d_I \geq \Omega(\betaI T)$.
\end{proof}

\subsection{State-dependent degree reduction}

\begin{definition}[Effective anti-Hermitian norm]\label{def:beta_eff}
Given $\ket{\psi_0}$ and $\HI \succeq 0$, define $\beta_{\mathrm{eff}} := \max\{\eta \in \spec(\HI) : \braket{\psi_0|\Pi_\eta|\psi_0} > 0\}$.
\end{definition}

\begin{theorem}[State-dependent M-QSP degree]\label{thm:state_dep}
If $\beta_{\mathrm{eff}} \leq \betaI$, then the M-QSP circuit with Taylor order
\begin{equation}\label{eq:M_reduced}
  M = \calO\!\left(\beta_{\mathrm{eff}} T
  + \frac{\log(1/\eps)}{\log\log(1/\eps)}\right)
\end{equation}
achieves $\eps$-approximation of $e^{-i\Heff T}\ket{\psi_0}/\|e^{-i\Heff T}\ket{\psi_0}\|$ with $d_I = M$ queries.
\end{theorem}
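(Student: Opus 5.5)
The plan is to reduce the problem to the scalar approximation problem on the restricted spectrum that $\ket{\psi_0}$ actually sees, and then reuse the degree analysis of Theorem~\ref{thm:tight_loglog} with $\betaI$ replaced by $\beta_{\mathrm{eff}}$. Let $\Pi := \sum_{\eta \le \beta_{\mathrm{eff}}} \Pi_\eta$ be the spectral projector of $\HI$ onto eigenvalues at most $\beta_{\mathrm{eff}}$. By Definition~\ref{def:beta_eff}, $\Pi\ket{\psi_0} = \ket{\psi_0}$.

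The first obstacle is that $\HR$ need not commute with $\HI$, so $e^{-i\Heff T}$ does not in general preserve $\mathrm{range}(\Pi)$. I would therefore work with the Dyson/Trotter-block structure of the M-QSP circuit rather than a joint spectral decomposition. Writing $e^{-i\Heff T}$ as a product of $r$ segments of length $\tau = T/r$, each factor $e^{-\HI\tau}$ is approximated by the degree-$m$ Taylor/Chebyshev polynomial in $\HI/\betaI$ of the companion construction.

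The key step is to replace the normalization $\betaI$ in $U_I$ by the shifted normalization $e^{-\HI\tau} = e^{-\beta_{\mathrm{eff}}\tau}\, e^{-(\HI - \beta_{\mathrm{eff}})\tau}$. On the relevant subspace the shifted exponent is nonnegative-bounded: $(\HI - \beta_{\mathrm{eff}})\Pi \preceq 0$. On the complement, the shifted factor is not contractive, and this is where I expect the real difficulty.

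The route I would try first is to argue commutation. In the regime where the theorem is meaningful, one uses that the polynomial only needs to be accurate on the spectral support reached by the state. So I would assume (or derive from the Dyson expansion) that leakage into $\mathrm{range}(1-\Pi)$ is generated only through $\HR$-queries. I would then bound that contribution by the Bessel tail of the $W_R$ polynomial, which costs nothing extra in $d_I$.

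Concretely, the scalar problem becomes: approximate $x \mapsto e^{c(x-1)}$ on the rescaled interval $[0, \beta_{\mathrm{eff}}/\betaI]$ with effective parameter $c_{\mathrm{eff}} = 2\beta_{\mathrm{eff}} T$, while keeping $|p| \le 1$ on all of $[-1,1]$. Remark~\ref{rem:free} guarantees this boundedness for the Chebyshev partial sums, since all coefficients are positive.

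The degree count then follows from the upper-bound part of Theorem~\ref{thm:tight_loglog} with $a = \beta_{\mathrm{eff}} T$. This gives $M \le a\,e\cdot O(1) + L/W(2L/(ea)) + O(\log L)$, which equals $O(\beta_{\mathrm{eff}} T + \log(1/\eps)/\log\log(1/\eps))$. The linear term comes from the regime $L \lesssim a$, where the Bessel tail $I_k(a)$ becomes negligible once $k \gtrsim e a/2$.

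Finally, I would convert the unnormalized error into an error on the normalized state. The absolute error is $\eps' := \eps\,\|e^{-i\Heff T}\ket{\psi_0}\|$. Because $\|e^{-i\Heff T}\ket{\psi_0}\| \ge e^{-\beta_{\mathrm{eff}}T}$ on the invariant subspace, requiring $\eps'$ only adds $\beta_{\mathrm{eff}}T$ to $L$. This changes the bound by $O(\beta_{\mathrm{eff}}T)$, which is absorbed into the first term. The identification $d_I = M$ is then immediate from the block schedule.

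I expect the main obstacle to be justifying that the error on $\mathrm{range}(1-\Pi)$ does not contaminate the output when $[\HR,\HI]\neq 0$. Failing a clean argument, I would state the result under the hypothesis that $\mathrm{range}(\Pi)$ is invariant under $\HR$, which holds exactly in the commuting case.
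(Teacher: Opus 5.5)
\textbf{There is a genuine gap in your central step:} shifting the normalization from $e^{\betaI T}$ to $e^{\beta_{\mathrm{eff}}T}$ and then appealing to Remark~\ref{rem:free} for boundedness. The oracle $U_I$ exposes the polynomial on its entire spectrum. So $|p|\le 1$ must hold for all $x=\eta/\betaI\in[-1,1]$ (indeed on all of $\bbT^2$), wherever $\ket{\psi_0}$ is supported. This bites already in the commuting case, not only through non-commutative leakage.

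Remark~\ref{rem:free} gives $|S_d|\le 1$ only on the interval where the Chebyshev expansion is taken. Once you rescale $[0,x_{\mathrm{eff}}]$, with $x_{\mathrm{eff}}:=\beta_{\mathrm{eff}}/\betaI$, onto $[-1,1]$, physical $x$ near $1$ maps to a rescaled argument near $2/x_{\mathrm{eff}}$. There $T_k$ grows roughly like $(4\betaI/\beta_{\mathrm{eff}})^k$.

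No other polynomial repairs this. At normalization $e^{\beta_{\mathrm{eff}}T}$ the target $e^{\betaI T(x-x_{\mathrm{eff}})}$ rises by $1-e^{-1}$ across $[x_{\mathrm{eff}}-1/(\betaI T),x_{\mathrm{eff}}]$. Bernstein's inequality $|p'(x)|\le d/\sqrt{1-x^2}$ limits the rise of $p$ there to $d/(\betaI T\sqrt{1-x_{\mathrm{eff}}^2})$. Hence any $p$ that is $\eps$-accurate on $[0,x_{\mathrm{eff}}]$, as your scalar reduction demands, needs $d_I\ge(1-e^{-1}-2\eps)\sqrt{1-x_{\mathrm{eff}}^2}\,\betaI T$ whenever $\beta_{\mathrm{eff}}T\ge 1$. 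Lowering the normalization below $e^{\betaI T}$ therefore costs $\Omega(\betaI T)$ queries; this is a state-dependent shadow of Corollary~\ref{cor:fci}.

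Separately, bounding leakage out of $\mathrm{range}(\Pi)$ by a Bessel tail is unsound. That leakage is a property of the exact propagator, not a truncation error. Your fallback hypothesis that $\mathrm{range}(\Pi)$ is $\HR$-invariant is acceptable; the paper's eigenspace-by-eigenspace argument implicitly relies on the same structure.

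\textbf{The missing idea} is to keep the normalization at $e^{\betaI T}$ and control the error \emph{relatively, eigenvalue by eigenvalue}. The success probability is then unchanged and only the degree improves, consistent with Remark~\ref{rem:linear}. The paper's proof keeps the order-$M$ Taylor truncation, i.e.\ $p_M(x)=e^{-\betaI T}\sum_{k\le M}(\betaI T x)^k/k!$. This satisfies $|p_M(x)|\le e^{\betaI T(|x|-1)}\le 1$ on $[-1,1]$ for every $M$. Its relative error at $x=\eta/\betaI$ is exactly the Poisson tail $\Pr[\mathrm{Poi}(\eta T)>M]$.

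That tail is increasing in $\eta$, so on the support of $\ket{\psi_0}$ it is bounded by its value at $\beta_{\mathrm{eff}}T$. The error operator is diagonal in the $\HI$ eigenbasis, so this bound transfers directly to an $\eps$-relative error on the output vector, and hence on the normalized state. The two regimes $\beta_{\mathrm{eff}}T\gg\log(1/\eps)$ (normal approximation) and $\beta_{\mathrm{eff}}T\ll\log(1/\eps)$ (the Stirling analysis of Sec.~\ref{sec:optimality}) then give Eq.~\eqref{eq:M_reduced}.

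Your absolute-error conversion, by contrast, works only at the infeasible normalization. At $e^{\betaI T}$ the block-encoded output has norm at most $e^{(\beta_{\mathrm{eff}}-\betaI)T}$. Feeding an absolute budget into Theorem~\ref{thm:tight_loglog} would therefore put $(\betaI-\beta_{\mathrm{eff}})T$ into $L$.

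Also check the sign: with $\HI\succeq 0$ the factor is $e^{+\HI\tau}$, so in the commuting case $\|e^{-i\Heff T}\ket{\psi_0}\|\ge 1$, not $\ge e^{-\beta_{\mathrm{eff}}T}$.
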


\begin{proof}
On the eigenspace of $\HI$ with eigenvalue $\eta \leq \beta_{\mathrm{eff}}$, normalized truncation error is the Poisson tail $Q(M+1, \eta T) \leq Q(M+1, \beta_{\mathrm{eff}} T)$.
In the growth-rate regime ($\beta_{\mathrm{eff}} T \gg \log(1/\eps)$), normal approximation gives $M = \beta_{\mathrm{eff}} T(1+o(1))$.
In the complementary regime ($\beta_{\mathrm{eff}} T \ll \log(1/\eps)$), Stirling analysis of Sec.~\ref{sec:optimality} gives $M = \calO(\log(1/\eps)/\log\log(1/\eps))$.
Combining yields~\eqref{eq:M_reduced}.
Boundedness $|P_M| \leq 1$ follows from M-QSP circuit unitarity.
\end{proof}

\begin{remark}[Linear, not quadratic]\label{rem:linear}
In the growth-rate regime, $d_I(\beta_{\mathrm{eff}})/d_I(\betaI) \approx \beta_{\mathrm{eff}}/\betaI$. 
Improvement is linear in the effective spectral support, not quadratic as in the APS framework.
\end{remark}

\subsection{Why APS achieves \texorpdfstring{$\sqrt{\betaI}$}{sqrt beta I} but M-QSP cannot}

\begin{proposition}[Oracle-model obstruction]\label{prop:oracle}
In the separate-oracle model, achieving $d_I = \calO(\sqrt{\betaI T})$ requires $\beta_{\mathrm{eff}} = \calO(\sqrt{\betaI/T})$.
In the physically relevant regime $\beta_{\mathrm{eff}} T \gg 1$, this is impossible.
\end{proposition}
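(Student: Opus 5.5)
The argument combines the state-dependent lower bound with the upper bound of Theorem~\ref{thm:state_dep}. The first step is to show that, in the separate-oracle model, the anti-Hermitian degree is bounded below by
\[
  d_I \geq \Omega\bigl(\beta_{\mathrm{eff}} T + \log(1/\eps)/\log\log(1/\eps)\bigr).
\]
For this I would repeat the single-eigenvalue argument behind Proposition~\ref{prop:no_ff}, but evaluate at $\eta = \beta_{\mathrm{eff}}$ instead of $\betaI$. By Definition~\ref{def:beta_eff}, $\Pi_{\beta_{\mathrm{eff}}}\ket{\psi_0} \neq 0$, so an $\eps$-accurate output state forces the polynomial $P(\cdot, z_2)$ to approximate the normalized damping factor $e^{-(\eta - \betaI)T}$ restricted to that eigencomponent. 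Concretely, I would fix an eigenvalue $\lambda$ of $\HR$ on the support and take the univariate slice in $z_2$. That slice is a polynomial of degree $\le d_I$, bounded by $1$ on the unit circle, which must $\eps$-approximate an exponential $e^{c(x-1)}$ with rate $c \asymp \beta_{\mathrm{eff}} T$ on the relevant arc. Theorem~\ref{thm:tight_loglog} (Problem~\ref{prob:loglog}) together with the $\Omega(cT)$ growth-rate bound from~\cite{courtney2026paper1} then gives the displayed estimate, and Theorem~\ref{thm:state_dep} supplies the matching upper bound. So $d_I = \Theta(\beta_{\mathrm{eff}} T + \log(1/\eps)/\log\log(1/\eps))$ in this model.

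Next, impose $d_I = \calO(\sqrt{\betaI T})$. Since all terms in the lower bound are nonnegative, this gives $\beta_{\mathrm{eff}} T \le C\sqrt{\betaI T}$, that is, $\beta_{\mathrm{eff}} = \calO(\sqrt{\betaI/T})$, which is the first claim. For the second claim, assume $\beta_{\mathrm{eff}} T \gg 1$. The condition then becomes $1 \ll \beta_{\mathrm{eff}} T \le C\sqrt{\betaI T}$, so $\beta_{\mathrm{eff}}/\betaI \le C/\sqrt{\betaI T} \to 0$. Thus the initial state's support must sit in a vanishing fraction of the anti-Hermitian spectrum, and the speedup comes from restricting $\beta_{\mathrm{eff}}$, not from fast-forwarding. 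For generic states, or worst case as in Proposition~\ref{prop:no_ff}, one has $\beta_{\mathrm{eff}} = \betaI$, and then $\betaI T = \calO(\sqrt{\betaI T})$ contradicts $\betaI T \gg 1$. The quadratic gap is structural: by Remark~\ref{rem:linear}, degree scales linearly in $\beta_{\mathrm{eff}}$, whereas APS obtains $\sqrt{\cdot}$ through amplitude-amplification-type oracle access that is unavailable in the separate-oracle bivariate model.

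I expect the main obstacle to be the first step. The normalized target involves the global norm $\|e^{-i\Heff T}\ket{\psi_0}\|$, and when $\HR$ and $\HI$ do not commute, the eigencomponents of $\HI$ are mixed by the evolution. To handle this, I would first work in the commuting case, where the reduction is a clean eigenvalue slice, and then argue that the worst case over $\HR$ includes commuting instances, which is enough for a lower bound. I would also take care that the minimum of the ratio normalization is nonzero on the support, so that the $\eps$-error transfers to the slice with only constant loss.
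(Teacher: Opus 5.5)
Your second half is exactly the paper's proof. The paper only invokes ``the state-dependent lower bound'' $d_I \geq \Omega(\beta_{\mathrm{eff}}T)$ and substitutes $d_I \leq C\sqrt{\betaI T}$. Your conclusion $\beta_{\mathrm{eff}}/\betaI \leq C/\sqrt{\betaI T} \to 0$, with an outright contradiction only when $\beta_{\mathrm{eff}} = \Theta(\betaI)$, is the correct reading of it. The paper's $C\sqrt{\betaI/T}$ itself grows with $\betaI$, and $1 \ll \beta_{\mathrm{eff}}T \leq C\sqrt{\betaI T}$ is otherwise consistent.

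The gap is in the step you add to justify the lower bound, and it is not the non-commutativity issue you anticipate.

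\emph{A single eigencomponent gives no degree bound.} The component at $\eta = \beta_{\mathrm{eff}}$ imposes one pointwise condition on the slice; for a joint eigenvector $\ket{\psi_0}$, degree $0$ suffices. So the bound can only be a worst case over states and $\HI$ whose spectral support fills $[0,\beta_{\mathrm{eff}}]$.

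\emph{Constant-loss transfer loses the linear term.} Transferring the error to the slice ``with only constant loss'' leaves the absolute-error Problem~\ref{prob:loglog}, which has no linear-in-$c$ lower bound. Expand $e^{c(x-1)}$ directly on $[-1,1]$ (Lemma~\ref{lem:cheb_exp} with $a=c$). The coefficients $b_0 = e^{-c}I_0(c)$ and $b_k = 2e^{-c}I_k(c)$ are positive and sum to $1$. So every truncation is bounded by $1$ on $[-1,1]$, and, as $\sum_k b_k z^k$, on $\bbT$. Meanwhile $e^{-c}I_k(c)$ decays like $e^{-k^2/(2c)}$ for $k \lesssim c$, so degree $\calO(\sqrt{c\log(1/\eps)} + \log(1/\eps))$ suffices. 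Accordingly:
\begin{itemize}
\item Eq.~\eqref{eq:main_lb_quant} is vacuous once $c > 2\log(1/\eps)$, since then $L - a < 0$.
\item Theorem~\ref{thm:tight_loglog} contributes only the $\log/\log\log$ term.
\item The linear term you import from the companion ``at rate $\beta_{\mathrm{eff}}T$'' is the very statement to be proved.
\end{itemize}
In your commuting instances such a polynomial can sit on the $z_2$ register. So an absolute-error reduction cannot exclude $\sqrt{\betaI T}$-type degrees at all.

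\emph{What does work.} The linear bound comes from the relative error built into the normalized target. The target spans a dynamic range $e^{\beta_{\mathrm{eff}}T}$ across the support, so passing to absolute error costs an exponential factor, not a constant.
\begin{itemize}
\item Take a commuting instance and put $\ket{\psi_0}$ on two $\HI$-eigenvalues $\eta_1, \eta_2 \leq \beta_{\mathrm{eff}}$ (same $\HR$-eigenvalue) with equal target weights. Let $x_i = \eta_i/\betaI$, and write the slice as a polynomial $p$ of degree $d = \calO(d_I)$ in $x = \eta/\betaI$.
\item $\eps$-closeness of the normalized outputs forces $p(x_1)/p(x_2) = e^{\betaI T(x_1 - x_2)}(1+\calO(\eps))$.
\item Hence, with $b = \beta_{\mathrm{eff}}/\betaI$, $p(x) = \lambda^{-1}e^{\betaI T x}(1+\calO(\eps))$ on $[0,b]$ for a common $\lambda$. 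Therefore $|p(b)|/\max_{[0,b/2]}|p| \geq e^{\beta_{\mathrm{eff}}T/2}(1-\calO(\eps))$.
\item Chebyshev's extremal growth bound caps this ratio by $T_d(3) \leq (3+2\sqrt{2})^d$, since the affine map taking $[0,b/2]$ onto $[-1,1]$ sends $b$ to $3$.
\item This gives $d \geq \beta_{\mathrm{eff}}T/(2\log(3+2\sqrt{2})) - \calO(\eps)$.
\end{itemize}
If you prefer to read the slice on an arc of $\bbT$, the same Bernstein--Walsh argument works with the arc's Green function in place of $\log(3+2\sqrt{2})$. This supplies the $\Omega(\beta_{\mathrm{eff}}T)$ input your algebra needs, without Theorem~\ref{thm:tight_loglog} or the companion's bound.
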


\begin{proof}
By the state-dependent lower bound, $d_I \geq \Omega(\beta_{\mathrm{eff}} T)$.
Setting $d_I \leq C\sqrt{\betaI T}$ forces $\beta_{\mathrm{eff}} \leq C\sqrt{\betaI/T} \to 0$ as $\betaI \to \infty$.
\end{proof}

The APS construction~\cite{hu2026amplitude} achieves $\sqrt{\betaI}$ through a single-oracle model structure, achieving a joint block encoding of $\Heff$ allowing the polynomial to act on the joint eigenvalue structure. 
A spectral filter of degree $\calO(\sqrt{\betaI T})$ is applied to joint eigenvalues, which cannot be replicated as a bounded polynomial in $z_2$ alone. 
Multi-register postselection allows separate circuit stages for filtering and propagation.

The two approaches occupy complementary regimes of weak vs.\ strong dissipation and separate-oracle vs.\ single-oracle access. 
Both methods share the intrinsic postselection barrier $e^{-2\omega T}$, differing only in how they pay the additional walk-operator overhead $e^{-2(\betaI - \omega)T}$.

\section{Constant-Factor Optimization and Method Comparison}
\label{sec:constants}

\subsection{Error budget structure}

The total simulation error decomposes into two independent components:
\begin{equation}\label{eq:error_budget}
  \eps_{\mathrm{total}} = \eps_{\mathrm{JA}} + \eps_{\mathrm{Taylor}},
\end{equation}
where $\eps_{\mathrm{JA}}$ is the Jacobi--Anger truncation error (controlling $d_R$) and $\eps_{\mathrm{Taylor}}$ is the Taylor truncation error (controlling $d_I$).
Given a total budget $\eps$, we allocate $\eps_{\mathrm{JA}} = \eta\eps$ and $\eps_{\mathrm{Taylor}} = (1-\eta)\eps$ for $\eta \in (0,1)$.

\subsection{Optimal allocation}

\begin{proposition}[Optimal error allocation]\label{prop:allocation}
Let $d_R(\alphaR T, \eps_R) = c_R \alphaR T + \log(1/\eps_R) + \calO(1)$ and $d_I(\betaI T, \eps_I) = c_I \betaI T + \log(1/\eps_I)/\log\log(1/\eps_I) + \calO(1)$ be the asymptotic degree formulas established for the M-QSP construction in the companion paper~\cite{courtney2026paper1}, with constants $c_R, c_I = \calO(1)$. 
Total query count $Q(\eta) = d_R(\alphaR T, \eta\eps) + d_I(\betaI T, (1-\eta)\eps)$ is minimized at
\begin{equation}\label{eq:eta_star}
  \eta^* = \frac{\log\log(1/\eps)}{1 + \log\log(1/\eps)} + \calO\!\left(\frac{1}{\log(1/\eps)}\right),
\end{equation}
converging to $1$ as $\eps \to 0$ slowly: for $\eps \in [10^{-15}, 10^{-3}]$ (natural log), $\eta^* \in [0.66, 0.78]$. The Jacobi--Anger degree has higher marginal sensitivity to its error budget ($\partial d_R/\partial \eps_R = -1/\eps_R$) than the Taylor degree ($\partial d_I/\partial \eps_I \approx -1/[\eps_I \log\log(1/\eps_I)]$), so the optimum allocates the larger budget share to the Jacobi--Anger side, leaving the Taylor side with intrinsic $\log\log$ improvement.
\end{proposition}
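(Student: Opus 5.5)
The plan is to treat $Q(\eta)$ as an explicit one-variable function on $(0,1)$, find its unique stationary point, and expand that point asymptotically in $\ell := \log\log(1/\eps)$. Write $L := \log(1/\eps)$ and $g(x) := x/\log x$. Dropping the $\eta$-independent pieces $c_R\alphaR T + c_I\betaI T$, and assuming the $\calO(1)$ remainders in the degree formulas are independent of $\eta$ (or at least have $\eta$-derivative $o(1/\ell)$), we have
\[
  Q(\eta) = \mathrm{const} + \log(1/\eta) + g\bigl(L + \log(1/(1-\eta))\bigr).
\]
On the interval $(0,1)$ the first term tends to $+\infty$ as $\eta\to 0^+$, and the second tends to $+\infty$ as $\eta\to 1^-$. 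A minimizer therefore exists in the interior and satisfies $Q'(\eta)=0$.

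\textbf{Stationarity and uniqueness.} Differentiating and setting $L_I := L + \log(1/(1-\eta))$ gives
\[
  Q'(\eta) = -\frac{1}{\eta} + \frac{g'(L_I)}{1-\eta},
  \qquad
  g'(x) = \frac{\log x - 1}{(\log x)^2}.
\]
Setting $Q'=0$ is equivalent to $(1-\eta)/\eta = g'(L_I)$, that is, $\eta = 1/(1+g'(L_I))$.

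For uniqueness I will check that $\eta(1-\eta)Q'(\eta) = -(1-\eta) + \eta\, g'(L_I)$ is strictly increasing. This holds because $g'$ is positive for $L_I > e$, and the change of $g'(L_I)$ with $\eta$ is lower order, of size $\calO(1/((1-\eta)L_I\log^2 L_I))$. Hence the stationary point is the unique global minimizer. These are exactly the marginal-sensitivity expressions quoted in the proposition: $\partial d_R/\partial\eps_R = -1/\eps_R$ and $\partial d_I/\partial\eps_I = -g'(L_I)/\eps_I \approx -1/(\eps_I\,\ell)$.

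\textbf{Asymptotic expansion.} The fixed point is first located crudely. Since $g'(L_I) \le 1$, we get $\eta^* \ge 1/2$. Also $\log(1/(1-\eta^*)) = \calO(\log \ell)$, because $1-\eta^* \approx 1/\ell$. Consequently
\[
  \log L_I = \ell + \log\bigl(1 + \calO(\log\ell / L)\bigr) = \ell + \calO(\log\ell/L).
\]
Substituting into $g'$ gives $g'(L_I) = 1/\ell - 1/\ell^2 + \calO(\log\ell/L)$. Then $\eta^* = 1/(1+g')$ expands to
\[
  \eta^* = \frac{\ell}{1+\ell} + \calO(\ell^{-2}) + \calO(\log\ell/L).
\]
Numerically, for natural logarithms with $\eps \in [10^{-15},10^{-3}]$ we have $\ell \in [1.93, 3.55]$. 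Evaluating $\eta = 1/(1+g')$ directly should reproduce the stated range of about $[0.66,0.78]$, and it also shows $\eta^*\to 1$ as $\eps\to 0$.

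\textbf{Main obstacle.} The delicate step is the size of the error term. Taken literally, $g'$ carries a $-1/\ell^2$ correction, which shifts $\eta^*$ away from $\ell/(1+\ell)$ by about $1/\ell^2 = 1/(\log\log(1/\eps))^2$. That is much larger than the $\calO(1/\log(1/\eps))$ written in the statement. To recover the statement as written, I would adopt the proposition's own linearization $\partial d_I/\partial\eps_I \approx -1/[\eps_I\log\log(1/\eps_I)]$, i.e.\ replace $g'(L_I)$ by $1/\log L_I$, in which case only the $\calO(\log\ell/L)$ perturbation of $\log L_I$ remains. Otherwise I would record the correct remainder as $\calO(1/\log\log(1/\eps))^{2}$. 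The interpretive claim that $d_R$ is more sensitive, so the larger share $\eta^* > 1/2$ goes to Jacobi--Anger, then follows directly from $g'(L_I) < 1$.
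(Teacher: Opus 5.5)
Your route is the paper's. Your balance condition $(1-\eta)/\eta = g'(L_I)$ is exactly the paper's $1/\eta = (\log u - 1)/((1-\eta)(\log u)^2)$. The paper then replaces the right-hand side by $1/\log L$ ``at leading order'' and solves. Your existence argument is correct. So is your observation that the $-1/(\log L_I)^2$ part of $g'$ moves $\eta^*$ by $\Theta(\ell^{-2})$ rather than $\calO(1/\log(1/\eps))$; this is sharper than the paper, which drops that term without comment.

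The step that fails is the numerical one: evaluating the exact fixed point $\eta = 1/(1+g'(L_I))$ cannot reproduce $[0.66,0.78]$. Write $t=\log x$. Then $g'(x) = (t-1)/t^2$ has maximum $1/4$ at $t=2$, so every stationary point satisfies $\eta^*\geq 4/5$. Iterating the fixed point gives $\eta^*\approx 0.801$ at $\eps=10^{-3}$ and $\eta^*\approx 0.833$ at $\eps=10^{-15}$. The quoted interval is exactly $\ell/(1+\ell)$ evaluated at $\ell\approx 1.93$ and $\ell\approx 3.54$, i.e.\ the leading-order formula, not the minimizer. Your own correction, of size about $(1+\ell)^{-2}\approx 0.12$ at $\eps=10^{-3}$, should have flagged this.

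So your two branches are not interchangeable. For the model taken literally, both the $\calO(1/\log(1/\eps))$ remainder and the numerical range in the statement are false. What you can actually prove is $\eta^* = \ell/(1+\ell) + \Theta(\ell^{-2})$, with $\eta^*\in[0.80,0.83]$ on that window; this still tends to $1$ and still exceeds $1/2$.

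The statement as written is really about the linearized balance $(1-\eta)/\eta = 1/\log L_I$, which is what the paper's proof solves. In that branch, the $\calO(\log\ell/L)$ shift of $\log L_I$ enters $\eta$ multiplied by $d\eta/d\ell = (1+\ell)^{-2}$. The remainder is therefore $\calO(\log\ell/(\ell^2 L)) = \calO(1/L)$, so the stated error term is recovered there. The values come out at roughly $[0.68,0.78]$, consistent with the claimed range. You need to say explicitly which of these two objects $\eta^*$ denotes.

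A smaller point concerns uniqueness. The quantity $\eta(1-\eta)Q'(\eta)$ is not strictly increasing on all of $(0,1)$, since $\eta\, g''(L_I)/(1-\eta)\to-\infty$ as $\eta\to 1^-$. Uniqueness follows more cleanly from strict convexity: $Q''(\eta) = \eta^{-2} + (g'+g'')(L_I)/(1-\eta)^2 > 0$, because $g'(x)+g''(x) = (t-1)/t^2 + (2-t)/(t^3 x) > 0$ for $t\geq 1$.
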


\begin{proof}
Differentiating $Q(\eta)$ with respect to $\eta$, setting $\partial Q/\partial \eta = 0$ yields the balance condition
\begin{equation*}
  \frac{1}{\eta} = \frac{\log u - 1}{(1-\eta)(\log u)^2}, \qquad u := \log\!\bigl(1/((1-\eta)\eps)\bigr).
\end{equation*}
For $\eps \to 0$, $u \to \log(1/\eps) =: L$ and $\log u \to \log L$, so the right-hand side approaches $1/\log L$ at leading order. Solving $(1-\eta)/\eta \approx 1/\log L$ gives $\eta^* \approx \log L /(1 + \log L)$ as in Eq.~\eqref{eq:eta_star}. The numerical range is verified by direct minimization on the parameter grid $\eps \in \{10^{-3}, 10^{-5}, 10^{-7}, 10^{-10}, 10^{-12}\}$, $\alphaR T, \betaI T \in \{10, 30, 100, 300, 1000, 3000\}$.
\end{proof}

\subsection{Optimized query counts}

\begin{theorem}[Optimized M-QSP queries]\label{thm:opt_Q}
For $\alphaR T, \betaI T \gg 1$ and $\eps \ll 1$:
\begin{equation}\label{eq:Q_opt}
\begin{split}
  Q^* &\leq 2\bigl((\alphaR + \betaI)T \\&+ \log(1/\eps)/\log\log(1/\eps)\bigr)(1+o(1)).
\end{split}
\end{equation}
The leading constant of approximately~$2$ relative to the information-theoretic lower bound is confirmed numerically across a parameter grid spanning $\alphaR T, \betaI T \in \{10, 100, 1000\}$ and $\eps \in \{10^{-3}, 10^{-6}, 10^{-10}\}$.
\end{theorem}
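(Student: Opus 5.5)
The plan is to start from the degree formulas in Proposition~\ref{prop:allocation} and substitute the optimal split $\eta^*$ of Eq.~\eqref{eq:eta_star}. This gives
\[
Q(\eta^*) = c_R\alphaR T + c_I\betaI T + \log\frac{1}{\eta^*\eps} + \frac{\log(1/((1-\eta^*)\eps))}{\log\log(1/((1-\eta^*)\eps))} + \calO(1).
\]
Since $\eta^* \to 1$ and $1-\eta^* \approx 1/(1+\log L)$ with $L=\log(1/\eps)$, the two shifts $\log(1/\eta^*)$ and $\log(1/(1-\eta^*)) = \calO(\log\log L)$ are lower order. They are absorbed into the $(1+o(1))$ factor, because $\log\log L = o(L/\log L)$ and $u/\log u$ changes by only $o(L/\log L)$ when $u$ changes by $\calO(\log\log L)$.

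The $T$-dependent part is bounded using the constants from the companion paper's constructions. The Jacobi--Anger degree has $c_R \le e/2$ from the Bessel tail $I_k(\alphaR T)$, using the same Stirling estimate $(e a/(2k))^k$ as in the upper-bound step of Theorem~\ref{thm:tight_loglog}. The Taylor/Poisson degree has $c_I \le 1+o(1)$ in the regime $\betaI T\gg\log(1/\eps)$, by the normal-approximation argument in the proof of Theorem~\ref{thm:state_dep}. Since $\max(c_R,c_I)\le e/2<2$, the $T$-linear part is at most $2(\alphaR+\betaI)T$. I would present this as an explicit constant check, not an asymptotic statement.

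The $\eps$-dependent part needs the most care, because the Jacobi--Anger side contributes a $\log(1/\eps)$ term with no $\log\log$ denominator. It is not dominated by $2\log(1/\eps)/\log\log(1/\eps)$ when $T$ is fixed and $\eps\to0$. The hypothesis $\alphaR T\gg1$ is what rescues this. I would use the refined Jacobi--Anger degree $d_R = e\alphaR T/2 + \calO\bigl(\log(1/\eps)/\log(\log(1/\eps)/(\alphaR T))\bigr)$, which is obtained by inverting $(e\alphaR T/(2k))^k\le\eps$ with Lambert~$W$ exactly as in Theorem~\ref{thm:tight_loglog}.

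From here I split into two regimes. When $\alphaR T\gtrsim\log(1/\eps)$, the $\eps$-term of $d_R$ is $o(\alphaR T)$ and folds into the linear term. When $\alphaR T\ll\log(1/\eps)$, it is $(1+o(1))\log(1/\eps)/\log\log(1/\eps)$. Adding the two $\eps$-contributions, one from $d_R$ and one from $d_I$, gives at most $2\log(1/\eps)/\log\log(1/\eps)$. This factor of two from counting the $\eps$-term twice is precisely the origin of the leading constant~$2$.

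Comparing with the lower bound of Remark~\ref{rem:lb_upgrade} then gives a ratio of approximately $2$. The numerical confirmation on the stated grid is reported rather than proved.
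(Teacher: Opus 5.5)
Your route differs from the paper's. The paper's sketch writes $d_R\le c_R\alphaR T+(1+\delta)\log(1/\eps)$ and $d_I\le c_I\betaI T+(1+\delta)\log(1/\eps)/\log\log(1/\eps)$, does not extract $c_R,c_I$, and rests the constant $2$ on numerical ratios over a finite grid. You correctly notice that a bare $\log(1/\eps)$ in $d_R$ can never sit under $2\log(1/\eps)/\log\log(1/\eps)$, and switching to the super-exponential Bessel-tail estimate is the right repair. Write $L=\log(1/\eps)$. Your argument is sound when, for each $x\in\{\alphaR T,\betaI T\}$, either $\log x=o(\log L)$ or $L=o(x)$. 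For fixed $T$ and $\eps\to0$ it gives ratio exactly $2$, coming from two $\eps$-terms of size $(1+o(1))L/\log L$ each. When both $\alphaR T,\betaI T\gg L$ it gives at most $e/2$.

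The gap is the regime split in between, where both of your claims fail. By your own formula, the $\eps$-part of $d_R$ is about $L/\log(L/(\alphaR T))$. This equals $(1+o(1))L/\log L$ only if $\log(\alphaR T)=o(\log L)$, which ``$\alphaR T\ll L$'' does not give. Larger $\alphaR T$ makes it worse, so the hypothesis $\alphaR T\gg1$ is not what rescues the bound. At $\alphaR T\asymp L$ the $\eps$-part is $\Theta(\alphaR T)$, not $o(\alphaR T)$. The same problems hit the Poisson tail of $d_I$ for $1\ll\betaI T\lesssim L$, where your $c_I\le1+o(1)$ does not apply.

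No bookkeeping fixes this: read uniformly in the stated hypotheses, the inequality is false there for any separate-oracle polynomial. Let $\HR,\HI$ act on different tensor factors with dense spectra, and set $\alphaR T=\betaI T=\sqrt L$.
\begin{itemize}
\item On the $\betaI$-eigenspace of $\HI$, $P$ must $\eps$-approximate $e^{-i\alphaR Tx}$ on $[-1,1]$.
\item Its Chebyshev coefficients are $2|J_k(\alphaR T)|=(e\alphaR T/(2k))^k e^{\calO(\log L)}$ for $k\asymp L/\log L$, so Lemma~\ref{lem:cheb_lb} forces $d_R\ge(2-o(1))L/\log L$.
\item The argument of Theorem~\ref{thm:tight_loglog} with $a=\betaI T/2$ forces the same bound on $d_I$.
\end{itemize}
Thus $Q^*\ge(4-o(1))L/\log L$, while $2Q_{\mathrm{lower}}=(2+o(1))L/\log L$. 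At $\alphaR T=\betaI T=L/\log L$ one even gets $d_R\ge(1-o(1))L/\log\log L$, so $Q^*/Q_{\mathrm{lower}}$ is unbounded.

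The additive yardstick $(\alphaR+\betaI)T+L/\log L$ is wrong between the extremes; the correct per-oracle scale is $x+L/\log(e+L/x)$ with $x\in\{\alphaR T,\betaI T\}$. The constant-$2$ claim should be stated, and proved, only in the extreme regimes where your argument works.
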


\subsection{Comparison with Dyson LCU and LCHS}

The Dyson LCU method~\cite{courtney2026paper1} incurs an additional $\log(1/\eps)$ factor from the split-operator penalty.
Quantitatively, 
M-QSP improvement over Dyson LCU ranges from $2.5\times$ (at $\betaI T = 10$, $\eps = 10^{-3}$) to $5.8\times$ (at $\betaI T = 50$, $\eps = 10^{-10}$) on the verified parameter grid, with advantage growing with both $\eps$-tightness and $\betaI T$ through the segmented $r d_I$ penalty.

Both M-QSP and Dyson LCU operate within the same separate-oracle framework ($W_R$, $U_I$ independent), so this comparison is oracle-model consistent and shows improvement in quadrature strategy. Comparison with LCHS in the single-oracle model is summarized in the scaling-analysis paragraph below.

Table~\ref{tab:method_comparison} reports the M-QSP query advantage relative to two reference methods across a parameter grid: Dyson LCU (described in the companion paper~\cite{courtney2026paper1}, the separate-oracle polynomial method, suffers a $\log(1/\eps)$ split-operator penalty in $d_I$); and LCHS~\cite{an2023linear,an2026quantum} (single-oracle integral method, suffers a $\mathrm{polylog}(1/\eps)$ quadrature penalty).

\begin{table}[!htbp]
\centering
\caption{M-QSP query reduction relative to segmented Dyson LCU ($r = \lceil \betaI T \rceil$ segments, Jacobi--Anger $+$ Taylor per segment) and LCHS (estimate with $C_{\mathrm{LCHS}} = 1$). Values are optimized query counts with $d_I$ taken from the Taylor-remainder bound.}
\label{tab:method_comparison}
\renewcommand{\arraystretch}{1.2}
\begin{tabular}{@{}rrrrrr@{}}
\toprule
$\alphaR T$ & $\betaI T$ & $\eps$ & $Q_{\mathrm{M\text{-}QSP}}$
  & $Q_{\mathrm{LCU}}/Q_{\mathrm{M\text{-}QSP}}$
  & $Q_{\mathrm{LCHS}}/Q_{\mathrm{M\text{-}QSP}}$ \\
\midrule
10 & 10 & $10^{-3}$  & 48  & 2.5$\times$ & 5.8$\times$  \\
10 & 10 & $10^{-6}$  & 59  & 3.1$\times$ & 10.3$\times$ \\
10 & 10 & $10^{-10}$ & 71  & 3.5$\times$ & 16.7$\times$ \\
10 & 50 & $10^{-3}$  & 156 & 3.5$\times$ & 4.8$\times$  \\
10 & 50 & $10^{-6}$  & 168 & 4.8$\times$ & 8.1$\times$  \\
10 & 50 & $10^{-10}$ & 182 & 5.8$\times$ & 12.8$\times$ \\
\bottomrule
\end{tabular}
\end{table}

The Dyson LCU advantage scales as $\log(1/\eps) \cdot \log(\betaI T)$, growing fastest in the high-precision, strongly-dissipative corner. The LCHS advantage scales as $\mathrm{polylog}(1/\eps)$ with milder $\betaI T$ growth, peaking near $17\times$ in the high-precision, low-$\betaI T$ corner ($\betaI T = 10$, $\eps = 10^{-10}$) and dropping to $\sim 5\times$ at low precision, being a constant-factor advantage rather than a scaling one (because both methods are linear in $T$ at leading order). We emphasize that the M-QSP-vs-LCHS row of Table~\ref{tab:method_comparison} is not a head-to-head benchmark in the strict sense, as LCHS uses a single block encoding of $\Heff$, M-QSP uses two, where the comparison is meaningful only when $\HR$ and $\HI$ are physically accessed by independent mechanisms, as discussed in the companion paper~\cite{courtney2026paper1}.

\subsection{Scaling analysis}

Across $\eps \in [10^{-3}, 10^{-10}]$, the ratio $Q^*/Q_{\mathrm{lower}}$ grows by $10$--$28\%$ depending on $\betaI T$ (see Table~\ref{tab:method_comparison}).
For fixed $\eps$ and $\alphaR T = \betaI T$, this ratio approaches the theorem's leading-constant bound $2 \cdot (1+o(1))$ from above as both grow; the extended workstation sweep~\cite{courtney2026paper2} shows $Q^*/Q_{\mathrm{lower}} = 2.60$ at $\alphaR T = \betaI T = 10$, $\eps = 10^{-10}$, decaying monotonically to $1.88$ at $\alphaR T = \betaI T = 3000$, well below the theorem's bound.

For the asymmetric regime $\alphaR T \ll \betaI T$, the dominant contribution to $Q^*$ comes from $d_I$, so $Q^*/(\betaI T) \to c_I^{\mathrm{eff}}$ where $c_I^{\mathrm{eff}}$ is the empirical leading constant of the Taylor truncation degree.
The Taylor-remainder bound $c^{M+1}/(M+1)! \leq \eps_{\mathrm{Taylor}}$ used in the script yields $c_I^{\mathrm{eff}} \approx 2.72$ at $\betaI T = 3000, \eps = 10^{-10}$, slowly approaching the asymptotic value $c_I = 1$ from above with a $\log(1/\eps)/(c \log c)$ correction.
Substituting the optimal Chebyshev approximation of $e^{c(x-1)}$ in place of the Taylor remainder would tighten $c_I^{\mathrm{eff}}$ closer to $c_R^{\mathrm{eff}} \approx 1$ at finite $c$; this implementation-level improvement is independent of Theorem~\ref{thm:opt_Q}'s asymptotic claim and is left as a constant-factor optimization.

\begin{remark}[Concrete resource comparison]\label{rem:resource_comparison}
For the Eckart-barrier scattering benchmark \cite{courtney2026paper1},
the optimized M-QSP query count is $Q^* = 407$ with $Q^*/Q_{\mathrm{lower}} = 1.14$,
within 15\% of the information-theoretic lower bound.
For more strongly dissipative systems where $\betaI T \gg 1$, the walk-operator postselection probability $P_3 = e^{-2\betaI T}$ becomes astronomically small (e.g., $P_3 \sim 10^{-75}$ for $\betaI T \approx 86$), underscoring practical importance of the direct-access construction (Sec.~\ref{sec:barriers}) for reducing postselection overhead.
\end{remark}

\begin{proof}[Proof sketch of the constant-$2$ claim in Theorem~\ref{thm:opt_Q}]
The information-theoretic lower bound is $Q_{\mathrm{lower}} = (\alphaR + \betaI)T + \log(1/\eps)/\log\log(1/\eps)$ (established as the tight log/loglog lower bound in the companion paper~\cite{courtney2026paper1}). The M-QSP construction with optimal error allocation $\eta = \eta^*$ from Proposition~\ref{prop:allocation} produces
\begin{equation*}
\begin{split}
  d_R &\leq c_R \alphaR T + (1+\delta)\log(1/\eps), \\
  d_I &\leq c_I \betaI T + (1+\delta)\log(1/\eps)/\log\log(1/\eps),
\end{split}
\end{equation*}
where $c_R$ is the Jacobi--Anger leading constant and $c_I$ is the Taylor leading constant.
Numerical values $c_R, c_I$ depend on the specific construction. 
We do not attempt to extract them analytically here, instead verifying the composite bound numerically.
With parameters $\alphaR T = 10$, $\betaI T \in \{10, 50\}$ and $\eps \in \{10^{-3}, 10^{-6}, 10^{-10}\}$, the ratio $Q^*/Q_{\mathrm{lower}}$ falls in the interval $[2.03, 2.70]$ (Table~\ref{tab:method_comparison}), and at the asymptote $\alphaR T = \betaI T = 3000, \eps = 10^{-10}$, the ratio drops to $1.88$, well below the theorem's $2 \cdot (1+o(1))$ bound.
The finite-grid excess over the bound is absorbed by the subleading $\log(1/\eps)/\log\log(1/\eps)$ correction.
We refer to the combined leading constant as ``approximately~$2$'' on the grounds of this numerical evidence; the script's implementation choices (Bessel-tail Jacobi--Anger plus Taylor-remainder Dyson) match the asymptote from above and would only converge faster under a Chebyshev-optimal $d_I$ implementation.
Tighter analytical bounds on the rate of convergence to $c_R + c_I = 2$ remain open.
\end{proof}

\section{Time-Dependent Non-Hermitian Hamiltonians}
\label{sec:time_dependent}

We resolve here an extension of M-QSP to time-dependent generators.
The CRC extension to non-identical signal operators (Theorem~\ref{thm:crc_nonidentical}) shows that angle-finding pipeline survives when signal operators differ from step to step, and the multilinear polynomial structure (Proposition~\ref{prop:multilinear}) characterizes polynomial degree in each signal variable.
These results reduce a future quaternionic M-QSP program to a single remaining obstacle: spectral factorization on $\bbT^k$ for $k \geq 3$.

\subsection{CRC for non-identical signal operators}

\begin{theorem}[CRC extension]\label{thm:crc_nonidentical}
Let $\cG(\bTheta, \mathbf{s})$ be an M-QSP circuit with signal gates $A^{(j)} = \ket{0}\!\bra{0}_a \otimes W^{(j)} + \ket{1}\!\bra{1}_a \otimes I$, where $W^{(1)}, \ldots, W^{(d)}$ are arbitrary (possibly non-identical, non-commuting) unitaries.
the constant-ratio condition holds at every peeling step:
\begin{equation}\label{eq:crc_nonidentical}
  \frac{b_{\mathrm{lead}}^{(k)}}{a_{\mathrm{lead}}^{(k)}} = e^{-i\phi_{k-1}}\tan\theta_{k-1} \in \C,
\end{equation}
independent of the signal operator eigenvalues.
\end{theorem}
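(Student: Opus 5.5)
The plan is to prove the constant-ratio condition by induction on the circuit length, treating each $W^{(j)}$ only as a formal, noncommuting symbol. Write the circuit as $\cG = R_d A^{(d)} R_{d-1} \cdots A^{(1)} R_0$, where $R_k = R(\theta_k,\phi_k)$ are the single-qubit ancilla rotations. Expand the $(0,0)$ and $(1,0)$ ancilla blocks as sums over words: each signal gate $A^{(j)}$ contributes either $W^{(j)}$ (ancilla in $\ket{0}$) or $I$ (ancilla in $\ket{1}$). The "leading" term $a_{\mathrm{lead}}^{(k)}$ (resp.\ $b_{\mathrm{lead}}^{(k)}$) is the coefficient of the maximal word $W^{(k)}W^{(k-1)}\cdots W^{(1)}$ in the top (resp.\ bottom) block after $k$ signal gates. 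For identical commuting signals this word reduces to the top-degree monomial $z^{d}$. For non-identical signals I will define degree as the length of the word in the free algebra generated by the $W^{(j)}$, and I will not use commutativity anywhere.

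The key step is a direct computation of how the maximal word arises. Only one path produces $W^{(k)}\cdots W^{(1)}$: the ancilla must sit in $\ket{0}$ immediately before every signal gate. The coefficient of that word is therefore a product of scalar rotation entries, $\prod_{j}[R_j]_{00}$ times the first-column entry of $R_0$. These entries are independent of the $W^{(j)}$ because the signal gate acts as the operator $W^{(j)}$ with coefficient $1$ on that branch. After the last signal gate, the final rotation $R_{k-1}$ (in the peeling indexing) maps the ancilla state $\ket{0}$ to the column $([R_{k-1}]_{00}, [R_{k-1}]_{10})^{\mathsf T} = (\cos\theta_{k-1}, e^{-i\phi_{k-1}}\sin\theta_{k-1})^{\mathsf T}$, up to the paper's phase convention. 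Both leading coefficients share the same prefactor, so their ratio is $e^{-i\phi_{k-1}}\tan\theta_{k-1}$, a pure scalar in $\C$. Because the word is formal, the same conclusion holds when it is evaluated on any joint eigenvector or in any representation, which gives independence from the signal eigenvalues.

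To make this an inductive peeling statement, I will show the property is preserved under peeling. Applying $R_{k-1}^{\dagger}$ and then "dividing out" $A^{(k)}$ removes the last letter $W^{(k)}$ from the top block, which is well defined as a left factor in the free algebra. This returns a circuit of the same form with $k-1$ signal gates and shorter words, and the inductive hypothesis then applies. The base case $k=1$ is a single rotation–signal–rotation product checked by hand.

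The main obstacle is making the notion of "leading coefficient" rigorous when the $W^{(j)}$ are non-commuting and possibly algebraically dependent. For example, if $W^{(2)} = W^{(1)\dagger}$, lower-length words could coincide with the maximal word as operators. I would resolve this by working in the free group algebra $\C\langle W^{(1)},\ldots,W^{(d)}\rangle$, where words are linearly independent. There $a_{\mathrm{lead}}$ and $b_{\mathrm{lead}}$ are defined as formal coefficients, and the ratio identity is a formal statement that then holds under any specialization. I would also remark that peeling requires $a_{\mathrm{lead}}^{(k)} \neq 0$, i.e.\ $\cos\theta_{k-1}\neq 0$; the degenerate case is handled by the usual degree-drop convention of the companion paper.
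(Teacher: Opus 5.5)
Your mechanism is the right one, and it is the same ancilla factorization the paper invokes. $W^{(k)}$ sits only on the $\ket{0}$ branch of $A^{(k)}$, and the rotations act as scalars on the ancilla, so the outermost rotation's first column fixes the ratio. Working formally in the free algebra is also exactly the paper's device for non-identical signals.

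The gap is in the object you call the leading coefficient. You take the coefficient of the single full word $W^{(k)}W^{(k-1)}\cdots W^{(1)}$. That forces the ancilla into $\ket{0}$ at \emph{every} signal gate, so your $a^{(k)}_{\mathrm{lead}}$ and $b^{(k)}_{\mathrm{lead}}$ both carry the factor $\prod_j\langle 0|R_j|0\rangle$ over all rotations preceding gate $k$. Consequently your remark that $a^{(k)}_{\mathrm{lead}}\neq 0$ is equivalent to $\cos\theta_{k-1}\neq 0$ is false under your own definition. Suppose some earlier rotation has vanishing $(0,0)$ entry, e.g.\ the first rotation maps $\ket{0}$ to $\ket{1}$. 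Then both of your coefficients vanish and your identity reads $0/0$, even though the outermost rotation is nondegenerate and the output still contains words beginning with $W^{(k)}$. This is not a degree drop in $W^{(k)}$, so the degree-drop convention does not rescue it. Your definition also makes the clause ``independent of the signal operator eigenvalues'' vacuous, since your ratio is scalar over scalar. In the paper's formulation the leading coefficients carry the inner-circuit operator $\Gamma^{(k)}$, which depends on the inner signals, and the content is that this dependence cancels. The same holds already in the identical-signal bivariate case, where the leading coefficient in the peeled variable is a polynomial in the other variable.

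The repair is to constrain the path only at gate $k$. Words beginning with $W^{(k)}$ arise exactly from paths with the ancilla in $\ket{0}$ at gate $k$. So the $W^{(k)}$-part of the output column is $R\ket{0}\otimes W^{(k)}\Gamma^{(k)}$, where $\Gamma^{(k)}$ is the $(0,0)$ ancilla block of the inner circuit, an element of the free algebra in $W^{(1)},\ldots,W^{(k-1)}$. Hence, up to the phase convention,
$a^{(k)}_{\mathrm{lead}}=\cos\theta_{k-1}\,\Gamma^{(k)}$ and $b^{(k)}_{\mathrm{lead}}=e^{-i\phi_{k-1}}\sin\theta_{k-1}\,\Gamma^{(k)}$, and $\Gamma^{(k)}$ cancels as an operator. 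This is precisely the paper's argument, and it has three advantages:
\begin{itemize}
\item it needs only $\cos\theta_{k-1}\neq 0$ and $\Gamma^{(k)}\neq 0$;
\item it gives operator-level proportionality, so a single $R^\dagger$ removes every $W^{(k)}$-word from the lower component, which is what the peeling update actually uses;
\item it contains your identity as the special case of comparing the top-word coefficient of $\Gamma^{(k)}$ when that coefficient is nonzero.
\end{itemize}
Your inductive peeling can then stay as written.
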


\begin{proof}
No step of the ancilla-factorization argument of the CRC~\cite{courtney2026paper1} assumes $W^{(j)} = W^{(j')}$ for $j \neq j'$.
Scalar action of $R_j$ on $\cH_a$, block placement of $W^{(j)}$ in $A^{(j)}$, and cancellation of the inner-circuit operator $\Gamma^{(k)}$ in the leading-coefficient ratio (~\cite{courtney2026paper1},
Lemma~7.2) are insensitive to the relationship between distinct
$W^{(j)}$. 
Each is a property of the ancilla-system tensor
product structure rather than of the system Hilbert space content.
With the formal definition of leading coefficients in the schedule-induced free-algebra ordering (~\cite{courtney2026paper1}, Definition~7.1), the cancellation of $\Gamma^{(k)}$ is rigorous for arbitrary (possibly non-identical) signal operators.
\end{proof}

\begin{remark}[Structural universality]\label{rem:universality}
The CRC is a consequence of the ancilla-system tensor product structure, enabling a time-dependent extension (this section) and potential multi-oracle generalizations beyond the bivariate case, where cancellation persists as long as the free-algebra structure is maintained.
Non-identical signal operators break the reduction to a bivariate polynomial (different $W^{(j)}$ produce different eigenvalue variables), requiring verification of the CRC at the operator level rather than the polynomial level.
Ancilla factorization is indifferent to the system Hilbert space content, allowing this condition to hold.
\end{remark}

\subsection{Time-stamped oracle model}

For time-dependent $\Heff(t) = \HR(t) + i\HI(t)$ over $[0,T]$, each time step $t_j$ has its own pair of walk operators $W_R^{(j)}$ and $U_I^{(j)}$ encoding $\HR(t_j)/\alphaR(t_j)$ and $\HI(t_j)/\betaI(t_j)$.
Each oracle is queried at most once, with total query count $Q = d_R + d_I$ where $d_R$ counts walk-operator queries and $d_I$ counts block-encoding queries.

\subsection{Query complexity}

\begin{theorem}[Time-dependent query complexity]\label{thm:td_complexity}
For time-dependent $\Heff(t)$ with integrated norms $B_R = \int_0^T \alphaR(s)\,ds$ and $B_I = \int_0^T \betaI(s)\,ds$:
\begin{equation}\label{eq:td_Q}
  Q = \calO\!\left(B_R + B_I
  + \frac{\log(1/\eps)}{\log\log(1/\eps)}\right).
\end{equation}
Success probability is
$P = e^{-2B_I}\|U(T)\ket{\psi_0}\|^2/\|U(T)\|_\op^2$
where $U(T) = \cT_>\exp(\int_0^T(-i\HR(s)+\HI(s))\,ds)$.
\end{theorem}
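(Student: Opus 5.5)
We plan to reduce the time-dependent problem to a product of short-time segments, each handled by the time-independent M-QSP construction of the companion paper, and then glue the segments together using Theorem~\ref{thm:crc_nonidentical}.

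\textbf{Step 1: time discretization.} Partition $[0,T]$ into $r$ intervals $[t_{j-1},t_j]$. The partition is chosen adaptively so that each segment has unit integrated strength, $\int_{t_{j-1}}^{t_j}(\alphaR(s)+\betaI(s))\,ds \le 1$. This forces $r = \lceil B_R + B_I\rceil$. On each segment we approximate the time-ordered propagator $U_j = \cT_>\exp(\int_{t_{j-1}}^{t_j}(-i\HR+\HI)\,ds)$ by its Dyson series. The series is truncated at order $K$ in both the $\HR$ and $\HI$ insertions. The time integrals are discretized by quadrature, and we use the time-stamped oracles $W_R^{(j)}, U_I^{(j)}$ at the quadrature nodes. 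Assuming the generators are smooth enough for the quadrature to converge, the per-segment Dyson tail is bounded by $1/(K+1)!$ because the segment strength is at most $1$. The errors telescope, so requiring $r/(K+1)! \le \eps$ suffices.

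\textbf{Step 2: per-segment polynomial.} Each truncated segment operator is a multilinear polynomial in the distinct signal operators. We would invoke Proposition~\ref{prop:multilinear} to bound its degree in each variable by $\calO(\int_{\mathrm{seg}}\alphaR)$ and $\calO(\int_{\mathrm{seg}}\betaI)$, plus the truncation order. Then Theorem~\ref{thm:crc_nonidentical} guarantees that the constant-ratio condition survives when the signal operators differ from step to step. This makes the peeling-based angle-finding well defined, and the circuit realizes the normalized segment operator $e^{-\int\betaI}U_j$ exactly. The factor $e^{-\int\betaI}$ comes from the $e^{-\betaI}$ normalization of $e^{-(\betaI-\HI)t}$ and makes the block-encoded operator contractive.

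\textbf{Step 3: query count and success probability.} Summing the per-segment degrees gives
\[
Q = \calO\!\bigl(B_R + B_I + rK\bigr).
\]
A naive choice $K \sim \log(r/\eps)/\log\log(r/\eps)$ would give an extra multiplicative $r$ factor. To avoid this, we instead fold the $\eps$-dependent tail into a single global truncation, following the approach of Sec.~\ref{sec:optimality}. The time-ordered exponential of the integrated generator is expanded once as a multivariate polynomial whose total degree is governed by the Poisson tail with parameter $B_R + B_I$. By the same Stirling/Lambert-$W$ analysis as Theorem~\ref{thm:tight_loglog}, the tail is below $\eps$ at degree $\calO(B_R+B_I+\log(1/\eps)/\log\log(1/\eps))$, which is additive as claimed. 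For the success probability, the postselected ancilla outcome carries the product of the normalizations. Since these exponents add, the ancilla-$0$ branch equals $e^{-B_I}U(T)\ket{\psi_0}/\|U(T)\|_\op$ up to $\calO(\eps)$. Squaring the norm yields the stated $P$.

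\textbf{Main obstacle.} The hardest step is justifying the global (non-segmented) truncation, that is, showing that the additive $\log/\log\log$ term is achievable rather than the multiplicative $r\log(r/\eps)$. My first attempt would bound the norm of the $n$-th Dyson term of the full time-ordered series by $(B_R+B_I)^n/n!$, which follows from the simplex volume. The tail sum then directly yields the additive bound. The remaining issue is realizing that globally truncated series as a single multilinear M-QSP circuit, for which Proposition~\ref{prop:multilinear} and Theorem~\ref{thm:crc_nonidentical} are exactly what is needed.
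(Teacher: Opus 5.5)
You have correctly located the crux: summing per-segment truncation orders gives $\calO(B_R+B_I+r\log(r/\eps))$, not the additive bound. The global-truncation repair, however, does not close it. The estimate $\|U(T)-U_N(T)\|_{\op}\le\sum_{n>N}(B_R+B_I)^n/n!$ is fine, but it controls only the Dyson truncation order $N$. It does not control the number of oracle queries needed to implement $U_N$. In the time-stamped model each oracle is queried at most once, and by Proposition~\ref{prop:multilinear} each signal variable has degree at most one. So the query count of your circuit is the number of distinct time-stamped operators it contains, i.e.\ the number $M$ of quadrature nodes at which you discretize the time-ordered integrals over all of $[0,T]$. Truncating at order $N$ only zeroes the coefficients of monomials on more than $N$ nodes. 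It does not reduce $M$, which is set by the required accuracy and the time variation of $\HR(t),\HI(t)$, not by the Poisson tail. The same confusion appears in your Step~2, where you speak of degree $\calO(\int_{\mathrm{seg}}\alphaR)$ in each variable; in a multilinear polynomial these degrees are $0$ or $1$. The standard way to make the query count scale with $N$ is a time-register-controlled select oracle, as in Dyson-series LCU. But then the subnormalization is the $\ell_1$ weight $\sum_{n\le N}(B_R+B_I)^n/n!\approx e^{B_R+B_I}$. That forces segmentation into $\calO(1)$-strength pieces with amplitude amplification, and reinstates exactly the multiplicative $r\log(r/\eps)/\log\log(r/\eps)$ cost.

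Second, Proposition~\ref{prop:multilinear} and Theorem~\ref{thm:crc_nonidentical} do not make the globally truncated series realizable. The former is a necessary structural property of circuit outputs. The latter is a property of the leading coefficients of a polynomial already known to come from a circuit, used to peel its angles. Neither shows that a given bounded multilinear target in $k\ge 3$ signal variables admits a complementary polynomial and hence angles. That is the spectral-factorization problem on $\bbT^k$ which the paper itself lists as open at the start of Sec.~\ref{sec:time_dependent}.

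For comparison, the paper's proof stays segmented: Gr\"onwall normalization $e^{B_I}$, midpoint product $\prod_j e^{\Htilde(\tau_j)\Delta_j}$ with per-segment Taylor orders $M_j=\calO(c+\log(1/\eps_j))$, per-segment Jacobi--Anger for $d_R$, the CRC for angle-finding, and interaction-picture factorization for $P$. It asserts $\sum_j M_j=\calO(B_I+\log(1/\eps)/\log\log(1/\eps))$ by the same Stirling argument. But $\sum_j\eps_j\le\eps$ forces $\sum_j\log(1/\eps_j)\ge r\log(r/\eps)$ by convexity. That is the very multiplicative sum you flagged, so the paper does not supply the missing idea either.

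Finally, your success-probability step inserts rather than derives the factor $1/\|U(T)\|_{\op}$. Multiplying the segment normalizations gives the ancilla-$0$ branch $e^{-B_I}U(T)\ket{\psi_0}$ up to $\calO(\eps)$, hence $P=e^{-2B_I}\|U(T)\ket{\psi_0}\|^2$. This is also what the paper's Gr\"onwall and interaction-picture facts yield. Nothing in either argument produces the extra division by $\|U(T)\|_{\op}^2$.
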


\begin{proof}
We adapt the time-independent M-QSP construction of the companion paper~\cite{courtney2026paper1} to the time-dependent setting in five steps.

\emph{Step 1 (Generalized Gr\"onwall).}
The interaction-picture propagator $V(T) = \cT_>\exp\bigl(\int_0^T \Htilde(s)\,ds\bigr)$ satisfies $\|V(T)\|_\op \leq \exp(\int_0^T \|\Htilde(s)\|_\op\,ds) = e^{B_I}$, so the normalized propagator $V(T)/e^{B_I}$ is sub-unitary.
Time-dependent norms $\|\Htilde(s)\|_\op = \betaI(s)$ enter the Gr\"onwall bound through the integral $B_I$; the constant-norm case is the special case $\betaI(s) = \betaI$.

\emph{Step 2 (Time-dependent Dyson polynomial).}
Divide $[0,T]$ into $r$ segments of width $\Delta_j = T/r$ with midpoints $\tau_j$, but allow the per-segment norm $\betaI(\tau_j)\Delta_j$ to vary.
The midpoint-approximated propagator is $V_r(T) = \prod_{j=r}^{1} e^{\Htilde(\tau_j)\Delta_j}$, with each factor truncated at Taylor order $M_j = \calO(\betaI(\tau_j)\Delta_j + \log(1/\eps_j))$ where $\eps_j$ is the per-segment error budget. Choosing $r$ large enough that $\betaI(\tau_j)\Delta_j = c$ uniformly (i.e., $r$ adapts to the local norm) gives total $d_I = \sum_j M_j = \calO(B_I + \log(1/\eps)/\log\log(1/\eps))$ by the same Stirling argument as the time-independent case (companion paper~\cite{courtney2026paper1}, Taylor-remainder lemma). 

\emph{Step 3 (Frame rotation).}
Per-segment Jacobi--Anger decomposition with Bessel tail bounds gives per-segment $d_R^{(j)} = \calO(\alphaR(\tau_j)\Delta_j + \log(1/\eps_j))$, and summing yields $d_R = \calO(B_R + \log(1/\eps))$. 

\emph{Step 4 (CRC and angle-finding).}
By Theorem~\ref{thm:crc_nonidentical}, the CRC holds when signal operators are non-identical, so the recursive degree-reduction~\cite{courtney2026paper1}. Block peeling applies with cost $\calO(d_R \cdot d_I)$ classical operations.

\emph{Step 5 (Success probability).}
Combining $\|V(T)\|_\op \leq e^{B_I}$ with the interaction-picture factorization $e^{-i\Heff T}\ket{\psi_0} = U_R(T) V(T)\ket{\psi_0}$ (where $U_R(T) = \cT_>\exp(-i\int_0^T \HR(s)\,ds)$ is unitary) gives the success probability stated.
\end{proof}

\subsection{Multilinear polynomial structure}

\begin{proposition}[Multilinearity]\label{prop:multilinear}
The M-QSP polynomial $P(z^{(1)}, \ldots, z^{(d)})$ is multilinear in the signal variables: each $z^{(j)}$ appears at most once.
When signal operators within each family are identical ($W^{(j)} = W^{(j')}$ for all $j, j'$ with $s(j) = s(j')$), the multilinear polynomial reduces to a bivariate polynomial $P(z_1, z_2)$ of bidegree $(d_R, d_I)$.
\end{proposition}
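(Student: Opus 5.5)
The plan is to analyze the circuit symbolically, treating each signal gate as carrying its own formal variable. Each signal gate $A^{(j)} = \ket{0}\!\bra{0}_a \otimes W^{(j)} + \ket{1}\!\bra{1}_a \otimes I$ is replaced, on a joint eigenvector of the relevant signal unitary, by the scalar form $A^{(j)}(z^{(j)}) = \ket{0}\!\bra{0} z^{(j)} + \ket{1}\!\bra{1}$. Here $z^{(j)}$ is a formal variable attached to step $j$ alone. In the non-commuting setting this is read in the free-algebra ordering of~\cite{courtney2026paper1}, Definition~7.1, with $z^{(j)}$ standing for $W^{(j)}$ in position $j$. The circuit $\cG(\bTheta,\mathbf{s}) = R_d A^{(d)} R_{d-1}\cdots A^{(1)} R_0$ is then a product of ancilla matrices in which the $j$-th signal factor is affine in $z^{(j)}$ and contains no other signal variable, while the rotations $R_k$ are scalar.

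\textbf{Multilinearity.} I would argue by induction on $d$. The claim is that every entry of the partial product $R_k A^{(k)}\cdots A^{(1)}R_0$ is a linear combination of monomials $\prod_{j\in S} z^{(j)}$ with $S\subseteq\{1,\dots,k\}$. Multiplying by $R_{k+1}$ preserves this property, since the rotations are scalar. Multiplying by $A^{(k+1)}$ either multiplies an entry by $z^{(k+1)}$ (the $\ket{0}$ row) or leaves it unchanged (the $\ket{1}$ row). Because $z^{(k+1)}$ does not occur in earlier factors, no variable appears squared. Expanding the product then gives $P = \sum_{S}c_S\prod_{j\in S}z^{(j)}$, where $c_S$ depends only on $\bTheta$. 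Equivalently, $P$ is a sum over ancilla paths, and each path picks up $z^{(j)}$ at most once.

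\textbf{Reduction to a bivariate polynomial.} Assume $W^{(j)} = W_R$ whenever $s(j)=R$ and $W^{(j)}=U_I$ whenever $s(j)=I$. Since the two families commute on the bitorus model of~\cite{courtney2026paper1}, we can evaluate on a joint eigenbasis and substitute $z^{(j)}\mapsto z_1$ for $s(j)=R$ and $z^{(j)}\mapsto z_2$ for $s(j)=I$. A monomial indexed by $S$ maps to $z_1^{|S\cap s^{-1}(R)|}z_2^{|S\cap s^{-1}(I)|}$. The resulting exponents are bounded by $d_R = |s^{-1}(R)|$ and $d_I=|s^{-1}(I)|$, so $P(z_1,z_2)$ has bidegree at most $(d_R,d_I)$. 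To show the bidegree is exactly $(d_R,d_I)$, I would use the top monomial $S=\{1,\dots,d\}$. Its coefficient is the product of the $\ket{0}\!\bra{0}$ entries of the rotations, namely $\prod_k\cos\theta_k$ up to phases, and this is generically nonzero, matching the leading-coefficient convention used for the CRC.

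\textbf{Main obstacle.} The delicate step is making the substitution rigorous when the $W^{(j)}$ do not commute, or are not simultaneously diagonalizable. In that case ``evaluating on eigenvalues'' is not literally available. I would handle this by stating multilinearity in the free algebra, as a noncommutative polynomial in which the word associated with $S$ lists the $W^{(j)}$ in schedule order. I would then invoke commutativity only in the identical-operator case to collapse words to commutative monomials. This is the same reading under which Theorem~\ref{thm:crc_nonidentical} holds.
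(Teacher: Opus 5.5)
Your proposal is correct and follows essentially the paper's argument. Each signal gate contributes a single factor $\operatorname{diag}(z^{(j)},1)$ that appears once in the circuit product, which gives multilinearity; identifying $z^{(j)}\mapsto z_{s(j)}$ within each family then yields a bivariate polynomial with degrees bounded by $d_R=|s^{-1}(R)|$ and $d_I=|s^{-1}(I)|$. Your induction, path-sum and generic top-coefficient remarks only add detail.

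One caution concerns your appeal to $W_R$ and $U_I$ commuting. You do not need it, and you cannot generally assume it here: \emph{identical within each family} says nothing about commutation across families, and $[\HR,\HI]\neq 0$ is the generic case in this paper. The collapse is a purely formal substitution in the commutative scalar polynomial $\sum_S c_S\prod_{j\in S}z^{(j)}$ you already obtained, which is exactly how the paper proceeds.
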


\begin{proof}
Each signal gate $A_{s(j)}$ inserts a factor $\operatorname{diag}(z^{(j)}, 1)$ in the corresponding eigenspace of $W^{(j)}$; the gate appears exactly once in the circuit product $\mathcal{G} = R_0 \prod_j A_{s(j)} R_j$. The resulting block-encoded polynomial in any eigenspace is thus linear in $z^{(j)}$ separately for each $j$, establishing multilinearity. When signal operators are identical within a family ($s(j) = s(j')$ implies $W^{(j)} = W^{(j')}$), the variables $z^{(j)}$ within a family collapse to a single variable $z_s = e^{i\theta_s}$ for $s \in \{R, I\}$, and the multilinear polynomial reduces to a bivariate $(z_R, z_I)$ polynomial of bidegree $(d_R, d_I)$, where $d_s = |\{j : s(j) = s\}|$. The bivariate form coincides with the M-QSP polynomial of the companion paper~\cite{courtney2026paper1}.
\end{proof}

\subsection{Numerical verification}
\label{subsec:td_numerics}

We validate the time-dependent construction on a benchmark Lindbladian with $\HR(t) = J\cos(\Omega t)\,Z\otimes Z + h\,(X\otimes I + I\otimes X)$ and $\HI(t) = (\gamma/2)(1 + \sin^2(\Omega t/2))\sum_k L_k^\dagger L_k$, with $J = 1$, $h = 0.5$, $\gamma = 0.3$, $\Omega = 0.4$, on a 2-qubit system.
The time profile is non-trivial in both norms: $\alphaR(t)$ varies smoothly between $0.5$ and $1.5$, and $\betaI(t)$ between $0.15$ and $0.30$.

\begin{table}[!htbp]
\centering
\caption{Time-dependent M-QSP validation. $r$: number of time segments; $d = d_R + d_I$: total query count; CRC ratio variation: maximum deviation of $b_{\mathrm{lead}}/a_{\mathrm{lead}}$ from a constant across signal-operator eigenvalues at each peeling step; angle recovery: $\|\bTheta^{\mathrm{rec}} - \bTheta^*\|_\infty$; circuit error: $\|G_{\mathrm{rec}} - G\|/\|G\|$. All entries use $\eps_{\mathrm{target}} = 10^{-6}$.}
\label{tab:td_validation}
\renewcommand{\arraystretch}{1.2}
\begin{tabular}{@{}rrcccc@{}}
\toprule
$T$ & $r$ & $d$ & CRC var. & Angle err. & Circuit err. \\
\midrule
$2$  & $4$  & $14$ & $1.2 \times 10^{-13}$ & $4.4 \times 10^{-16}$ & $1.6 \times 10^{-15}$ \\
$5$  & $8$  & $24$ & $5.0 \times 10^{-12}$ & $3.7 \times 10^{-15}$ & $8.2 \times 10^{-14}$ \\
$10$ & $16$ & $42$ & $2.4 \times 10^{-11}$ & $9.1 \times 10^{-14}$ & $4.7 \times 10^{-12}$ \\
$20$ & $32$ & $78$ & $4.8 \times 10^{-11}$ & $1.2 \times 10^{-12}$ & $7.3 \times 10^{-11}$ \\
\bottomrule
\end{tabular}
\end{table}

All entries achieve circuit error below $\eps_{\mathrm{target}} = 10^{-6}$ by 4--10 orders of magnitude, confirming that (i) the CRC extends to non-identical signal operators with negligible numerical cost, (ii) angle recovery via block peeling is stable in the time-dependent setting, and (iii) the multilinear polynomial structure of Proposition~\ref{prop:multilinear} reduces to the bivariate form when restricted to identical-within-family signal operators (verified at $4.5 \times 10^{-16}$ on the smallest configuration).

\section{The Postselection Barrier Across Oracle Models}
\label{sec:barriers}

We now investigate whether block-encoding overhead $e^{-2\betaI T}$ is universal across all oracle models, or whether alternative constructions can achieve the strictly smaller intrinsic barrier $e^{-2\omega T}$, where $\omega = \omega(\Heff) \leq \betaI$ is the spectral abscissa.
The companion paper~\cite{courtney2026paper1} proves the combined postselection bound $P \leq e^{-2\betaI T}\|e^{-i\Heff T} \ket{\psi_0}\|^2$ for polynomial block-encoding algorithms. 
The intrinsic barrier (unitarity alone) gives $P \leq \|e^{-i\Heff T} \ket{\psi_0}\|^2/\|e^{-i\Heff T}\|_\op^2$, and the operator norm satisfies $\|e^{-i\Heff T}\|_\op = e^{\omega T + o(T)}$ with $\omega \leq \betaI$, equality iff $\HR$ and $\HI$ share an eigenvector. 
When $[\HR, \HI] \neq 0$, the gap $\betaI - \omega > 0$ is generic, raising the question of whether the full $e^{-2\betaI T}$ cost can be avoided.

Normalization $\lambda = e^{\betaI T}$ appears implicitly in every prior polynomial block-encoding algorithm for non-Hermitian simulation~\cite{low2019hamiltonian,an2026quantum,jin2024quantum}.
Though it may be implicit in the body of work, we fail to explicitly find: (1) a formal proof that this normalization is a lower bound within the walk-operator model, (2) identification that the barrier is model-specific rather than physical, through separation into intrinsic ($e^{\omega T}$) and block-encoding ($e^{\betaI T}$) components, (3) function-class independence where the barrier applies to all bounded functions of the oracle, not just polynomials, and (4) precise characterization of when and by how much a direct-access construction can improve the barrier.

\subsection{Function-class independence}

\begin{theorem}[Contraction]\label{thm:contraction}
Let $U$ be any unitary on $\C^{d_a} \otimes \C^n$, and define $A = (\bra{0}_a \otimes I_n)\,U\,(\ket{0}_a \otimes I_n)$.
Then $\|A\|_\op \leq 1$.
\end{theorem}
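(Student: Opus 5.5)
The claim is a compression of a unitary, so the plan is to write $A$ as a product of three operators of norm at most one and then use submultiplicativity of the operator norm. Let $V := \ket{0}_a \otimes I_n : \C^n \to \C^{d_a}\otimes\C^n$ be the isometric embedding $\ket{\psi}\mapsto\ket{0}_a\otimes\ket{\psi}$. Its adjoint is $V^\dagger = \bra{0}_a \otimes I_n$, so $A = V^\dagger U V$.

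First I check that $V$ is an isometry: $V^\dagger V = \braket{0|0}\, I_n = I_n$, hence $\|V\ket{\psi}\| = \|\psi\|$ and $\|V\|_\op = 1$. Taking adjoints gives $\|V^\dagger\|_\op = \|V\|_\op = 1$; equivalently, $VV^\dagger = \ket{0}\!\bra{0}_a\otimes I_n$ is an orthogonal projector. Since $U$ is unitary, $\|U\|_\op = 1$. Submultiplicativity then yields $\|A\|_\op \le \|V^\dagger\|_\op\|U\|_\op\|V\|_\op = 1$.

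If a vector-level argument is preferred, the same bound follows directly. For any unit $\ket{\psi}\in\C^n$, set $\ket{\Phi} = U(\ket{0}_a\otimes\ket{\psi})$; this is a unit vector because $U$ is unitary. Decompose $\ket{\Phi} = \sum_j \ket{j}_a\otimes\ket{\phi_j}$ in the ancilla basis. Then $A\ket{\psi} = \ket{\phi_0}$, and $\|\phi_0\|^2 \le \sum_j\|\phi_j\|^2 = 1$. Taking the supremum over $\ket{\psi}$ gives $\|A\|_\op\le1$.

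No step is a genuine obstacle. The only point needing care is identifying $\bra{0}_a\otimes I_n$ as the adjoint of the isometry $\ket{0}_a\otimes I_n$. It is also worth remarking that nothing about the structure of $U$ is used beyond $\|U\|_\op\le1$, so the same bound holds for any contraction $U$. This is what makes the result function-class independent: whatever function of the oracles $U$ implements, the postselected block has norm at most one.
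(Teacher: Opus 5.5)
Your proof is correct. Writing $A = V^\dagger U V$ with $V = \ket{0}_a\otimes I_n$ an isometry and using submultiplicativity (or, equivalently, your vector-level argument that the $\ket{0}_a$ component has norm at most that of the whole unit vector) is the standard justification. The paper states Theorem~\ref{thm:contraction} without any proof, treating it as the well-known fact that a block of a unitary is a contraction, so your argument fills in exactly the omitted step.
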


\begin{corollary}[Function-class-independent barrier]\label{cor:fci}
In the walk-operator oracle model, if a quantum algorithm implements $A \approx e^{-i\Heff T}/\lambda$ as the $(0,0)$-block of a unitary, where $A = f(W_R, U_I)$ for any function $f$ (polynomial, rational, analytic, or otherwise) bounded by~$1$ on $\bbT^2$, then
$\lambda \geq e^{\betaI T}(1 - \calO(\eps))$.
\end{corollary}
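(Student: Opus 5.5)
The plan is to pit two facts against each other: the block $A$ is a contraction, while $e^{-i\Heff T}$ itself has norm about $e^{\betaI T}$ on the joint spectrum that the walk-operator model allows.

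\textbf{Step 1: $\lambda$ is bounded below by the operator norm.} Theorem~\ref{thm:contraction} gives $\|A\|_\op \leq 1$ for the $(0,0)$-block of any unitary. It holds because $A$ is a compression of $U$: $\|A\ket{\psi}\| \leq \|U(\ket{0}\ket{\psi})\| = \|\psi\|$. The hypothesis $|f| \leq 1$ on $\bbT^2$ gives the same bound in the eigenbasis. Writing $A = e^{-i\Heff T}/\lambda + E$ with $\|E\|_\op \leq \eps$, we get
\[
\lambda \geq \|e^{-i\Heff T}\|_\op/(1+\eps).
\]
This bound uses nothing about the form of $f$; only boundedness enters. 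That is the source of the function-class independence.

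\textbf{Step 2: the oracle model forces the worst case.} In the walk-operator model, $f(W_R,U_I)$ depends only on the eigenphases of the two oracles. The algorithm must therefore succeed for every admissible instance consistent with the oracle data, or equivalently for every joint eigenvalue pair on $\bbT^2$ that the encodings can present. The bivariate function $f$ cannot tell whether $\HR$ and $\HI$ commute. So I will exhibit a commuting instance with the same norms $\alphaR,\betaI$.

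Take a common eigenvector $v$ with $\HI v = \betaI v$ and $\HR v = x v$. On that vector, $\|e^{-i\Heff T}v\| = e^{\betaI T}$, using the paper's sign convention in which the non-Hermitian part amplifies at rate $\betaI$, consistent with $\|V(T)\|\leq e^{B_I}$ in Step 1 of Theorem~\ref{thm:td_complexity}. More concretely, the scalar target function $g(z_1,z_2)$ at the point $z_2$ corresponding to $\eta=\betaI$ has modulus $e^{\betaI T}$. The requirement $|f - g/\lambda| \leq \eps$ at that point, together with $|f| \leq 1$, then gives $e^{\betaI T}/\lambda \leq 1+\eps$. This yields $\lambda \geq e^{\betaI T}(1-\calO(\eps))$.

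\textbf{Main obstacle.} The hard part is Step 2, namely justifying that the approximation must hold pointwise at the extremal eigenvalue $\eta = \betaI$ of $U_I$. If the paper's model only requires approximation on the actual, possibly non-commuting, spectrum, then the naive bound is only $e^{\omega T}$. I would handle this as in Proposition~\ref{prop:no_ff}: because $A = f(W_R,U_I)$, the function $f$ is fixed by the oracles' eigenphases alone. Approximating $e^{-i\Heff T}$ for all Hamiltonians with the given normalizations therefore includes the commuting instance, where $\omega = \betaI$. The constant-factor loss comes from $1/(1+\eps) = 1-\calO(\eps)$.
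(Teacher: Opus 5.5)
Your proposal is correct and follows essentially the same route as the paper: both combine $|f|\le 1$ on $\bbT^2$ with the fact that at the spectral point $z_2=1$ of $U_I$ (i.e.\ $\eta=\betaI$) the target has modulus $e^{\betaI T}/\lambda$. Your commuting-instance reduction makes explicit why approximation must hold at $z_2=1$, which the paper instead takes from the companion paper's zero-locus property. You also keep the $\eps$ slack, giving $\lambda\ge e^{\betaI T}/(1+\eps)$, which the paper's proof drops.
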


\begin{proof}
Contraction $\|A\|_\op \leq 1$ requires $\|f(e^{i\theta_1}, e^{i\theta_2})\|_\infty \leq 1$ on $\bbT^2$.
At $(\theta_1, \theta_2) = (\theta_1, 0)$, the propagator evaluates to $e^{\betaI T}/\lambda$ (by the zero-locus property, $z_2 = 1$ lies in the spectrum of $U_I$, as established in the companion paper~\cite{courtney2026paper1}).
The bound $e^{\betaI T}/\lambda \leq 1$ gives $\lambda \geq e^{\betaI T}$.
The barrier is found as a property of the oracle model (specifically, the spectral point $z_2 = 1$ of $U_I$) and the contraction property of blocks of unitaries, not of the function class.
\end{proof}

\begin{remark}[Spectral abscissa gap]\label{rem:gap}
The spectral abscissa $\omega(\Heff) = \max\{\mathrm{Im}(\lambda) : \lambda \in \spec(\Heff)\}$ satisfies $\omega < \betaI$ generically for non-commuting $(\HR, \HI)$, 
Numerical experiments with random GUE pairs give $\omega/\betaI \approx 0.70$--$0.75$ for $n \in \{4, 8, 16\}$ (Appendix~\ref{app:pseudospectra}), creating an exponential advantage factor $e^{2(\betaI - \omega)T}$ for hypothetical direct-access methods.
For typical non-commuting systems with $\omega/\betaI \approx 0.7$ and $\betaI T \geq 10$, this advantage exceeds $10^2$; for $\betaI T \geq 100$ (common in quantum chemistry applications), it exceeds $10^{26}$, demonstrating applicability for direct-access advantage.
\end{remark}

\subsection{Three-tier barrier hierarchy}

The hierarchy theorem requires preliminary characterization of the propagator norm, which we state as an independent result.

\begin{proposition}[Operator norm of the propagator]\label{prop:opnorm}
Let $\Heff = \HR + i\HI$ with $\HR, \HI$ Hermitian and $\HI \succeq 0$.
Define the spectral abscissa $\omega(\Heff) = \max\{\mathrm{Im}(\lambda) : \lambda \in \spec(\Heff)\}$.
Then:
\begin{enumerate}
  \item[\textup{(i)}] $\|e^{-i\Heff T}\|_\op \leq e^{\betaI T}$ for all $T \geq 0$.
  \item[\textup{(ii)}] $\|e^{-i\Heff T}\|_\op = e^{\omega(\Heff) T + o(T)}$ as $T \to \infty$, where $\omega(\Heff) \leq \betaI$.
  \item[\textup{(iii)}] $\omega(\Heff) = \betaI$ if and only if $\HR$ and $\HI$ share an eigenvector in the $\betaI$-eigenspace of $\HI$.
  In particular, $[\HR, \HI] = 0$ is sufficient but not necessary.
\end{enumerate}
\end{proposition}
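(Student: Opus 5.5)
The plan is to handle each of the three claims with an elementary tool: a Grönwall energy estimate for (i), the spectral radius formula (or the Jordan form) for (ii), and a Rayleigh-quotient equality argument for (iii). Throughout I write $A = -i\Heff = -i\HR + \HI$, so that $e^{-i\Heff T} = e^{AT}$. An eigenvalue $\lambda$ of $\Heff$ corresponds to the eigenvalue $-i\lambda$ of $A$, whose real part is $\mathrm{Im}(\lambda)$. Hence $\omega(\Heff)$ is exactly the spectral abscissa $\max \mathrm{Re}\,\spec(A)$.

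For (i), fix $x_0$ and set $x(t) = e^{At}x_0$. Since $\HR$ is Hermitian, the term $-i\HR$ contributes nothing to the real part, and
\[
\frac{d}{dt}\|x(t)\|^2 = 2\,\mathrm{Re}\braket{x(t)|A|x(t)} = 2\braket{x(t)|\HI|x(t)} \leq 2\betaI\|x(t)\|^2 .
\]
Grönwall then gives $\|x(T)\| \leq e^{\betaI T}\|x_0\|$, which is (i). This is the same argument as Step~1 of Theorem~\ref{thm:td_complexity}, specialized to constant generators.

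For (ii), the lower bound is immediate: $\|e^{AT}\|_\op \geq \rho(e^{AT}) = e^{\omega T}$. For the upper bound I would use the Jordan decomposition $A = SJS^{-1}$. This gives $\|e^{AT}\|_\op \leq \kappa(S)\,\max_k \|e^{J_kT}\|$, and each block satisfies $\|e^{J_kT}\| \leq e^{\mathrm{Re}(\mu_k)T}\sum_{j<n} T^j/j!$. The prefactor is therefore polynomial in $T$, i.e.\ $e^{o(T)}$; alternatively, Gelfand's formula $\lim_T \|e^{AT}\|^{1/T} = \rho(e^{A}) = e^{\omega}$ gives the same conclusion directly. The inequality $\omega \leq \betaI$ follows from the Rayleigh quotient. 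For a unit eigenvector $v$ with $\Heff v = \lambda v$,
\[
\lambda = \braket{v|\HR|v} + i\braket{v|\HI|v},
\qquad\text{so}\qquad
\mathrm{Im}\,\lambda = \braket{v|\HI|v} \leq \betaI .
\]

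For (iii), the same identity carries the argument. Suppose $\omega = \betaI$, attained at a unit eigenvector $v$. Then $\braket{v|\HI|v} = \betaI = \lambda_{\max}(\HI)$. Equality in the Rayleigh quotient forces $\HI v = \betaI v$. Substituting into $\Heff v = \lambda v$ gives $\HR v = (\lambda - i\betaI)v$, so $v$ is an eigenvector of $\HR$ lying in the $\betaI$-eigenspace of $\HI$; the eigenvalue $\lambda - i\betaI$ is automatically real because $\HR$ is Hermitian. Conversely, a shared eigenvector $v$ with $\HR v = r v$ and $\HI v = \betaI v$ gives $\Heff v = (r + i\betaI)v$, hence $\omega \geq \betaI$, and combined with (ii) $\omega = \betaI$.

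The commuting case is sufficient because $[\HR,\HI] = 0$ makes the $\betaI$-eigenspace of $\HI$ invariant under $\HR$. The restriction of $\HR$ to that eigenspace is Hermitian, so it has an eigenvector there. To show commutation is not necessary, I would use the $3\times 3$ example
\[
\HI = \mathrm{diag}(1, \tfrac12, 0), \qquad \HR = 0 \oplus \sigma_x ,
\]
where $\sigma_x$ acts on $\mathrm{span}(e_2,e_3)$. Here $e_1$ is a shared eigenvector in the top eigenspace of $\HI$. However, $[\HR,\HI] \neq 0$, because $\HI$ is non-scalar on the block where $\HR$ mixes $e_2$ and $e_3$.

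I do not expect a serious obstacle. The only delicate point is making the $o(T)$ in (ii) honest. The Jordan-form prefactor $\kappa(S)\,T^{n-1}$ depends on $\Heff$ and is not uniform in the dimension. It is nonetheless subexponential in $T$ for fixed $\Heff$, which is all the statement claims. I would state this explicitly, and cite Gelfand's formula as the dimension-free route.
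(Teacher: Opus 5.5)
Your proposal is correct, and for (i) and (ii) it follows the paper's route. For (i), the paper uses the logarithmic norm $\mu(-i\Heff)=\lambda_{\max}(\HI)=\betaI$ together with Coppel's inequality; your Gr\"onwall computation $\tfrac{d}{dt}\|x\|^2=2\braket{x|\HI|x}$ is just the proof of that inequality written out. For (ii), the paper also pairs the spectral-radius lower bound (via Gelfand) with a Jordan-form upper bound. It writes the constant as $\kappa(S)^2$, whereas your $\kappa(S)$ is the sharper one. Both of you obtain $\omega\le\betaI$ from the same Rayleigh-quotient identity.

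The real difference is the forward direction of (iii). The paper starts from $\HR\ket{\psi}+i\HI\ket{\psi}=a\ket{\psi}+i\betaI\ket{\psi}$ and concludes $\HI\ket{\psi}=\betaI\ket{\psi}$ by ``separating real and imaginary parts.'' That step is not valid as written: $\ket{\psi}$ is a complex vector, so $\HR\ket{\psi}$ and $\HI\ket{\psi}$ are not real vectors, and the decomposition is not unique. Your argument supplies the missing justification. You note that $\mathrm{Im}\,\lambda=\braket{v|\HI|v}=\betaI$, and since $\HI\succeq 0$ gives $\betaI=\lambda_{\max}(\HI)$, the equality case of the Rayleigh quotient forces $\HI v=\betaI v$. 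Then $\HR v=(\lambda-i\betaI)v$ follows directly. This is what actually establishes the claim.

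Your explicit $3\times 3$ example, $\HI=\mathrm{diag}(1,\tfrac12,0)$ with $\HR=0\oplus\sigma_x$, also makes the ``sufficient but not necessary'' statement concrete. The paper's proof of this proposition only asserts non-necessity without giving an instance.
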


\begin{proof}
\emph{Part~(i).}
Set $A = -i\Heff$; the logarithmic norm (numerical abscissa) is
\begin{equation}\label{eq:lognorm}
\begin{split}
      \mu(A) &= \lambda_{\max}\!\left(\frac{A + A^*}{2}\right)
  \\&= \lambda_{\max}\!\left(\frac{-i\Heff + i\Heff^\dagger}{2}\right)
  = \lambda_{\max}(\HI)
  = \betaI.
\end{split}
\end{equation}
The Coppel inequality $\|e^{At}\| \leq e^{\mu(A)t}$ (valid for all $t \geq 0$; see, e.g., ~\cite{coppel1965stability, horn2012matrix}, Theorem~5.6.9) gives $\|e^{-i\Heff T}\|_\op \leq e^{\betaI T}$.

\emph{Part~(ii).}
For any $A \in \C^{n \times n}$, the spectral abscissa $\alpha(A) = \max\{\mathrm{Re}(\lambda) : \lambda \in \spec(A)\}$ governs the large-$T$ growth rate:
\begin{equation}\label{eq:gelfand}
  \lim_{T \to \infty} \frac{1}{T}\log\|e^{AT}\|_\op = \alpha(A).
\end{equation}
This is a finite-dimensional consequence of the Gelfand formula $r(e^{AT}) = \lim_{k \to \infty}\|e^{AkT}\|^{1/k} = e^{\alpha(A)T}$, combined with the fact that $\|e^{AT}\|_\op \geq r(e^{AT})$~\cite{gelfand1941normierte, rudin1991functional}.
For $A = -i\Heff$, eigenvalues of $A$ are $\{-i\lambda_j\}$ where $\lambda_j = a_j + ib_j$ are the eigenvalues of $\Heff$, so $\mathrm{Re}(-i\lambda_j) = b_j = \mathrm{Im}(\lambda_j)$.
Hence $\alpha(A) = \max_j b_j = \omega(\Heff)$.

We still need to sharpen Eq.~\eqref{eq:gelfand} to $\|e^{AT}\|_\op = e^{\alpha(A)T + o(T)}$.
We write the Jordan decomposition $A = S(D + N)S^{-1}$ where $D = \mathrm{diag}(\mu_1, \ldots, \mu_n)$ (eigenvalues, with $\mathrm{Re}(\mu_1) = \alpha(A)$) and $N$ is nilpotent with $N^n = 0$.
Then
\begin{equation}\label{eq:jordan_expansion}
\begin{split}
      e^{AT} &= S\,e^{(D+N)T}\,S^{-1}
  \\&= S\left(\sum_{k=0}^{n-1}\frac{T^k}{k!}N^k\right)e^{DT}\,S^{-1}.
\end{split}
\end{equation}
Since $\|e^{DT}\|_\op = e^{\alpha(A)T}$ and the polynomial prefactor satisfies $\|\sum_{k=0}^{n-1}\frac{T^k}{k!}N^k\| = \calO(T^{n-1})$, we obtain
\begin{equation}\label{eq:opnorm_bound}
  e^{\alpha(A)T}
  \leq \|e^{AT}\|_\op
  \leq \kappa(S)^2\,\calO(T^{n-1})\,e^{\alpha(A)T},
\end{equation}
where $\kappa(S) = \|S\|_\op\|S^{-1}\|_\op$.
Taking logarithms and dividing by $T$:
\begin{equation}
    \begin{split}
          \alpha(A) &\leq \frac{\log\|e^{AT}\|_\op}{T}
  \\&\leq \alpha(A) + \frac{(n-1)\log T + 2\log\kappa(S) + C}{T},
    \end{split}
\end{equation}
so $\log\|e^{AT}\|_\op = \alpha(A)T + o(T)$, i.e., $\|e^{-i\Heff T}\|_\op = e^{\omega(\Heff)T + o(T)}$.

The upper bound $\omega(\Heff) \leq \betaI$ follows from the Rayleigh quotient: for any eigenvector $\ket{\psi}$ of $\Heff$ with eigenvalue $\lambda = a + ib$,
\begin{equation}\label{eq:rayleigh_bound}
  b = \mathrm{Im}\!\braket{\psi|\Heff|\psi}
  = \braket{\psi|\HI|\psi}
  \leq \|\HI\|_\op = \betaI.
\end{equation}

\emph{Part~(iii).}
($\Leftarrow$) Suppose $\ket{\psi}$ is a joint eigenvector with $\HI\ket{\psi} = \betaI\ket{\psi}$ and $\HR\ket{\psi} = a\ket{\psi}$.
Then $\Heff\ket{\psi} = (a + i\betaI)\ket{\psi}$, so $\omega(\Heff) \geq \mathrm{Im}(a + i\betaI) = \betaI$.
Combined with Eq.~\eqref{eq:rayleigh_bound}, $\omega(\Heff) = \betaI$.

($\Rightarrow$) Suppose $\omega(\Heff) = \betaI$.
Then there exists an eigenvector $\ket{\psi}$ with $\Heff\ket{\psi} = (a + i\betaI)\ket{\psi}$ for some $a \in \R$.
Separating real and imaginary parts: $\HR\ket{\psi} + i\HI\ket{\psi} = a\ket{\psi} + i\betaI\ket{\psi}$, so $\HI\ket{\psi} = \betaI\ket{\psi}$ (i.e., $\ket{\psi}$ lies in the $\betaI$-eigenspace of $\HI$) and $\HR\ket{\psi} = a\ket{\psi}$.
Thus $\ket{\psi}$ is a simultaneous eigenvector of $\HR$ and $\HI$ in the $\betaI$-eigenspace.

The condition $[\HR, \HI] = 0$ is sufficient (simultaneous diagonalizability guarantees such a shared eigenvector exists) but not necessary, suffices for $\HR$ to have an eigenvector within the $\betaI$-eigenspace of $\HI$ even if $[\HR, \HI] \neq 0$ globally.
Generically, when no such shared eigenvector exists, $\omega(\Heff) < \betaI$ strictly.
\end{proof}

We now state and prove the main hierarchy theorem.

\begin{theorem}[Barrier hierarchy]\label{thm:hierarchy}
The postselection cost falls into three tiers:
\begin{enumerate}
  \item \textbf{Tier 1 (Intrinsic):} $P \leq e^{-2\omega T + o(T)}$ --- applies to all quantum implementations.
  \item \textbf{Tier 2 (Contraction):} $\lambda \geq \|e^{-i\Heff T}\|_\op = e^{\omega T + o(T)}$ --- applies to block-encoding methods, any function class.
  \item \textbf{Tier 3 (Walk-operator):} $\lambda \geq e^{\betaI T}$ --- applies to the separate-oracle walk-operator model.
\end{enumerate}
Tiers 1 and 2 coincide.
Tiers 2 and 3 coincide iff $\omega = \betaI$, which holds iff $[\HR, \HI] = 0$ (or more precisely, iff $\HR$ and $\HI$ share an eigenvector in the $\betaI$-eigenspace).
When $[\HR, \HI] \neq 0$, Tier~3 is strictly tighter than Tiers~1--2 by a factor of $e^{2(\betaI - \omega)T}$.
\end{theorem}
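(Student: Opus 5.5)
The plan is to assemble the three tiers from results already in hand, Theorem~\ref{thm:contraction}, Corollary~\ref{cor:fci} and Proposition~\ref{prop:opnorm}, and then read off the comparison statements.

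\textbf{Tier~2.} Suppose a block-encoding method implements $A \approx e^{-i\Heff T}/\lambda$ as the $(0,0)$-block of a unitary. Theorem~\ref{thm:contraction} gives $\|A\|_\op \le 1$, hence
\[
\lambda \;\ge\; \|e^{-i\Heff T}\|_\op\,(1-\calO(\eps)).
\]
Proposition~\ref{prop:opnorm}(ii) rewrites the right-hand side as $e^{\omega T + o(T)}$, and the $\eps$-correction is absorbed into the $o(T)$. Nothing in this step depends on how $A$ arises from the oracles, so it holds for any function class.

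\textbf{Tier~1.} Any implementation that prepares $e^{-i\Heff T}\ket{\psi_0}/\|e^{-i\Heff T}\ket{\psi_0}\|$ by postselection succeeds with probability $P = \|K\ket{\psi_0}\|^2$ for some contraction $K$ that approximates $e^{-i\Heff T}/\lambda$ on the relevant input. The reason $K$ is a contraction is that a postselected branch of a unitary (or of a CPTP map, via Stinespring dilation) is a sub-block of a unitary, so Theorem~\ref{thm:contraction} applies again.

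The difficulty is that a method could be state-dependent, and $\|K\ket{\psi_0}\|^2$ only yields $P \le \|e^{-i\Heff T}\ket{\psi_0}\|^2/\|e^{-i\Heff T}\|_\op^2$ when the construction must work uniformly over inputs. I would therefore make that uniformity explicit: the implementation must realize a fixed linear map proportional to $e^{-i\Heff T}$ on all inputs. Under this assumption the bound above follows, and combining it with Proposition~\ref{prop:opnorm}(ii) gives the worst-case normalization $e^{-2\omega T + o(T)}$. The same computation shows that Tiers~1 and~2 coincide, since both are $\lambda^{-2}$ with $\lambda \ge \|e^{-i\Heff T}\|_\op$.

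\textbf{Tier~3.} This is exactly Corollary~\ref{cor:fci}.

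\textbf{Comparison.} Tiers~2 and~3 coincide iff $e^{\omega T + o(T)}$ and $e^{\betaI T}$ agree at exponential order, i.e.\ iff $\omega = \betaI$. Proposition~\ref{prop:opnorm}(iii) characterizes this as $\HR$ having an eigenvector in the $\betaI$-eigenspace of $\HI$, with commutation sufficient. When this fails, Proposition~\ref{prop:opnorm}(ii) gives $\omega < \betaI$, so the ratio of the Tier~3 to the Tier~2 success-probability bounds is
\[
e^{-2\omega T}/e^{-2\betaI T} = e^{2(\betaI-\omega)T}
\]
up to $e^{o(T)}$.

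\textbf{Expected obstacle.} The main step to handle with care is the ``iff $[\HR,\HI]=0$'' phrasing. It is false as literally stated, since commutation is sufficient but not necessary, so I would prove the precise shared-eigenvector version and note that the statement itself already supplies it as the parenthetical correction. For the same reason, the final sentence has to be read as ``when no shared top eigenvector exists'' rather than ``when $[\HR,\HI]\neq0$.''
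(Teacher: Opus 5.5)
Your proposal is correct and follows the paper's proof essentially step for step. Tier~2 comes from Theorem~\ref{thm:contraction} plus Proposition~\ref{prop:opnorm}(ii), Tier~3 directly from Corollary~\ref{cor:fci}, and Tier~1 from the same normalization/contraction argument. You make explicit the input-uniformity assumption that the paper leaves implicit in its phrase ``in the worst case''. The comparison goes through Proposition~\ref{prop:opnorm}(iii) with the same caveat that commutation is only sufficient. One minor slip: the strict inequality $\omega < \betaI$ when no top eigenvector is shared comes from part~(iii) of Proposition~\ref{prop:opnorm}, not part~(ii).
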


\begin{proof}

\emph{Tier~1 (Intrinsic barrier).}
For any quantum implementation that produces $e^{-i\Heff T}\ket{\psi_0}/\lambda$ as the output of a measurement with success probability $P$, the Born rule gives
\begin{equation}\label{eq:tier1_born}
  P = \frac{\|e^{-i\Heff T}\ket{\psi_0}\|^2}{\lambda^2}
  \leq \frac{\|e^{-i\Heff T}\ket{\psi_0}\|^2}{\|e^{-i\Heff T}\|_\op^2},
\end{equation}
since $\lambda \geq \|e^{-i\Heff T}\|_\op$ is required for the output state to have unit norm in the worst case.
By Proposition~\ref{prop:opnorm}(ii), $\|e^{-i\Heff T}\|_\op = e^{\omega T + o(T)}$, so
\[
  P \leq \|e^{-i\Heff T}\ket{\psi_0}\|^2 \cdot e^{-2\omega T + o(T)}.
\]
Since $\|e^{-i\Heff T}\ket{\psi_0}\| \leq \|e^{-i\Heff T}\|_\op \leq e^{\betaI T}$, the prefactor is at most $e^{2\betaI T}$, but for worst-case initial states $P \leq e^{-2\omega T + o(T)}$.

\emph{Tier~2 (Contraction barrier).}
Suppose a block-encoding method implements $A \approx e^{-i\Heff T}/\lambda$ as the $(0,0)$-block of a unitary.
Theorem~\ref{thm:contraction} gives $\|A\|_\op \leq 1$, which requires $\|e^{-i\Heff T}\|_\op/\lambda \leq 1 + \calO(\eps)$, hence
\begin{equation}\label{eq:tier2}
  \lambda \geq \|e^{-i\Heff T}\|_\op(1 - \calO(\eps))
  = e^{\omega T + o(T)}.
\end{equation}
Success probability is then $P = \|A\ket{\psi_0}\|^2 \leq \|e^{-i\Heff T}\ket{\psi_0}\|^2/\lambda^2 \leq \|e^{-i\Heff T}\ket{\psi_0}\|^2/\|e^{-i\Heff T}\|_\op^2$.
This is identical to the Tier~1 bound in Eq.~\eqref{eq:tier1_born}, so Tiers~1 and 2 coincide.

\emph{Tier~3 (Walk-operator barrier).}
In the separate-oracle model, Corollary~\ref{cor:fci} gives $\lambda \geq e^{\betaI T}(1 - \calO(\eps))$.
The success probability is then
\[
  P \leq e^{-2\betaI T}\|e^{-i\Heff T}\ket{\psi_0}\|^2.
\]

\emph{Gap between Tiers 2 and 3.}
The ratio between the Tier~3 and Tier~2 normalization requirements is
\begin{equation}\label{eq:tier_gap}
  \frac{e^{\betaI T}}{\|e^{-i\Heff T}\|_\op}
  = e^{(\betaI - \omega)T + o(T)},
\end{equation}
by Proposition~\ref{prop:opnorm}(ii).
By Proposition~\ref{prop:opnorm}(iii), $\omega = \betaI$ iff $\HR$ and $\HI$ share an eigenvector in the $\betaI$-eigenspace.
The commuting condition $[\HR, \HI] = 0$ is sufficient (simultaneous diagonalizability guarantees such a shared eigenvector) but not necessary.
When no such shared eigenvector exists (generic for non-commuting pairs), $\omega < \betaI$ and Tier~3 is tighter than Tiers~1--2 by the exponential factor $e^{2(\betaI - \omega)T}$ in the success probability.
\end{proof}

\subsection{Rational functions}

\begin{proposition}[Pad\'e barrier]\label{prop:pade}
For the Pad\'e approximant $r_{m,n}$ to $e^{cx}$, the supremum $\lambda_{\bbT}(r_{m,n}) = \|r_{m,n}\|_{\infty,\bbT}$ satisfies $\lambda_{\bbT} = e^c$ once the approximation order is sufficient.
Poles of $r_{m,n}$ lie outside $\bbT$ and do not circumvent the unitarity constraint.
The barrier is a property of the domain $\bbT$ and the target function, not the function class.
\end{proposition}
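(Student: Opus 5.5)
The plan is to treat the Padé approximant exactly as Corollary~\ref{cor:fci} treats a general bounded function: the barrier comes from evaluating the target at the spectral point where it is largest on $\bbT$, not from any property of polynomials.

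\emph{Setting.} Write $r_{m,n} = p_m/q_n$ with $q_n(0)=1$ and $r_{m,n}(x) - e^{cx} = \calO(x^{m+n+1})$. In the walk-operator model the argument is $x = z$ (or an affine image of it) ranging over $\bbT$, and the point $z=1$ lies in the spectrum of the relevant oracle by the zero-locus property. The plan is to prove two inequalities.

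\emph{Lower bound $\lambda_{\bbT} \geq e^c(1-\calO(\eps))$.} I would evaluate at $z=1$ and use the standard Padé error formula, which expresses $r_{m,n}(1) - e^{c}$ as a contour/integral remainder of size $\calO\bigl(c^{m+n+1} m!\,n!/((m+n)!(m+n+1)!)\bigr)$. This is $\le \eps e^c$ once $m+n$ is large, by the same Stirling and Lambert-$W$ estimate used in the proof of Theorem~\ref{thm:tight_loglog}. Hence $|r_{m,n}(1)| \geq e^c(1-\eps)$.

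\emph{Upper bound and the maximum on $\bbT$.} Next I would show that $\sup_{\bbT}|r_{m,n}| \le e^c(1+\eps)$ at sufficient order. On $\bbT$ we have $|e^{cz}| = e^{c\cos\theta} \le e^c$, and uniform convergence of Padé approximants to $e^{cz}$ on compact sets (Perron/Saff--Varga) transfers this bound to $r_{m,n}$ up to $\eps$. So $\lambda_{\bbT}$ equals $e^c$ up to the approximation error. This is the sense of ``$=$'' I would adopt: equality up to $1+\calO(\eps)$, consistent with Corollary~\ref{cor:fci}.

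\emph{Poles.} Saff--Varga shows that the zeros of $q_n$ for $e^{z}$ lie outside a parabolic region and move to infinity like $(m+n)$. After scaling by $c$, they therefore leave the closed unit disk once $m+n \gtrsim c$. Consequently $r_{m,n}$ is analytic on a neighborhood of the closed disk, and the maximum principle places its maximum modulus on $\bbT$. The poles thus provide no cancellation that would lower the sup.

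\emph{Conclusion.} Any unitary block encoding of $r_{m,n}(U)/\lambda$ must satisfy $\lambda \geq \lambda_{\bbT}$, by Theorem~\ref{thm:contraction} applied to the eigenvector with eigenvalue $z=1$. Combined with the bounds above, this gives $\lambda \geq e^c(1-\calO(\eps))$. The argument uses only the value of the target at $z=1 \in \bbT$ and unitarity, so it is independent of the function class.

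The main obstacle is the pole-location step. It needs explicit, non-asymptotic Saff--Varga-type bounds guaranteeing that every zero of $q_n$ has modulus greater than $1$ for all $(m,n)$ beyond an explicit threshold depending on $c$. I would first try the known result that the zeros of the Padé denominator of $e^{w}$ satisfy $|w| > (m+n)/\rho$ for a universal constant $\rho$, and accept ``sufficient order'' to mean $m+n \ge \rho c$.
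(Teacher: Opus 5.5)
Your plan is correct and matches the paper's reasoning. The paper gives no separate proof of this proposition; it rests on the mechanism of Corollary~\ref{cor:fci}: the target equals $e^{c}$ at the spectral point $z=1\in\bbT$, so Theorem~\ref{thm:contraction} forces $\lambda\geq|r_{m,n}(1)|\geq e^{c}(1-\calO(\eps))$. Strictly, contraction at that eigenvector yields $|r_{m,n}(1)|$ rather than $\lambda_{\bbT}$, but that is all your conclusion uses.

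The pole-location obstacle you flag is settled by the Saff--Varga annulus theorem, which places every pole of the $(m,n)$ Pad\'e approximant to $e^{w}$ in $|w|>\mu(m+n)$, with $\mu\approx 0.278$ the root of $\mu e^{1+\mu}=1$. So $m+n\geq c/\mu$ puts all poles outside the closed disk, and $m+n\geq 2c/\mu$ also bounds $|q_n|$ below there, as your remainder estimate tacitly needs. Your approximate reading of ``$=$'' is forced, not merely a convention: at $c=1$ one already has $\lambda_{\bbT}(r_{2,2})=r_{2,2}(1)=19/7<e$ at sufficient order.
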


\subsection{Fisher information}

\begin{proposition}[Polynomial Cram\'er--Rao bound]\label{prop:fisher}
The per-query quantum Fisher information from the walk operator is polynomial: $F_{\mathrm{walk}} = \calO(\poly(\betaI, n))$.
The Cram\'er--Rao bound gives $Q_{\mathrm{CR}} = \calO(\poly(\betaI, T))$, being exponentially weaker than the actual cost $Q/P \sim e^{2\betaI T}$.
\end{proposition}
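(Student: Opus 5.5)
The plan is to prove the two claims of Proposition~\ref{prop:fisher} separately. The first is a per-query Fisher information bound that is polynomial in $\betaI$ and $n$. The second is the resulting Cram\'er--Rao query count, which we then compare with the postselection cost $Q/P$ from Theorem~\ref{thm:hierarchy} (Tier~3).

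\emph{Step 1: the generator.} Parametrize the oracle family by $\betaI$. With $\HI$ fixed up to scale, write $\HI = \betaI \hat H$ with $\|\hat H\|_\op = 1$. The walk operator $U_I$ block-encodes $\HI/\betaI$, but its qubitized form depends on $\betaI$ only through the encoded operator and the normalization rotation. We therefore write $U_I(\betaI) = V e^{-i\theta(\betaI) K} V^\dagger$, where $K$ is a Hermitian generator acting on the ancilla-plus-system space. The state-independent bound on the per-query QFI is then
\[
F_{\mathrm{walk}} \leq 4\,\|\partial_{\betaI} U_I\|_\op^2 \leq 4\,\|K\|_\op^2\,|\theta'(\betaI)|^2 .
\]
The key computation is that $\theta'(\betaI)$ and $\|K\|_\op$ are polynomial in $\betaI$ and in the system dimension $n$. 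The main obstacle is making this derivative explicit. I would do it through the qubitization eigenphases $\arccos(\eta/\betaI)$, which give $\theta'\sim \eta/(\betaI^2\sqrt{1-\eta^2/\betaI^2})$. This expression is singular near $\eta = \betaI$, so I would either regularize with a spectral gap assumption or bound it by a polynomial in $n$ using the finite spectrum, and I would state which of the two is used.

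\emph{Step 2: additivity and the Cram\'er--Rao count.} QFI from $Q$ sequential queries is at most quadratic in $Q$ (Heisenberg scaling), $F_Q \leq Q^2 F_{\mathrm{walk}}$, even with adaptive control. The quantum Cram\'er--Rao bound $\delta\betaI \geq 1/\sqrt{F_Q}$ then shows that achieving the precision required for simulation to time $T$ (resolution $\delta\betaI \sim 1/T$, up to $\eps$ factors) needs only $Q_{\mathrm{CR}} \geq 1/(T\sqrt{F_{\mathrm{walk}}}) = \calO(\poly(\betaI,T))$ queries.

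\emph{Step 3: the comparison.} Finally, invoke Corollary~\ref{cor:fci} and Tier~3 of Theorem~\ref{thm:hierarchy}. Postselection succeeds with probability $P \leq e^{-2\betaI T}$, so the expected repeated-query cost is $Q/P \sim Q\,e^{2\betaI T}$. This exceeds any polynomial in $(\betaI, T)$, and the exponential separation follows. A remark would note that this separation shows the barrier is a normalization (contraction) phenomenon and not an information-theoretic one, so Fisher-information arguments cannot reproduce it.
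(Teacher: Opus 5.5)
\textbf{There are genuine gaps.} The step that actually makes the Cram\'er--Rao count polynomial is missing. The paper's justification is the paragraph after the statement. It rests on the fact that the simulation delivers only the \emph{normalized} state $e^{-i\Heff T}\ket{\psi_0}/\|e^{-i\Heff T}\ket{\psi_0}\|$. The exponential sensitivity to $\betaI$ lives entirely in the norm, which normalization strips off and which reappears only in the Born-rule success probability. The output direction therefore carries only $\calO(T^2)$ quantum Fisher information. A simulation-to-estimation reduction can extract no more than that from each output copy, so $Q_{\mathrm{CR}}=\sqrt{F_{\mathrm{target}}/F_{\mathrm{walk}}}$ with $F_{\mathrm{target}}=\calO(T^2)$ is polynomial, while $Q/P$ is exponential.

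Your Step~2 replaces this with the unexplained requirement $\delta\betaI\sim 1/T$. That requirement is equivalent to $F_{\mathrm{target}}\sim T^2$, which is exactly what has to be argued. An \emph{upper} bound on $F_{\mathrm{walk}}$ (your Step~1) only bounds $Q_{\mathrm{CR}}$ from \emph{below}, so it cannot show the Cram\'er--Rao count is small. What makes that count exponentially weak is the cap on $F_{\mathrm{target}}$, together with $F_{\mathrm{walk}}$ not being exponentially small. The algebra is also inverted: $F_Q\le Q^2F_{\mathrm{walk}}$ with $\delta\betaI\sim 1/T$ gives $Q\ge T/\sqrt{F_{\mathrm{walk}}}$, not $1/(T\sqrt{F_{\mathrm{walk}}})$.

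Step~1 also does not work as planned. In your parametrization $\HI=\betaI\hat H$ with $\hat H$ fixed, the oracle $U_I$ block-encodes $\HI/\betaI=\hat H$ and is independent of $\betaI$. So $\partial_{\betaI}U_I=0$, and the $\betaI$-dependence of the circuit enters only through the classically computed angles. If you instead hold $\HI$ fixed and vary the normalization, the eigenphase derivative $\eta/(\betaI^2\sqrt{1-\eta^2/\betaI^2})$ is infinite, not merely large, at $\eta=\betaI$. That eigenvalue is always in the spectrum, because $\betaI=\|\HI\|_{\mathrm{op}}$. Finiteness of the spectrum therefore gives no polynomial-in-$n$ bound, and a gap below the top eigenvalue of $\HI$ does not help either. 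Only a normalization slack $\betaI>\|\HI\|_{\mathrm{op}}$, or a different smooth parametrization of the oracle, regularizes the derivative, and the resulting bound then depends on the slack rather than on $n$.

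Finally, Tier~3 reads $P\le e^{-2\betaI T}\|e^{-i\Heff T}\ket{\psi_0}\|^2$, not $P\le e^{-2\betaI T}$. Since $\HI\succeq 0$, the norm factor is at least $1$ and can be as large as $e^{2\omega T+o(T)}$. Your exponential separation in Step~3 therefore holds only for initial states on which this factor is not itself exponentially large.
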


Normalization erases necessary information, causing the simulation-to-estimation reduction to fail. 
The amplitude $\|e^{-i\Heff T}\ket{\psi_0}\|$ encodes exponential sensitivity to $\betaI$, but after normalization, only the direction remains, carrying $\calO(T^2)$ Fisher information~\cite{braunstein1994statistical, helstrom1969quantum}.
The exponential cost is purely a postselection phenomenon encoded in the Born-rule measurement statistics, not in any quantum state property.

\subsection{Schr\"odingerization achieves Tier 3}

\begin{theorem}[Schr\"odingerization cost]\label{thm:schrod}
In the Jin--Liu--Yu Schr\"odingerization framework~\cite{jin2022quantum}, non-Hermitian evolution $e^{-i\Heff T}\ket{\psi_0}$ is recovered from a unitary evolution on $\cH_s \otimes L^2(\R)$ by extraction at the moving point $p = \lambda_+ t = \betaI t$ (in the unstable regime $\HI \succeq 0$).
Total cost is
\begin{equation}\label{eq:schrod_cost}
\begin{split}
      \text{Cost}_{\mathrm{JLY}} &= e^{2\betaI T}\cdot\\& \widetilde{\calO}\!\left((\alphaR + \betaI)T\,\poly\!\left(\log\frac{\betaI T}{\eps}\right)\right).
\end{split}
\end{equation}
\end{theorem}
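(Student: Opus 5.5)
The plan is to decompose the Schr\"odingerization cost into two multiplicative factors: a postselection factor coming from recovering $e^{-i\Heff T}\ket{\psi_0}$ out of the warped-phase variable $p$, and a unitary simulation cost for the dilated Hamiltonian. These correspond to the two factors in Eq.~\eqref{eq:schrod_cost}.

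\emph{Step 1: the dilated unitary.} Write $-i\Heff = -i\HR + \HI$. Introduce the warped-phase variable $w(p,t) = e^{-p}u(t)$ for $p>0$, extended to $p<0$ by an even or smooth extension. Fourier transforming in $p$ (dual variable $\xi$) gives the Hermitian family
\[
  H(\xi) = \HR \pm \xi\,\HI.
\]
The resulting evolution on $\cH_s \otimes L^2(\R)$ is unitary and is generated by $\HR \otimes I + \HI \otimes \hat\xi$. Discretizing $\xi$ on a grid of $N_\xi = \poly(\log(\betaI T/\eps))$ points with cutoff $|\xi|\le \xi_{\max}$, I would simulate it by optimal Hamiltonian simulation. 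The block-encoding norm is $\calO(\alphaR + \betaI\xi_{\max})$, which gives $\widetilde\calO((\alphaR+\betaI)T\,\poly\log(\betaI T/\eps))$ queries. The polylog absorbs $\xi_{\max}$ and the Fourier/QSVT precision.

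\emph{Step 2: extraction at the moving point.} In the unstable regime $\HI \succeq 0$, components in the $\eta$-eigenspace of $\HI$ are transported along $p \mapsto p + \eta t$ (the JLY characteristic analysis). Exact recovery of $u(T)$ therefore requires projecting onto $p \ge \lambda_+ T = \betaI T$. There $e^{-p}$ weights the amplitude by $e^{-\betaI T}$ relative to the initial normalization. The probability of landing in the extraction window is then $\|e^{-i\Heff T}\ket{\psi_0}\|^2 e^{-2\betaI T}$ up to constant factors. I would verify this by computing $\int_{p\ge\betaI T} e^{-2p}\,dp \,/\int_0^\infty e^{-2p}\,dp = e^{-2\betaI T}$. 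Without amplitude amplification, the expected number of repetitions is $\calO(e^{2\betaI T}/\|e^{-i\Heff T}\psi_0\|^2)$. The theorem's cost statement absorbs the state-dependent norm, so I would track it explicitly to match Tier~3.

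\emph{Step 3: combine.} Multiplying the repetition count by the per-run cost yields Eq.~\eqref{eq:schrod_cost}. Consistency with Corollary~\ref{cor:fci} shows the dilation exactly meets the walk-operator barrier rather than beating it.

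The main obstacle is Step 2. One must justify that extraction must occur at $p \ge \betaI T$ rather than at $\omega T$, since this is what makes JLY land on Tier~3 instead of Tier~1. I would argue via the worst-case eigen-direction of $\HI$ with eigenvalue $\betaI$: the extraction threshold is set by the largest transport speed, which is governed by the spectrum of $\HI$, not of $\Heff$. A secondary issue is controlling the $p$-discretization and smoothness errors so they enter only polylogarithmically. For this I would cite the JLY smooth-initial-data error estimates.
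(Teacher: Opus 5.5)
Your overall route is the one the paper takes. The paper gives no formal proof. It justifies the theorem only by remarking that the dilated wavepacket moves rightward in $p$ at speed $\betaI$, so the domain must reach $R\ge\betaI T$ and extraction there costs $e^{2\betaI T}$, with the per-run factor taken from JLY. Your Steps 1--3 make this explicit.

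Two small completions are needed. First, your final bound $e^{2\betaI T}/\|e^{-i\Heff T}\ket{\psi_0}\|^2\le e^{2\betaI T}$ uses $\|e^{-i\Heff T}\ket{\psi_0}\|\ge 1$. This holds because $\tfrac{d}{dt}\|u\|^2=2\braket{u|\HI|u}\ge 0$. Second, the per-eigenspace transport picture in Step 2 is literally valid only when $[\HR,\HI]=0$. In the non-commuting case, which is the one of interest, use finite propagation speed for the symmetric hyperbolic system $\partial_t w+\HI\,\partial_p w+i\HR\,w=0$, whose characteristic speeds lie in $[0,\betaI]$. For $p\ge\betaI T$ the domain of dependence $[p-\betaI T,p]$ lies in $p\ge 0$, where the data equal $e^{-p}\ket{\psi_0}$. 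Since $e^{-p}e^{-i\Heff t}\ket{\psi_0}$ solves the system exactly, $w(T,p)=e^{-p}e^{-i\Heff T}\ket{\psi_0}$ there. That is the sufficiency the theorem needs. The necessity of $\betaI T$ over $\omega T$, which you call the main obstacle, matters only for Corollary~\ref{cor:schrod_tier3}.

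Step 1, however, has two genuine problems.

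First, $N_\xi=\poly(\log(\betaI T/\eps))$ contradicts Step 2. A grid of $N_\xi$ points with cutoff $\xi_{\max}$ has $p$-period $\pi N_\xi/\xi_{\max}$. That period must be at least of order $\betaI T$ to contain the extraction window without wrap-around (this is the paper's $R\ge\betaI T$), so $N_\xi$ grows linearly in $\betaI T$. This costs only $\calO(\log N_\xi)$ qubits and leaves the query count unchanged.

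Second, the claim that the polylog absorbs $\xi_{\max}$ needs more than a citation. Postselecting on an event of probability $P\approx e^{-2\betaI T}\|e^{-i\Heff T}\ket{\psi_0}\|^2$ turns an error $\delta$ in the dilated state into an error of up to $2\delta/\sqrt{P}$ in the output. The JLY smooth-data estimates bound the unnormalized dilated state, and they need the smooth extension, not the even one. So they must be invoked at $\delta\approx\eps\sqrt{P}$, where $\log(1/\delta)$ can be as large as $\betaI T+\log(1/\eps)$.

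This is harmless for the QSVT precision, which enters additively. But $\xi_{\max}$ multiplies $\betaI T$ through the norm $\alphaR+\betaI\xi_{\max}$. The generic argument therefore gives $\xi_{\max}=\poly(\betaI T+\log(1/\eps))$, not $\poly(\log(\betaI T/\eps))$. To obtain the stated per-run factor you must do one of two things. Either show that the discretization error landing in the window $p\ge\betaI T$ is itself $\calO(\eps\sqrt{P})$, or read $\eps$ as the error of the dilated state before postselection. The paper's one-line justification does not address this either.
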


\begin{corollary}\label{cor:schrod_tier3}
Schr\"odingerization achieves exactly Tier~3 (walk-operator barrier cost $e^{2\betaI T}$), matching but not improving upon standard block-encoding methods for growth problems with $\HI \succeq 0$.
\end{corollary}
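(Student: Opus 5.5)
The corollary follows by combining Theorem~\ref{thm:schrod} with the tier structure of Theorem~\ref{thm:hierarchy}. We need two facts: an upper bound, namely that Schr\"odingerization pays no more than $e^{2\betaI T}$ in postselection overhead; and a lower bound, namely that it cannot pay less. Together these show it sits exactly at Tier~3 and not at the intrinsic Tiers~1--2.

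\textbf{Upper bound.} We read it off Eq.~\eqref{eq:schrod_cost}. The cost factorizes as $e^{2\betaI T}$ times a factor $\widetilde{\calO}((\alphaR+\betaI)T\,\poly\log(\betaI T/\eps))$. The second factor is polynomial, so it contributes only $e^{o(T)}$ to the exponential rate. The effective success probability per run is therefore $e^{-2\betaI T - o(T)}$ times the state-dependent factor $\|e^{-i\Heff T}\ket{\psi_0}\|^2$, which matches the Tier~3 expression $P \le e^{-2\betaI T}\|e^{-i\Heff T}\ket{\psi_0}\|^2$.

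\textbf{Lower bound.} We must show the exponent $2\betaI T$ in Eq.~\eqref{eq:schrod_cost} is not an artifact of a loose analysis. The plan is to identify the origin of $e^{\betaI T}$ in the warped-phase construction. After the transformation $w = e^{-p}v$, the evolution in the auxiliary variable $p$ is a transport with speeds given by the eigenvalues of $\HI$. Recovery requires projecting onto $p \ge \lambda_+ t$ with $\lambda_+ = \betaI$, the largest such speed. The normalization of the $L^2(\R)$ component restricted to that region scales as $e^{-\betaI T}$ in amplitude, since the initial profile $e^{-|p|}$ is evaluated at $p=\betaI T$. This holds independently of $\HR$, so the normalization $\lambda$ effectively equals $e^{\betaI T}$ even when $\omega<\betaI$. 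Consequently, in the non-commuting case the method does not reach $e^{2\omega T}$.

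\textbf{Main obstacle.} The delicate step is ruling out that a cleverer choice of extraction point, such as $p=\omega t$, achieves Tier~2. I would argue as follows. The Schr\"odingerized Hamiltonian is built from $\HR$ and $\HI$ separately, with $\HI$ acting as the $p$-derivative coefficient. It is therefore a separate-oracle access model in the sense of Corollary~\ref{cor:fci}. The spectral point $\eta=\betaI$ of $\HI$ is present regardless of $\HR$, and the contraction argument of Theorem~\ref{thm:contraction} then forces $\lambda\ge e^{\betaI T}(1-\calO(\eps))$. This reduction is made at the level of which eigencomponents of $\HI$ are correctly recovered; extracting below $\betaI t$ incurs $\Omega(1)$ error on the top eigenspace. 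Combined with the upper bound, this yields ``exactly Tier~3'', and hence the claim that Schr\"odingerization matches but does not improve on block-encoding methods.
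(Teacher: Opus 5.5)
Your reading of Eq.~\eqref{eq:schrod_cost}, and your tracing of the $e^{\betaI T}$ factor to extraction at $p=\lambda_+T=\betaI T$, are essentially all the paper offers. At that point the $e^{-|p|}$ profile has amplitude $e^{-\betaI T}$ irrespective of $\HR$; compare the paragraph after the corollary, where the $p$-space wavepacket moves at speed $\betaI$, so $R\ge\betaI T$. The corollary concerns the JLY protocol of Theorem~\ref{thm:schrod} with its prescribed extraction point, so this already suffices.

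\textbf{The gap.} The problem is the step you call the main obstacle and rely on for ``exactly'': Corollary~\ref{cor:fci} does not apply to Schr\"odingerization. Its proof needs the implemented block to be $f(W_R,U_I)$, with $f$ a scalar function bounded on all of $\bbT^2$. The bound $\lambda\ge e^{\betaI T}$ then comes from evaluating $f$ at the torus point $z_2=1$. The Schr\"odingerization block is instead a compression of a unitary on $\cH_s\otimes L^2(\R)$, generated jointly by $\HR$ and $\HI\otimes(-i\partial_p)$, so there is no bitorus.

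``Built from $\HR$ and $\HI$ separately'' cannot be the criterion. Separate block encodings combine by LCU into a block encoding of $\Heff$. Your reasoning would then force $\lambda\ge e^{\betaI T}$ on every algorithm with access to $\HR$ and $\HI$, including the direct-access constructions that Sec.~\ref{subsec:direct_access} and Problem~\ref{prob:direct} explicitly leave open.

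Applied to the actual block, Theorem~\ref{thm:contraction} yields only Tier~2, $\lambda\ge\|e^{-i\Heff T}\|_\op=e^{\omega T+o(T)}$. By Proposition~\ref{prop:opnorm}, when $\HR$ has no eigenvector in the $\betaI$-eigenspace of $\HI$, nothing in the target has size $e^{\betaI T}$. Your step therefore conflates exactly the two tiers that Theorem~\ref{thm:hierarchy} separates.

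\textbf{What would work instead.} To exclude a lower extraction point, the argument has to live inside the transport equation $\partial_t w=-\HI\partial_p w-i\HR w$.
\begin{enumerate}
\item Finite propagation speed $\le\betaI$ gives $w(T,p)=e^{-p}e^{-i\Heff T}\ket{\psi_0}$ for $p\ge\betaI T$.
\item Just below that line, take a top eigenvector $\ket{e}$ of $\HI$ (assume a simple top eigenvalue). The kink of $e^{-|p|}\ket{\psi_0}$ at $p=0$, projected onto $\ket{e}$, propagates along $p=\betaI t$ as a jump $a(t)\ket{e}$ in $\partial_p w$.
\item This jump satisfies $\dot a=-i\braket{e|\HR|e}\,a$, so it is undamped: $|a(T)|=2|\braket{e|\psi_0}|$.
\item Hence for $p$ slightly below $\betaI T$ the solution differs from $e^{-p}e^{-i\Heff T}\ket{\psi_0}$ by about $(\betaI T-p)\,a(T)\ket{e}$. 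The signal there is at most $e^{(\omega-\betaI)T+o(T)}$.
\end{enumerate}
So lowering the acceptance boundary admits contamination exponentially larger than the signal it was meant to gain, whenever $\omega<\betaI$ and $\braket{e|\psi_0}\neq0$. That is what your sentence about ``$\Omega(1)$ error on the top eigenspace'' needs, and it does not follow from the walk-operator corollary.
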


The wavepacket propagates rightward with velocity $\betaI$ in the auxiliary $p$-space, requiring domain size $R \geq \betaI T$ and hence $e^{2\betaI T}$ normalization.
This confirms that dilational embedding (Schr\"odingerization) does not bypass the walk-operator barrier.

\begin{remark}[Branch selection in Schr\"odingerization]\label{rem:sign_error}
The protocol~\cite{jin2024quantum} extracts at the moving point $p = \lambda_+ t = \betaI t$, selecting the growing branch of the dilated wavefunction.
Extracting the $p = 0$ Fourier mode yields the decay solution $e^{(-i\HR - \HI)T}\ket{\psi_0}$ instead\footnote{The error arises because $-i \times (-i) = -1$, not $+1$.}, because the Fourier kernel at $p = 0$ projects onto the decaying branch.
\end{remark}

\subsection{Direct-access normalization on restricted domains}
\label{subsec:direct_access}

We now ask whether a polynomial construction outside the walk-operator framework can achieve the intrinsic barrier $e^{-2\omega T}$ rather than the walk-operator barrier $e^{-2\betaI T}$.
The walk-operator obstruction (Corollary~\ref{cor:fci}) follows from $|f(e^{i\theta_1}, e^{i\theta_2})| \leq 1$ being required on $\bbT^2$, including the spectral point $z_2 = 1$ where $V(T)$ evaluates to $e^{\betaI T}$.
A direct-access construction sidesteps this by restricting the polynomial to a strict subset of $\bbT^2$ matching the physical spectrum of the joint operator.

\begin{theorem}[Restricted-domain normalization]\label{thm:restricted}
Let $\Omega \subset \bbT^2$ be the smallest closed set such that the action of any polynomial $f \in \cP_{d_R, d_I}^+$ on $\Omega$ determines its action on the physical eigenstates of $\Heff$ via joint walk-operator eigendecomposition.
Let $f: \bbT^2 \to \C$ be a polynomial of bidegree $(d_R, d_I)$ with $|f| \leq 1$ on $\Omega$ (rather than on all of $\bbT^2$) that approximates $e^{-i\Heff T}/\lambda$ on the physical eigenspace to error $\eps$.
The minimum normalization is
\begin{equation}\label{eq:lambda_restricted}
  \lambda_{\mathrm{restricted}} = e^{\omega T}(1 + o(1)) \qquad \text{as } d_R, d_I \to \infty,
\end{equation}
matching the Tier~1 intrinsic barrier exactly. A concrete characterization of $\Omega$ for non-commuting $(\HR, \HI)$ is not attempted here; for commuting pairs $\Omega$ reduces to the Cartesian product of $\HR/\alphaR$ and $\HI/\betaI$'s marginal spectra. 
In general, $\Omega$ is a proper subset of any such product (App.~\ref{app:barrier_diagnostics}).
\end{theorem}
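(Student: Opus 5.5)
The plan is to prove the two inequalities $\lambda_{\mathrm{restricted}} \geq e^{\omega T}(1-o(1))$ and $\lambda_{\mathrm{restricted}} \leq e^{\omega T}(1+o(1))$ separately. The lower bound comes essentially for free from the hierarchy. The upper bound needs a construction, and that is where the real work lies.

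\emph{Lower bound.} Any $f$ with $|f| \leq 1$ on $\Omega$ acts on the physical eigenstates through the joint walk-operator eigendecomposition, which is the defining property of $\Omega$. Hence the implemented operator $A = f(W_R, U_I)$, restricted to the physical subspace, still arises as a block of a unitary circuit. Theorem~\ref{thm:contraction} then gives $\|A\|_\op \leq 1$. Since $A$ approximates $e^{-i\Heff T}/\lambda$ to within $\eps$, this forces $\lambda \geq \|e^{-i\Heff T}\|_\op(1-\calO(\eps))$. This is exactly the Tier~2 argument of Theorem~\ref{thm:hierarchy}, and by Proposition~\ref{prop:opnorm}(ii) it yields $\lambda \geq e^{\omega T + o(T)}$.

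\emph{Upper bound.} Here I would exhibit, for each $\lambda > \|e^{-i\Heff T}\|_\op$, a sequence of bidegree-$(d_R,d_I)$ polynomials $f_d$ that are bounded by $1$ on $\Omega$ and converge to the normalized propagator. The construction has three steps.

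First, take the Dyson polynomial $P_{\mathrm{Dyson}}$ of the companion paper, scaled by $1/\lambda$ in place of $e^{-\betaI T}$. On the physical eigenspace it converges to $e^{-i\Heff T}/\lambda$ as $d_R, d_I \to \infty$, with error of order $\lambda^{-1}$ times the Bessel and Taylor tails.

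Second, use the minimality of $\Omega$. Because $\Omega$ is the smallest set seen by the physical eigenstates, $\sup_\Omega |f|$ is controlled by the operator norm of the induced operator $f(W_R,U_I)$ on the physical subspace. Applied to $P_{\mathrm{Dyson}}/\lambda$, this norm is $\|e^{-i\Heff T}\|_\op/\lambda + \calO(\eps)$.

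Third, choose $\lambda = \|e^{-i\Heff T}\|_\op(1+\calO(\eps))$. With this choice $\sup_\Omega|f| \leq 1$, and letting $\eps \to 0$ as the degrees grow gives $\lambda \to e^{\omega T}(1+o(1))$ via Proposition~\ref{prop:opnorm}(ii).

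\emph{Main obstacle.} The crux is the second step: identifying $\sup_\Omega |f|$ with the operator norm on the physical subspace. For non-commuting $(\HR,\HI)$ the operators $W_R$ and $U_I$ are not simultaneously diagonalizable, so there is no honest joint spectrum, and a pointwise bound on $\Omega$ need not control the operator norm. I would therefore define $\Omega$ operationally, for instance as a joint spectral set or the support of a spectral measure for which the von Neumann-type inequality $\|f(W_R,U_I)|_{\mathrm{phys}}\| \leq \sup_\Omega |f|$ holds with equality for the relevant $f$. Under that definition the claim reduces to the two bounds above.

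If this inequality fails, a fallback is to replace the pointwise constraint on $\Omega$ by the operator-norm constraint directly. The theorem's content is then precisely that the obstruction disappears once the point $z_2 = 1$ is excluded, which I would justify from the zero-locus argument of Corollary~\ref{cor:fci}. This is also why extending the construction to all of $\bbT^2$ remains out of reach: on the full bitorus the point $z_2=1$ is present and Corollary~\ref{cor:fci} reinstates $\lambda \geq e^{\betaI T}$.
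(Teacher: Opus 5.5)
You are right that step~2 is the crux, but the fix you propose cannot work, and it omits the mechanism the paper's sketch relies on. By the zero-locus property used in Corollary~\ref{cor:fci}, the Dyson polynomial rescaled by $1/\lambda$ satisfies $|f(z_1,1)| = e^{\betaI T}/\lambda$, up to truncation error, for every $z_1$. So $\sup_\Omega|f|\le 1$ with $\lambda<e^{\betaI T}$ is possible only if $\Omega$ misses the entire circle $\{z_2=1\}$. The paper's sketch makes exactly this excision its central step (restriction ``removes the spectral point $z_2=1$ from the constraint set'') and then appeals to a Saff--Varga-type extremal argument on $\Omega$. Your minimality argument never supplies the excision, and your operational definition rules it out. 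Suppose $\Omega$ is a joint spectral set, i.e.\ $\|h(W_R,U_I)|_{\mathrm{phys}}\|_\op\le\sup_\Omega|h|$ for all $h$ of bidegree at most $(d_R,d_I)$. Test this with $h=((1+z_2)/2)^m$, $m\le d_I$. Then $h(U_I)$ converges to the spectral projector of $U_I$ at $1$, which is nonzero on the physical subspace because $z_2=1$ corresponds to the $\betaI$-eigenspace of $\HI$. But $\sup_\Omega|h|\to 0$ exponentially in $m$ if the closed set $\Omega$ avoids $\{z_2=1\}$. Hence $\Omega$ meets that circle, and for your candidate the constraint forces $\lambda\ge e^{\betaI T}(1-\calO(\eps))$, which is the walk-operator barrier again. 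The lower bound has a separate flaw. An $f$ bounded by $1$ only on $\Omega$ is not known to be the $(0,0)$-block of any unitary; that is precisely the open extension problem $\xi(f^*)$ of Sec.~\ref{subsec:unitarity_obstruction}. So Theorem~\ref{thm:contraction} cannot be invoked. The paper instead reads the lower bound directly off $|f|\le 1$ on $\Omega$, evaluated on the physical spectrum.

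Second, your final step conflates two different quantities. Your lower-bound mechanism (contraction, or a von Neumann inequality) and your fallback (imposing the operator-norm constraint directly, i.e.\ Tier~2) both give $\lambda\ge\|e^{-i\Heff T}\|_\op(1-\calO(\eps))$. Proposition~\ref{prop:opnorm}(ii) does not turn $\|e^{-i\Heff T}\|_\op(1+\calO(\eps))$ into $e^{\omega T}(1+o(1))$ as $d_R,d_I\to\infty$. Part~(ii) is an asymptotic in $T$, and its correction (the factor $\kappa(S)^2\calO(T^{n-1})$ in Eq.~\eqref{eq:opnorm_bound}) sits in the exponent as $o(T)$ and does not depend on the degrees. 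At fixed $T$ the ratio $\|e^{-i\Heff T}\|_\op/e^{\omega T}$ is a degree-independent constant. Whenever $\omega<\betaI$ it is strictly larger than $1$, at least for small $T$: the logarithmic norm $\betaI$ of Eq.~\eqref{eq:lognorm} gives $\|e^{-i\Heff T}\|_\op=1+\betaI T+\calO(T^2)$, against $e^{\omega T}=1+\omega T+\calO(T^2)$. So under any operator-norm reading of the constraint, the stated value $e^{\omega T}(1+o(1))$ is not merely unproved but false. The value $e^{\omega T}$ itself arises only from a constraint that sees $\spec(\Heff)$ pointwise, where $\max_{z\in\spec(\Heff)}|e^{-izT}|=e^{\omega T}$, and such a constraint does not control operator norms. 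The paper's sketch also writes the physical-spectrum maximum as $\|e^{-i\Heff T}\|_\op/\lambda=e^{\omega T+o(T)}/\lambda$. Its conclusion $\lambda\ge e^{\omega T}(1-\calO(\eps))$ is a valid lower bound because $\|e^{-i\Heff T}\|_\op\ge e^{\omega T}$, but it does not justify a $(1+o(1))$ statement in the degrees either.
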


\begin{proof}[Proof sketch]
We adapt the Bernstein-extremal polynomial argument of~\cite{saff1978zeros} to the bivariate restricted domain $\Omega$. Restriction removes the spectral point $z_2 = 1$ from the constraint set, eliminating $|f(z_1, 1)| = e^{\betaI T}/\lambda$, which forced $\lambda \geq e^{\betaI T}$ in the walk-operator setting.
On $\Omega$, the maximum of $|V(T)/\lambda|$ over the physical spectrum is $\|e^{-i\Heff T}\|_\op/\lambda = e^{\omega T + o(T)}/\lambda$ by Proposition~\ref{prop:opnorm}, so $|f| \leq 1$ on $\Omega$ requires $\lambda \geq e^{\omega T}(1 - \calO(\eps))$. Full proof in Appendix~\ref{app:barrier_diagnostics}.
\end{proof}

\begin{proposition}[Scalar no-go]\label{prop:scalar_nogo}
For $n = 1$ (scalar Hamiltonians, $\HR, \HI \in \R$), $\omega(\Heff) = h_I = \betaI$, and Theorem~\ref{thm:restricted} gives no advantage over walk-operator normalization.
\end{proposition}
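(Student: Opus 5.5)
For $n=1$ the whole statement collapses to two scalar identities, so the plan is simply to evaluate everything explicitly. With $\HR = h_R \in \R$ and $\HI = h_I \geq 0$, the operator $\Heff = h_R + i h_I$ is a $1\times 1$ matrix. Its unique eigenvalue is $\lambda_1 = h_R + i h_I$, so directly from the definition in Remark~\ref{rem:gap} and Proposition~\ref{prop:opnorm}:
\[
  \omega(\Heff) = \mathrm{Im}(\lambda_1) = h_I .
\]
On the other side, $\betaI = \|\HI\|_\op = |h_I| = h_I$ because $h_I \geq 0$. This gives $\omega = h_I = \betaI$.

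As a consistency check I would also invoke Proposition~\ref{prop:opnorm}(iii). Every nonzero vector of $\C^1$ is simultaneously an eigenvector of $\HR$ and of $\HI$, and the $\betaI$-eigenspace of $\HI$ is all of $\C^1$. A shared eigenvector therefore always exists, which again forces $\omega = \betaI$. Equivalently, $[\HR,\HI] = 0$ holds trivially for scalars, and this is the sufficient condition recorded in Theorem~\ref{thm:hierarchy}.

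Next I would compare the two normalizations directly. By Theorem~\ref{thm:restricted}, $\lambda_{\mathrm{restricted}} = e^{\omega T}(1+o(1))$. By Corollary~\ref{cor:fci}, the walk-operator normalization satisfies $\lambda \geq e^{\betaI T}(1 - \calO(\eps))$. Substituting $\omega = \betaI$ makes the two coincide to leading order. The gap factor of Eq.~\eqref{eq:tier_gap} is then
\[
  e^{(\betaI - \omega)T} = 1 .
\]
So restricting the domain yields no exponential improvement in success probability, and Tiers 2 and 3 of Theorem~\ref{thm:hierarchy} coincide.

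For a sharper, non-asymptotic version, note that $\|e^{-i\Heff T}\|_\op = |e^{-i(h_R + i h_I)T}| = e^{h_I T}$ exactly. The $o(T)$ correction from Proposition~\ref{prop:opnorm}(ii) therefore vanishes, and the intrinsic Tier~1 bound equals the walk-operator bound identically.

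The only point needing care, which is mild, is the interpretation of $\Omega$ in Theorem~\ref{thm:restricted}. In the scalar case the pair is commuting, so by that theorem $\Omega$ is the product of the marginal spectra. The $U_I$ marginal includes the spectral point $z_2 = 1$, at which $V(T) = e^{h_I T}$. Hence the restricted constraint set still contains the point that drove Corollary~\ref{cor:fci}. I would state this explicitly so the argument does not rest solely on the asymptotic formula for $\lambda_{\mathrm{restricted}}$.
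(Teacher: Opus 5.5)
Your proposal is correct and follows essentially the same route as the paper's proof. Both compute the single eigenvalue $h_R + i h_I$ to get $\omega(\Heff) = h_I = \betaI$ (using $h_I \ge 0$), and both observe that the restricted domain $\Omega$ still contains the spectral point $z_2 = 1$ that drives the walk-operator obstruction. Your extra checks (the exact norm $\|e^{-i\Heff T}\|_\op = e^{h_I T}$, the unit gap factor, and the consistency check via Proposition~\ref{prop:opnorm}(iii)) are valid but go beyond what the paper's two-line argument needs.
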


\begin{proof}
For $n = 1$, operators commute trivially and $\Heff = h_R + ih_I$ has spectrum $\{h_R + ih_I\}$, so $\omega(\Heff) = h_I = \|h_I\| = \betaI$. The restricted domain $\Omega$ then contains the same spectral point that drives the walk-operator obstruction.
\end{proof}

\begin{proposition}[Commuting no-go]\label{prop:commuting_nogo}
For $[\HR, \HI] = 0$, $\omega(\Heff) = \betaI$ (Proposition~\ref{prop:opnorm}(iii) with the $\Leftarrow$ direction), and Theorem~\ref{thm:restricted} gives no advantage.
\end{proposition}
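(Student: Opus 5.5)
The plan has two parts: first show $\omega(\Heff) = \betaI$ for commuting pairs, then show that Theorem~\ref{thm:restricted} yields a normalization no smaller than the walk-operator bound of Corollary~\ref{cor:fci}.

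\emph{Step 1 (commuting implies $\omega = \betaI$).} Assume $[\HR,\HI]=0$. Since both operators are Hermitian and commute, they are simultaneously unitarily diagonalizable. Let $\mathcal{E}_{\betaI}$ be the $\betaI$-eigenspace of $\HI$; it is nonzero because $\HI \succeq 0$ and $\betaI = \|\HI\|_\op = \lambda_{\max}(\HI)$. Commutation makes $\mathcal{E}_{\betaI}$ invariant under $\HR$. The restriction $\HR|_{\mathcal{E}_{\betaI}}$ is Hermitian, so it has an eigenvector $\ket{\psi}\in\mathcal{E}_{\betaI}$ with real eigenvalue $a$. This $\ket{\psi}$ is a shared eigenvector in the $\betaI$-eigenspace, and the ($\Leftarrow$) direction of Proposition~\ref{prop:opnorm}(iii) gives $\omega(\Heff)=\betaI$. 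Equivalently and more directly, $\Heff\ket{\psi}=(a+i\betaI)\ket{\psi}$ gives $\omega\ge\betaI$, and Eq.~\eqref{eq:rayleigh_bound} gives the reverse inequality.

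\emph{Step 2 (no advantage).} Substitute $\omega=\betaI$ into Eq.~\eqref{eq:lambda_restricted}. This gives $\lambda_{\mathrm{restricted}} = e^{\betaI T}(1+o(1))$, which matches the Tier~3 bound $\lambda \ge e^{\betaI T}(1-\calO(\eps))$ to leading exponential order. The gap factor $e^{2(\betaI-\omega)T}$ from Theorem~\ref{thm:hierarchy} therefore equals $1$.

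To make this concrete at the level of the domain rather than only through the asymptotic formula, use the fact recorded in Theorem~\ref{thm:restricted} that for commuting pairs $\Omega$ is the Cartesian product of the marginal walk-operator spectra. The point $(z_1, 1)$, with $z_2=1$ corresponding to the $\betaI$ eigenvalue of $\HI/\betaI$ and $z_1$ a spectral point coming from the eigenvalue $a$ of $\HR$ on $\ket{\psi}$, then lies in $\Omega$. At that point the target $e^{-i\Heff T}/\lambda$ has modulus $e^{\betaI T}/\lambda$. The constraint $|f|\le 1$ together with $\eps$-approximation then forces $\lambda \ge e^{\betaI T}(1-\calO(\eps))$. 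This is exactly the argument of Corollary~\ref{cor:fci}, now run on $\Omega$ rather than on all of $\bbT^2$. This mirrors the scalar case, Proposition~\ref{prop:scalar_nogo}, applied eigenvector by eigenvector.

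\emph{Main obstacle.} Step 1 is routine. The delicate part is justifying that the obstructing spectral point actually belongs to $\Omega$, because $\Omega$ is only characterized abstractly. I would deal with this by relying on the commuting-case product description stated in Theorem~\ref{thm:restricted}. I would also check that the walk-operator eigenphase associated with $\HI$-eigenvalue $\betaI$ is indeed the point $z_2=1$ used in the zero-locus property of the companion paper.
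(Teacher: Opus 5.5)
Your proposal is correct and follows the paper's route. Commutation gives a shared eigenvector in the $\betaI$-eigenspace of $\HI$, so the ($\Leftarrow$) direction of Proposition~\ref{prop:opnorm}(iii) yields $\omega(\Heff)=\betaI$, and substituting this into Theorem~\ref{thm:restricted} reduces $\lambda_{\mathrm{restricted}}$ to $e^{\betaI T}(1+o(1))$, so there is no gain over the walk-operator normalization. Your extra domain-level check repeats the paper's argument for Proposition~\ref{prop:scalar_nogo}, and it does not need the Cartesian-product description of $\Omega$: your $\ket{\psi}$ is itself a physical joint eigenstate, so its joint eigenphase pair $(z_1,1)$ lies in $\Omega$ by the defining property of $\Omega$.
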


\begin{theorem}[Non-commutativity dichotomy]\label{thm:dichotomy}
$\lambda_{\mathrm{restricted}} < e^{\betaI T}$ holds if and only if no eigenvector of $\HR$ lies in the $\betaI$-eigenspace of $\HI$, equivalently iff $\omega(\Heff) < \betaI$. Non-commutativity $[\HR, \HI] \neq 0$ is necessary but not sufficient: there exist non-commuting pairs that nonetheless share an eigenvector in the $\betaI$-eigenspace. 
In this event, $\omega = \betaI$ and the restricted construction gives no advantage.
\end{theorem}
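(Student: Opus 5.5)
The plan is to reduce the dichotomy to two earlier results: Theorem~\ref{thm:restricted}, which gives $\lambda_{\mathrm{restricted}} = e^{\omega T}(1+o(1))$, and Proposition~\ref{prop:opnorm}(iii), which characterizes when $\omega(\Heff) = \betaI$. The argument splits into three steps. First, the normalization comparison reduces to comparing $\omega$ with $\betaI$. Second, $\omega = \betaI$ is characterized via shared eigenvectors. Third, an explicit non-commuting example shows that non-commutativity is not sufficient.

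\emph{Step 1 (reduction to $\omega$ versus $\betaI$).} Since $\omega \leq \betaI$ by Eq.~\eqref{eq:rayleigh_bound}, either $\omega = \betaI$ or $\omega < \betaI$.

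If $\omega < \betaI$, then
\[
\frac{e^{\omega T}(1+o(1))}{e^{\betaI T}} = e^{-(\betaI-\omega)T}(1+o(1)),
\]
which is $<1$ once $T$ or the degrees are large enough. Hence $\lambda_{\mathrm{restricted}} < e^{\betaI T}$ in the asymptotic sense of Theorem~\ref{thm:restricted}.

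If $\omega = \betaI$, I argue directly rather than through the $o(1)$ term. The joint eigenvector $\ket{\psi}$ with $\Heff\ket{\psi} = (a+i\betaI)\ket{\psi}$ gives $e^{-i\Heff T}\ket{\psi} = e^{-iaT}e^{\betaI T}\ket{\psi}$. Therefore $\|e^{-i\Heff T}\|_\op \geq e^{\betaI T}$, and combined with Proposition~\ref{prop:opnorm}(i) we get equality. Then the contraction constraint on $\Omega$, evaluated at the physical point corresponding to $\ket{\psi}$, forces $\lambda \geq e^{\betaI T}(1-\calO(\eps))$. This is the same argument as Proposition~\ref{prop:commuting_nogo}, with commutativity replaced by the existence of the shared eigenvector. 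So no strict improvement is possible, and the first equivalence follows.

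\emph{Step 2 (the eigenvector characterization).} The equivalence ``$\omega<\betaI$ iff no eigenvector of $\HR$ lies in the $\betaI$-eigenspace of $\HI$'' is the contrapositive of Proposition~\ref{prop:opnorm}(iii). A vector $\ket{\psi}$ in the $\betaI$-eigenspace $E$ of $\HI$ with $\HR\ket{\psi} = a\ket{\psi}$ is exactly a shared eigenvector in $E$, so the two formulations coincide.

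\emph{Step 3 (necessity and non-sufficiency of non-commutativity).} Necessity: if $[\HR,\HI]=0$, then $\HR$ preserves $E$ and, being Hermitian, has an eigenvector in $E$. Hence $\omega = \betaI$ and there is no advantage. Non-sufficiency: take $n=3$ with $\HI = \mathrm{diag}(1,0,0)$ (so $\betaI = 1$) and $\HR = \mathrm{diag}(0) \oplus \sigma_x$ acting on $\mathrm{span}\{e_1\} \oplus \mathrm{span}\{e_2,e_3\}$, perturbed by adding $\HI' = \mathrm{diag}(1,0,1/2)$ in place of $\HI$. Then $\HR$ mixes $e_2,e_3$, on which $\HI'$ is non-scalar, so $[\HR,\HI']\neq 0$. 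Yet $e_1$ is a joint eigenvector in the $1$-eigenspace of $\HI'$, so $\omega=\betaI$ and, by Step~1, the restricted construction gives no gain.

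The main obstacle I expect is the ``only if'' direction in Step~1. Theorem~\ref{thm:restricted} is stated only asymptotically and with an uncharacterized $\Omega$, so I would avoid relying on its $o(1)$ term there. Instead I would use the explicit joint eigenvector to exhibit the exact value $e^{\betaI T}/\lambda$ at a point of $\Omega$, assuming, as in the definition of $\Omega$, that it contains the joint-spectrum point of every physical eigenstate.
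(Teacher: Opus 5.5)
Your proposal is correct and follows essentially the same route as the paper, which likewise gets both directions by combining Theorem~\ref{thm:restricted} with Proposition~\ref{prop:opnorm}(iii) and settles non-sufficiency with a shared-eigenvector example. Your direct treatment of the $\omega=\betaI$ case through the joint eigenvector (which avoids relying on the sign-ambiguous $o(1)$ term) and your explicit $3\times3$ pair $\HR=0\oplus\sigma_x$, $\HI'=\mathrm{diag}(1,0,1/2)$ sharpen the paper's one-line argument; the $4\times4$ example that argument cites is not actually displayed in Appendix~\ref{app:advantage}.
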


\begin{proof}
Both directions follow from Theorem~\ref{thm:restricted} ($\lambda_{\mathrm{restricted}} = e^{\omega T}(1+o(1))$) combined with Proposition~\ref{prop:opnorm}(iii) (the equality case $\omega = \betaI$ characterized by a shared eigenvector in the $\betaI$-eigenspace). 
Non-sufficiency of $[\HR, \HI] \neq 0$ alone is illustrated by the construction in Appendix~\ref{app:advantage}: a $4 \times 4$ pair with $[\HR, \HI] \neq 0$ globally but sharing an eigenvector in $\ket{\psi_{\max}}$ where $\HI \ket{\psi_{\max}} = \betaI \ket{\psi_{\max}}$.
\end{proof}

\subsection{The unitarity obstruction and the central open problem}
\label{subsec:unitarity_obstruction}

Theorem~\ref{thm:restricted} shows that the restricted-domain normalization $\lambda_{\mathrm{restricted}} = e^{\omega T}$ is achievable in the abstract polynomial-on-$\Omega$ sense. Whether it is achievable as the $(0,0)$-block of a genuine quantum circuit is a separate question, because the circuit unitarity constraint $|f| \leq 1$ extends to all of $\bbT^2$, not just $\Omega$.

Concretely, given a restricted-domain polynomial $f^*: \Omega \to \C$ achieving $\lambda = e^{\omega T}$, any extension $\tilde{f}: \bbT^2 \to \C$ with $\tilde{f}|_\Omega = f^*$ and $\tilde{f}$ a $(d_R, d_I)$-polynomial must satisfy $|\tilde{f}| \leq 1$ on $\bbT^2$. A minimum such extension introduces an extension factor
\begin{equation}\label{eq:extension_factor}
  \xi(f^*) := \frac{\sup_{\bbT^2} |\tilde{f}^{\mathrm{min}}|}{\sup_\Omega |f^*|},
\end{equation}
and the achievable circuit normalization is $\lambda_{\mathrm{circuit}} = \xi(f^*) \cdot \lambda_{\mathrm{restricted}}$. Whether $\xi(f^*)$ remains bounded as $T \to \infty$, and which structural features of $(\HR, \HI)$ control its growth, are open questions.
Two limiting cases are immediate: when $\HR$ commutes with $\HI$ on the joint $\beta_I$-eigenspace of $\HI$ (so $\omega = \beta_I$), extension is trivial and $\xi = 1$. 
When the joint pseudospectrum of $(\HR, \HI)$ exhibits significant departure-from-normality concentrated near $z_2 = 1$, extension can be expected to incur a penalty. 
Whether the gap $e^{2(\betaI - \omega)T}$ is recoverable as a circuit advantage remains unresolved.

\begin{problem}[Direct-access polynomial construction]\label{prob:direct}
Construct a polynomial $f \in \cP_{d_R, d_I}^+$ \emph{ab initio} for the direct-access model, designed so $\sup_{\bbT^2} |f| = e^{\omega T}/\lambda$ and $f$ approximates $e^{-i\Heff T}/\lambda$ on the physical spectrum to error $\eps$, with bidegree $(d_R, d_I)$ polynomial in $\alphaR T, \betaI T, \log(1/\eps)$. 
\end{problem}

A positive resolution establishes exponential advantage of $e^{2(\betaI - \omega)T}$ in postselection probability. 
We conjecture this is achievable when the joint pseudospectrum of $(\HR, \HI)$ has no significant departure-from-normality concentrated near the spectral point $z_2 = 1$, and intractable otherwise; preliminary evidence is in Appendix~\ref{app:advantage}, but a complete characterization is open.

\subsection{Pseudospectral characterization}
\label{subsec:pseudospectral}

Numerical characterization of the spectral abscissa gap is given in Appendix~\ref{app:advantage}. 
Both non-normality $\delta_F$ and the commutator norm predict the gap with $R\approx 0.37$, however, projected-commutator variants (Appendix~\ref{app:predictors}) show promise with $R^2 \approx0.46$.
Kreiss constants $\mathcal{K} \in [1.32, 1.64]$ (Appendix~\ref{app:pseudospectra}) support the notion that pseudospectral
blowup is not the dominant mechanism, and the rank-stratified sampling of Table~\ref{tab:advantage} shows that the ratio
$\omega/\betaI$ is independent of $\rank(\HI)$ even though the advantage factor $e^{2(\betaI - \omega)T}$ depends strongly on rank through the gap magnitude.

\subsection{Many-body mean-field implementations and nonlinear qubits}
\label{subsec:nonlinear_qubit}

Postselection bounds are derived under the premise that the algorithm's output is measured on a single-qubit (or constant-ancilla) reduced state whose dynamics is constrained by single-qubit unitarity.
A qualitatively distinct escape route, complementary to both Schr\"odingerization and direct-access constructions, is offered by \emph{nonlinear-qubit} implementations in the sense of Geller~\cite{geller2023universe,geller2024spin}.

A central qubit symmetrically coupled to $n$ bosonic ancilla qubits, all initialized in the same pure state, evolves under linear unitary $(n+1)$-body dynamics; but in the large-$n$ mean-field limit, the reduced central-qubit state obeys an effective Gross--Pitaevskii-type one-body equation that is not unitary in the trace-norm sense (the Bloch ball undergoes Kitagawa--Ueda one-axis torsion).
An extension of the Erd\H{o}s--Schlein theorem due to Geller~\cite{geller2023universe} makes this duality rigorous: for the generalized central-spin model with $S_{n-1}$ permutation symmetry, the mean-field state $\rho_{\mathrm{eff}}(t)$ approximates the exact reduced state $\rho_1(t)$ with error bound
\begin{equation}\label{eq:geller_bound}
  \bigl\|\rho_{\mathrm{eff}}(t) - \rho_1(t)\bigr\|_1 \;\leq\; c\,\frac{e^{t/t_{\mathrm{ent}}} - 1}{n},
\end{equation}
where $t_{\mathrm{ent}} \sim 1/J_0$ is the entanglement-buildup timescale set by the inverse coupling strength.

\paragraph{Application to non-Hermitian simulation.}
The torsion-plus-dissipation construction of Geller~\cite{geller2024spin} implements two-basin attractor dynamics on the central qubit that is structurally analogous to non-Hermitian conditional evolution: convergence to an attractor replaces postselection on the trivial outcome.
For the simulation of $e^{-i\Heff T}$ with $\Heff = \HR + i\HI$, a nonlinear-qubit implementation would not measure success against the intrinsic barrier, since the central qubit's reduced dynamics is no longer single-qubit-unitary; the cost is instead absorbed into the atom-number requirement
\begin{equation}\label{eq:nonlinear_atom_count}
  n \;\gtrsim\; \frac{c}{\eps}\, e^{\calO(\betaI T)},
\end{equation}
necessary to maintain mean-field accuracy $\eps$ over evolution time $T$ at coupling $J_0 \sim \betaI$.

\paragraph{Resource accounting.}
The bound~\eqref{eq:nonlinear_atom_count} is the same exponential as the postselection cost $e^{2\betaI T}$ of Theorem~\ref{thm:contraction} (or its amplitude-amplified counterpart $e^{\betaI T}$), but redistributed from a per-run measurement axis to a one-time initialization.
For repeated simulation of the same $\Heff$, this is a strict improvement in per-run cost; for one-shot simulations, the total exponential resource cost is preserved.
The escape mechanism is qualitatively distinct from those of Schr\"odingerization, which redistributes the cost into a continuous ancilla register while preserving global unitarity and Sec.~\ref{subsec:unitarity_obstruction} (direct-access constructions, which require $\xi(f^*) = \calO(1)$ in the unitarity extension).

\paragraph{Practical scope.}
On BEC platforms with $n \sim 10^4$--$10^6$ atoms, equation~\eqref{eq:nonlinear_atom_count} admits $\betaI T$ up to $\sim 10$--$15$, comparable to the Eckart-barrier regime of~\cite{courtney2026paper1}, where $\betaI T \approx 16$ and the postselection probability is $\sim 10^{-14}$.
A nonlinear-qubit realization would replace this $10^{-14}$ per-run probability with a deterministic attractor convergence, at the cost of an initial BEC of $\sim 10^7$ atoms; experimental feasibility on two-component condensates (e.g., ${}^{39}$K with Feshbach-tuned scattering lengths~\cite{geller2024spin}, \S5) is platform-specific but not in principle excluded.

\begin{remark}[The intrinsic barrier as a single-qubit unitarity statement]
\label{rem:barrier_premise}
Equations~\eqref{eq:geller_bound}--\eqref{eq:nonlinear_atom_count} do not contradict the intrinsic barrier: the full $(n+1)$-body unitary evolution does satisfy the single-qubit unitarity premise on the joint Hilbert space, and the intrinsic barrier applies to the full $(n+1)$-body amplitude.
The mean-field reduction is a non-unitary projection from $(n+1)$-body coherent dynamics to one-body effective dynamics, and the "barrier evasion" is precisely the cost of that projection, paid in atom number rather than in measurement outcomes.
This reframes the postselection barrier as a statement about the relationship between the algorithm's reduced output state and its full physical implementation, rather than as an absolute bound on the success probability of any quantum protocol simulating $e^{-i\Heff T}$.
\end{remark}

\section{Conclusion}
\label{sec:conclusion}

Table~\ref{tab:status} summarizes the resolution status of all eight problems.

\begin{table}[t]
\centering
\caption{Resolution status of all eight open problems.}
\label{tab:status}
\small
\begin{tabular}{clc}
\toprule
\# & Problem & Status \\
\midrule
1 & $\log/\log\log$ gap & Fully resolved \\
2 & SOS rank & Fully resolved~\cite{courtney2026paper1} \\
3 & Optimization landscape & Substantially resolved \\
4 & Efficient angle-finding & Fully resolved \\
5 & Fast-forwarding & Fully resolved (negative) \\
6 & Constant factors & Fully resolved \\
7 & Time-dependent extension & Fully resolved \\
8 & Barrier applicability & Substantially resolved \\
\bottomrule
\end{tabular}
\end{table}

We carry forward two additional questions:

\medskip
\noindent\textbf{1. Polynomial basin radius for Dyson targets.}
The warm-start basin guarantee (Theorem~\ref{thm:basin}) requires a full-rank Jacobian, and the analytical basin radius decays polynomially with degree as $\rho_{\mathrm{analytical}} \sim d^{-4.62}$ (Eq.~\eqref{eq:rho_scaling}).
Whether the structured Dyson Jacobian admits a polynomial basin remains open.

\medskip
\noindent\textbf{2. Direct-access polynomial construction.}
A polynomial designed ab initio for the direct-access model that achieves $\lambda = e^{\omega T}(1+o(1))$ on the full bitorus (Problem~\ref{prob:direct}) would establish exponential advantage of $e^{2(\betaI - \omega)T}$ over the walk-operator construction.

\medskip
\noindent\textbf{Relationship to companion papers.}
This paper is intended to be read alongside the companion paper~\cite{courtney2026paper1}, which provides the M-QSP construction and the foundational lemmas that several proofs here cite directly.

\section{Code and data availability}
The codebase accompanying this paper and its companion will be made available at a public repository at the time of journal publication. It is available from the author on request.

\section{Conflicts of Interest}
J.M.C. declares that they have no known competing financial interests or personal relationships that could have appeared to influence the work reported in this paper.

\appendix
\onecolumngrid
\section{Landscape Investigation Tables}
\label{app:landscape}

We collect the full numerical data underlying the optimization landscape analysis of Sec.~\ref{sec:landscape} of the main text.
All verifications use the canonical Dyson target $f_c(w, z) = \sum_{k=0}^{d_I} (-1)^k J_{d_R}(\alphaR T) (\betaI T)^k / k! \cdot w^{d_R} z^k$ with $\alphaR T = \betaI T = 1$ (unless otherwise stated), optimized via L-BFGS-B with analytic gradients.
Each trial uses an independent random initialization drawn uniformly from $[-\pi, \pi]^{n_P}$.

\subsection{Random-initialization convergence survey}
\label{app:convergence_survey}

Table~\ref{tab:landscape_full} reports convergence data for the 4,170-trial survey across 15 bidegree configurations.
Each row corresponds to a fixed bidegree $(d_R, d_I)$ with $n_P = 2(d_R + d_I) + 2$ parameters, $n_C = (d_R + 1)(d_I + 1) - 1$ real constraints (from the unitarity condition $|P|^2 + |Q|^2 = 1$ on $\bbT^2$), overparameterization ratio $\text{OR} = n_C / n_P$, condition number $\kappa(J)$ of the Jacobian at the global minimum, convergence rate with Wilson score 95\% confidence intervals, and the number of confirmed spurious local minima (SLM).

\begin{table*}[t]
\centering
\caption{Landscape investigation: random-initialization convergence across 15 bidegree configurations.
Convergence criterion: $\|\cF(\bTheta)\|_2 < 10^{-10}$.
$n_P$: number of angle parameters; $n_C$: number of real constraints; OR: overparameterization ratio $n_C/n_P$; $\kappa(J)$: Jacobian condition number at global minimum; Conv: convergence rate; CI: Wilson score 95\% confidence interval; SLM: confirmed spurious local minima (lower bound).}
\label{tab:landscape_full}
\begin{tabular}{ccccccccl}
\toprule
$d_R$ & $d_I$ & $n_P$ & $n_C$ & OR & $\kappa(J)$ & Conv & 95\% CI & SLM \\
\midrule
1 & 1 & 6  & 8  & 0.75 & 8.2    & 68.3\% & [64.5, 71.9] & $\geq 30$ \\
1 & 2 & 8  & 12 & 0.67 & $\infty$ & 81.7\% & [76.9, 85.6] & $\geq 11$ \\
2 & 1 & 8  & 12 & 0.67 & $\infty$ & 82.2\% & [78.9, 85.0] & $\geq 33$ \\
2 & 2 & 10 & 18 & 0.56 & 15.5   & 62.2\% & [58.2, 66.0] & $\geq 30$ \\
2 & 3 & 12 & 24 & 0.50 & $\infty$ & 45.3\% & [39.8, 51.0] & $\geq 30$ \\
3 & 2 & 12 & 24 & 0.50 & $\infty$ & 62.7\% & [57.1, 67.9] & $\geq 26$ \\
3 & 3 & 14 & 32 & 0.44 & 43.6   & 48.3\% & [42.7, 54.0] & $\geq 25$ \\
4 & 3 & 16 & 40 & 0.40 & $\infty$ & 35.4\% & [29.6, 41.7] & $\geq 30$ \\
3 & 4 & 16 & 40 & 0.40 & $\infty$ & 42.5\% & [36.4, 48.8] & $\geq 25$ \\
4 & 4 & 18 & 50 & 0.36 & 82.5   & 55.0\% & [47.7, 62.1] & $\geq 18$ \\
5 & 4 & 20 & 60 & 0.33 & $\infty$ & 44.2\% & [35.6, 53.1] & $\geq 24$ \\
4 & 5 & 20 & 60 & 0.33 & $\infty$ & 32.5\% & [24.8, 41.3] & $\geq 30$ \\
5 & 5 & 22 & 72 & 0.31 & 1419   & 32.2\% & [23.5, 42.4] & $\geq 19$ \\
6 & 4 & 22 & 70 & 0.31 & 376    & 50.0\% & [39.9, 60.1] & $\geq 16$ \\
4 & 6 & 22 & 70 & 0.31 & 322    & 55.6\% & [45.3, 65.4] & $\geq 15$ \\
\midrule
\multicolumn{5}{l}{\textbf{Aggregate}} & --- & \textbf{59.8\%} & \textbf{[58.3, 61.2]} & $\boldsymbol{\geq 357}$ \\
\bottomrule
\end{tabular}
\end{table*}

\begin{remark}[Structural observations]
\label{rem:landscape_observations}

(i) \emph{Schedule dependence.}
Convergence rates depend on bidegree asymmetry, where configurations $(d_R, d_I)$ and $(d_I, d_R)$ exhibit different convergence rates (e.g., $(2,3)$ at 45.3\% vs.\ $(3,2)$ at 62.7\%), reflecting asymmetry of the walk operators $W_R$ and $U_I$ in the M-QSP circuit.

(ii) \emph{No overparameterization threshold.}
Spurious minima persist at all tested bidegrees, including the most overparameterized configuration $(1,1)$ with $\text{OR} = 0.75$.
Convergence rate does not monotonically increase with $n_P - n_C$.

(iii) \emph{Jacobian rank deficiency.}
Six of the 15 configurations exhibit $\kappa(J) = \infty$ at the global minimum, indicating that the Jacobian $J(\bTheta^*)$ is rank-deficient.
This rank deficiency does not preclude convergence (the L-BFGS-B algorithm still converges in 32--82\% of trials).
For these configurations, the warm-start basin guarantee of Theorem~\ref{thm:basin} in the main text applies only with the rank-deficiency caveat of Remark~\ref{rem:rank_deficiency}.
\end{remark}

\subsection{Full-tensor target and \texorpdfstring{$c_\infty$}{c-infinity} characterization}
\label{app:landscape_v2}

The single-row Dyson target of Table~\ref{tab:landscape_full} retains only the leading-Bessel row $k = d_R$ of the Chebyshev-Taylor coefficient tensor.
A physically faithful Dyson truncation populates the full tensor $c_{k,n} = \epsilon_k (-i)^k J_k(\alphaR T) (-\betaI T)^n / n!$ for $k \in [0, d_R]$, $n \in [0, d_I]$.
The full tensor is generally \emph{not} in the image of the M-QSP parameterization at a given bidegree, so the best-achievable residual $c_\infty = \inf_{\bTheta} \|\cF(\bTheta)\|^2_{\bbT^2}$ is strictly positive, quantifying the Dyson-truncation distance to the M-QSP achievable submanifold.

Table~\ref{tab:landscape_v2} reports an extended sweep that varies $(\alphaR T, \betaI T)$ over a $4 \times 4$ grid for each balanced bidegree $(d_R, d_I) \in \{(1,1), (2,2), (3,3), (4,4), (5,5)\}$, totaling 11{,}760 random-initialization trials across 80 cells.
$c_\infty$ is estimated robustly by running L-BFGS-B from 30 independent random initializations per cell and taking the minimum final cost; a trial counts as ``converged to $c_\infty$ basin'' if its final cost lands within $(1 + 0.01) \cdot c_\infty$ of the multistart minimum.
The aggregate convergence rate to the $c_\infty$ basin is $17.6\%$ ($n_{\mathrm{converged}} = 2{,}064$ of $11{,}760$, 95\% Wilson CI [$16.9\%$, $18.3\%$]), with $8{,}552$ confirmed spurious local minima.

\begin{table*}[t]
\centering
\caption{Landscape investigation, full-tensor target with $c_\infty$-relative convergence criterion.
Each row aggregates $16$ $(\alphaR T, \betaI T)$ cells on the $\{0.25, 0.5, 1, 2\}^2$ grid; $c_\infty$ values are medians.
Convergence criterion: $\|\cF(\bTheta)\|^2 < (1+0.01) \cdot c_\infty$ where $c_\infty$ is estimated via 30-restart multistart.
$\kappa(J)$ values are medians at the multistart-best parameters (with cells dominated by extreme-corner $(\alphaR, \betaI)$ inflating the tail).}
\label{tab:landscape_v2}
\begin{tabular}{ccccccccc}
\toprule
$d_R$ & $d_I$ & $n_P$ & OR & median $\kappa(J)$ & median $c_\infty$ & Conv & 95\% CI & SLM \\
\midrule
1 & 1 & 6  & 0.75 & $9.2$              & $1.0 \times 10^{-2}$ & $34.5\%$ & $[33.1, 36.0]$ & $2{,}682$ \\
2 & 2 & 10 & 0.56 & $1.2 \times 10^{2}$ & $1.3 \times 10^{-2}$ & $12.0\%$ & $[10.9, 13.2]$ & $2{,}496$ \\
3 & 3 & 14 & 0.44 & $1.4 \times 10^{4}$ & $1.2 \times 10^{-2}$ & $7.5\%$  & $[6.4, 8.7]$   & $1{,}587$ \\
4 & 4 & 18 & 0.36 & $1.5 \times 10^{5}$ & $1.4 \times 10^{-2}$ & $5.3\%$  & $[4.3, 6.6]$   & $1{,}123$ \\
5 & 5 & 22 & 0.31 & $3.6 \times 10^{6}$ & $1.9 \times 10^{-2}$ & $4.7\%$  & $[3.5, 6.1]$   & $664$ \\
\midrule
\multicolumn{4}{l}{\textbf{Aggregate (11{,}760 trials)}} & --- & --- & \textbf{17.6\%} & \textbf{[16.9, 18.3]} & $\boldsymbol{8{,}552}$ \\
\bottomrule
\end{tabular}
\end{table*}

\begin{remark}[Three new structural findings from the full-tensor sweep]
\label{rem:landscape_v2_observations}

(i) \emph{$c_\infty$ is essentially flat in bidegree.}
The median Dyson-truncation residual $c_\infty$ varies only between $1.0 \times 10^{-2}$ and $1.9 \times 10^{-2}$ across $d = 2$--$10$ at the surveyed $(\alphaR, \betaI)$ grid.
At fixed bidegree, $c_\infty$ scales with both $\alphaR T$ and $\betaI T$ as expected from the Bessel/Taylor coefficient magnitudes.
The slow growth with $d$ indicates that higher bidegree does not appreciably reduce the truncation residual unless $(\alphaR T, \betaI T)$ also grow.

(ii) \emph{$\betaI T = 1$ outperforms neighbouring rows.}
At every $\alphaR T$ in the $(1,1)$ block, the $\betaI T = 1$ row achieves $\sim 52$--$57\%$ convergence to the $c_\infty$ basin, compared with $\sim 25$--$30\%$ at $\betaI T \in \{0.25, 0.5, 2\}$.
This is consistent with a Dyson balance condition: at $\betaI T = 1$, the Taylor coefficients $(\betaI T)^n / n!$ decay neither too sharply nor too slowly relative to the truncation order $d_I$, so the target sits closer to a generic-magnitude M-QSP polynomial.
The convergence-rate distribution thus depends on the joint truncation balance of the Dyson series, not just on bidegree.

(iii) \emph{$c_\infty$-basin convergence rate decreases monotonically with $d$.}
Median basin-attainment drops from $30\%$ at $(1,1)$ to $4\%$ at $(5,5)$, with an aggregate of $17.6\%$ across 11{,}760 trials.
Both numerical rank deficiency (cf.\ Table~\ref{tab:landscape_full}) and the $c_\infty > 0$ truncation gap contribute.
Random initializations therefore must land within the narrow basin around the best-achievable approximation, and that basin narrows as $\kappa(J)$ grows exponentially.
Spurious local minima persist throughout the table, totalling $8{,}552$ across the sweep.
\end{remark}

\subsection{Warm-start basin convergence}
\label{app:warmstart}

Table~\ref{tab:warmstart} reports the convergence rate as a function of perturbation scale $\eps_{\mathrm{pert}}$ for each bidegree configuration.
Warm-start initialization is $\bTheta_0 = \bTheta^* + \eps_{\mathrm{pert}} \cdot \bm{\xi}$, where $\bTheta^*$ is the known global minimum and $\bm{\xi} \sim \mathcal{N}(0, I_{n_P})$.
Each entry is the convergence rate over $n_{\mathrm{trial}} = 100$ independent perturbations.

\begin{table*}[t]
\centering
\caption{Warm-start basin: convergence rate as a function of perturbation scale $\eps_{\mathrm{pert}}$.
Each entry reports the fraction of 100 independent trials converging to $\|\cF(\bTheta)\|_2 < 10^{-10}$.
All configurations achieve 100\% convergence at $\eps_{\mathrm{pert}} \leq 0.1$, confirming the warm-start basin guarantee.}
\label{tab:warmstart}
\begin{tabular}{cccccccc}
\toprule
$d_R$ & $d_I$ & $\eps_{\mathrm{pert}} = 0.01$ & $0.05$ & $0.1$ & $0.2$ & $0.5$ & $1.0$ \\
\midrule
1 & 1 & 100\% & 100\% & 100\% & 90\%  & 60\%  & 50\% \\
1 & 2 & 100\% & 100\% & 100\% & 100\% & 90\%  & 70\% \\
2 & 1 & 100\% & 100\% & 100\% & 100\% & 90\%  & 60\% \\
2 & 2 & 100\% & 100\% & 100\% & 90\%  & 70\%  & 30\% \\
2 & 3 & 100\% & 100\% & 100\% & 90\%  & 70\%  & 10\% \\
3 & 2 & 100\% & 100\% & 100\% & 90\%  & 50\%  & 20\% \\
3 & 3 & 100\% & 100\% & 100\% & 100\% & 20\%  & 0\% \\
4 & 3 & 100\% & 100\% & 100\% & 100\% & 50\%  & 0\% \\
3 & 4 & 100\% & 100\% & 100\% & 100\% & 50\%  & 20\% \\
4 & 4 & 100\% & 100\% & 100\% & 100\% & 60\%  & 30\% \\
5 & 4 & 100\% & 100\% & 100\% & 100\% & 50\%  & 10\% \\
4 & 5 & 100\% & 100\% & 100\% & 100\% & 30\%  & 10\% \\
5 & 5 & 100\% & 100\% & 100\% & 100\% & 10\%  & 0\% \\
6 & 4 & 100\% & 100\% & 100\% & 90\%  & 10\%  & 0\% \\
4 & 6 & 100\% & 100\% & 100\% & 100\% & 30\%  & 0\% \\
\bottomrule
\end{tabular}
\end{table*}

All configurations achieve 100\% convergence at $\eps_{\mathrm{pert}} \leq 0.1$, and the minimum rate at $\eps_{\mathrm{pert}} = 0.2$ is 90\% across all configurations.
The aggregate convergence rate at $\eps_{\mathrm{pert}} = 0.2$ is approximately 99.6\% (95\% CI: [98.9\%, 99.8\%]).
This supports the warm-start basin guarantee: given a sufficiently close initial point, a simple L-BFGS-B optimizer reliably converges to the global minimum regardless of the landscape structure.

The transition from reliable to unreliable convergence occurs in the range $\eps_{\mathrm{pert}} \in [0.2, 0.5]$ for most configurations, with higher-degree configurations exhibiting sharper transitions.
This is consistent with the analytical basin radius estimate $\rho \sim \kappa(J)^{-2}$ from the main text.

\subsection{Jacobian condition number scaling}
\label{app:kappa}

Table~\ref{tab:kappa_scaling} reports singular value statistics of the Jacobian $J(\bTheta^*)$ at the global minimum as a function of total degree $d = d_R + d_I$ (using balanced configurations $d_R = d_I = d/2$ for even $d$).
The fit column $\kappa_{\mathrm{fit}}$ is an exponential model $\kappa_{\mathrm{fit}} = 3.35 \cdot 1.61^d$ (refit from the extended workstation sweep, RMS log-residual: $0.38$); $\mu$ denotes the strong convexity parameter $\mu = \sigma_{\min}(J)^2$.

\begin{table*}[t]
\centering
\caption{Jacobian condition number scaling at the global minimum.
$\sigma_{\min}$, $\sigma_{\max}$: extreme singular values of $J(\bTheta^*)$ (medians over 100--200 trials per row); $\kappa(J) = \sigma_{\max}/\sigma_{\min}$; $\kappa_{\mathrm{fit}}$: exponential model $3.35 \cdot 1.61^d$ (RMS log-residual: $0.38$); $\mu = \sigma_{\min}^2$: strong convexity parameter of the Gauss--Newton Hessian.}
\label{tab:kappa_scaling}
\begin{tabular}{cccccc}
\toprule
$d$ & $\sigma_{\min}$ & $\sigma_{\max}$ & $\kappa(J)$ & $\kappa_{\mathrm{fit}}$ & $\mu$ \\
\midrule
2  & $1.5 \times 10^{-1}$ & $1.06$        & $6.9$              & $8.7$              & $2.2 \times 10^{-2}$ \\
4  & $7.3 \times 10^{-2}$ & $1.10$        & $1.4 \times 10^{1}$ & $2.3 \times 10^{1}$ & $5.4 \times 10^{-3}$ \\
6  & $1.96 \times 10^{-2}$ & $1.16$        & $7.7 \times 10^{1}$ & $5.9 \times 10^{1}$ & $3.8 \times 10^{-4}$ \\
8  & $1.5 \times 10^{-2}$ & $1.28$        & $9.4 \times 10^{1}$ & $1.5 \times 10^{2}$ & $2.3 \times 10^{-4}$ \\
10 & $5.1 \times 10^{-3}$ & $1.31$        & $2.9 \times 10^{2}$ & $4.0 \times 10^{2}$ & $2.6 \times 10^{-5}$ \\
\bottomrule
\end{tabular}
\end{table*}

\begin{remark}[Scaling analysis]
\label{rem:kappa_scaling}

(i) \emph{Minimum singular value.}
$\sigma_{\min}$ decreases as a power law in $d$: a least-squares fit on the workstation data gives $\sigma_{\min} \approx 0.78 \cdot d^{-2.02}$, equivalently $\sigma_{\min} \propto d^{-2}$ to within fit error.
This drives condition number growth.

(ii) \emph{Maximum singular value.}
$\sigma_{\max}$ grows only slowly: $\sigma_{\max} \approx 0.94 \cdot d^{0.14}$, a negligible contribution to $\kappa(J)$ relative to the $\sigma_{\min}$ decay.

(iii) \emph{Condition number model.}
Combining (i) and (ii) yields $\kappa(J) \approx 1.30 \cdot d^{2.31}$ as a power-law fit, or equivalently $\kappa(J) \approx 3.35 \cdot 1.61^d$ as an exponential fit; both fits have RMS log-residual $\sim 0.4$ on the surveyed $d \in [2, 10]$ range and are statistically indistinguishable in that window.
The Dyson polynomial's product structure (block-factored Chebyshev $\times$ Taylor coefficients) produces condition numbers $10^2$--$10^4 \times$ smaller than random-target polynomials of the same degree.

(iv) \emph{Basin radius implications.}
The analytical basin radius from the Gauss--Newton theory scales as
\begin{equation}\label{eq:basin_radius}
  \rho_{\mathrm{analytical}} \sim \frac{\mu}{\sigma_{\max}^2}
  \sim \frac{\sigma_{\min}^2}{\sigma_{\max}^2}
  = \kappa(J)^{-2} \sim d^{-4.62}.
\end{equation}
For $d = 30$, this gives $\rho_{\mathrm{analytical}} \sim 10^{-7}$, comfortably above machine epsilon.
Warm-start data (Table~\ref{tab:warmstart}) shows $100\%$ convergence at $\eps_{\mathrm{pert}} = 0.1$ even for $d = 10$--$12$, and indeed out to $d = 30$ in the basin-scaling sweep (workstation extension); the empirical basin radius substantially exceeds the analytical Gauss--Newton bound.
Three factors contribute:
(a) the Lipschitz bound used in the Gauss--Newton theory is a worst-case estimate over all directions, while optimization trajectory typically follows favorable directions;
(b) the L-BFGS-B algorithm has a larger region of attraction than the Newton basin;
(c) the basin geometry is highly non-isotropic (elongated along low-curvature directions), so the effective radius along typical perturbation directions far exceeds $\rho_{\mathrm{analytical}}$.

Whether $\rho \geq \Omega(\poly(1/d))$ for Dyson-type targets remains the principal open question from Sec.~\ref{sec:landscape}.
\end{remark}

\section{Supplementary Diagnostics for the Postselection Barrier}
\label{app:barrier_diagnostics}

This appendix supplements the barrier analysis of Sec.~\ref{sec:barriers} of the main text with three sets of numerical diagnostics: the spectral abscissa gap characterization, the defect operator verification, and the gap predictor comparison.

\subsection{Spectral abscissa gap and advantage factor}
\label{app:spectral_gap}

Table~\ref{tab:spectral_gap} reports the ratio $\omega(\Heff)/\betaI$ for random non-Hermitian Hamiltonians $\Heff = \HR + i\HI$ with $\HR, \HI$ drawn from the Gaussian Unitary Ensemble (GUE), normalized so that $\|\HR\|_\op = \|\HI\|_\op = 1$~\cite{mehta2004random}.
The spectral abscissa is $\omega(\Heff) = \max_j \mathrm{Im}(\lambda_j(\Heff))$.
The advantage factor is $e^{2(\betaI - \omega)T}$, representing normalization improvement achievable by a direct-access oracle over the walk-operator oracle.

\begin{table*}[t]
\centering
\caption{Spectral abscissa gap: ratio $\omega/\betaI$ for GUE-sampled Hamiltonians ($\alphaR = \betaI$, 1000 samples per dimension).
The advantage factor $e^{2(\betaI - \omega)T}$ quantifies potential normalization improvement of a direct-access construction over the walk-operator model.}
\label{tab:spectral_gap}
\begin{tabular}{ccccr}
\toprule
$n$ & $\betaI T$ & $\omega/\betaI$ (mean $\pm$ std) & Range & Advantage Factor \\
\midrule
4  & 5  & $0.782 \pm 0.090$ & $[0.58, 0.96]$ & 8.9 \\
4  & 10 & $0.782 \pm 0.090$ & $[0.58, 0.96]$ & 79 \\
4  & 20 & $0.782 \pm 0.090$ & $[0.58, 0.96]$ & $6.2 \times 10^{3}$ \\
8  & 5  & $0.717 \pm 0.106$ & $[0.44, 0.92]$ & 17 \\
8  & 10 & $0.717 \pm 0.106$ & $[0.44, 0.92]$ & 289 \\
8  & 20 & $0.717 \pm 0.106$ & $[0.44, 0.92]$ & $8.4 \times 10^{4}$ \\
16 & 5  & $0.699 \pm 0.081$ & $[0.53, 0.89]$ & 20 \\
16 & 10 & $0.699 \pm 0.081$ & $[0.53, 0.89]$ & 409 \\
16 & 20 & $0.699 \pm 0.081$ & $[0.53, 0.89]$ & $1.7 \times 10^{5}$ \\
\bottomrule
\end{tabular}
\end{table*}

\begin{remark}[Scale invariance]
\label{rem:scale_invariance}
The ratio $\omega/\betaI$ depends only on $(\HR, \HI)$ structure at fixed $\alphaR/\betaI$, independent of
$\betaI T$. 
$e^{2(\betaI - \omega)T}$ advantage amplifies with simulation time at fixed gap.
Magnitudes for typical regimes are tabulated in Remark~\ref{rem:gap} of the main text.
\end{remark}

\subsection{Defect operator verification}
\label{app:defect}

Table~\ref{tab:defect} verifies the contraction theorem (Theorem~\ref{thm:contraction} of the main text) by computing the defect operator $D_T = \lambda^2 I - e^{-i\Heff T \dagger} e^{-i\Heff T}$ at the critical normalization $\lambda = e^{\betaI T}$.
The contraction property states that $D_T \succeq 0$ at $\lambda = e^{\betaI T}$ (i.e., $e^{-i\Heff T}/e^{\betaI T}$ is a contraction) and $D_T \not\succeq 0$ for any $\lambda < e^{\betaI T}$.

\begin{table*}[t]
\centering
\caption{Defect operator verification.
$\lambda_{\min}/e^{\betaI T}$: ratio of the minimum normalization achieving $D_T \succeq 0$ to the predicted value; $\rank(D_T)$: defect rank at $\lambda = e^{\betaI T}$; contraction columns verify that $e^{-i\Heff T}/(0.99\,e^{\betaI T})$ fails to be a contraction while $e^{-i\Heff T}/(1.01\,e^{\betaI T})$ succeeds.}
\label{tab:defect}
\begin{tabular}{cccccc}
\toprule
$n$ & $\betaI T$ & $\lambda_{\min}/e^{\betaI T}$ & $\rank(D_T)$ & Contract ($0.99\lambda$) & Contract ($1.01\lambda$) \\
\midrule
4  & 1  & 1.000000 & 3  & No  & Yes \\
4  & 10 & 1.000000 & 3  & No  & Yes \\
4  & 20 & 1.000000 & 3  & No  & Yes \\
8  & 1  & 1.000000 & 7  & No  & Yes \\
8  & 10 & 1.000000 & 7  & No  & Yes \\
8  & 20 & 1.000000 & 7  & No  & Yes \\
16 & 1  & 1.000000 & 15 & No  & Yes \\
16 & 10 & 1.000000 & 15 & No  & Yes \\
16 & 20 & 1.000000 & 15 & No  & Yes \\
\bottomrule
\end{tabular}
\end{table*}

\begin{remark}[Defect rank structure]
\label{rem:defect_rank}
The defect rank is $n - 1$ in all cases, confirming that one eigenvalue of $e^{-i\Heff T}$ saturates the contraction bound $e^{\betaI T}$.
This is expected, seeing as maximizing eigenvalue $\omega(\Heff) = \betaI$ is achieved by a unique eigenvector of $\Heff$ (since a repeated maximum would require $\HR$ and $\HI$ to share a degenerate eigenspace, a codimension-$\geq 1$ condition), so exactly one singular value of $e^{-i\Heff T}/e^{\betaI T}$ reaches unity.
This eigenvalue corresponds to the eigenstate of $\Heff$ with the largest imaginary part (i.e., the eigenstate most amplified by the non-Hermitian evolution).
Critical normalization $\lambda = e^{\betaI T}$ is sharp, and reducing it by even 1\% causes the contraction to fail, while increasing by 1\% produces a strictly positive defect, supporting a notion the walk-operator barrier $e^{-2\betaI T}$ cannot be improved by any function class within the walk-operator model.
\end{remark}

\subsection{Advantage factor for structured Hamiltonians}
\label{app:advantage}

Table~\ref{tab:advantage} extends the spectral abscissa gap analysis to structured Hamiltonians, including rank-1 $\HI$ (modeling a single absorbing channel) and nearly commuting pairs. 
Advantage scaling and a distribution for non-commuting direct-access advantage is given in Figure~\ref{fig:advantage_ensemble}. 

\begin{table*}[t]
\centering
\caption{Advantage factor $e^{2(\betaI - \omega)T}$ for structured Hamiltonians.
Dense: $\HR$ Hermitized complex Ginibre random matrix, $\HI = A A^\dagger$ with $A$ complex Gaussian;
both operator-norm-rescaled to $1$ (so $\alphaR = \betaI = 1$), 500 samples per cell.
Rank-1: $\HI = \ket{v}\bra{v}$ with random unit $\ket{v}$, 500 samples.
Near-diagonal: $\HR$ random real diagonal (op-norm $1$); $\HI = (1-\lambda)\HI^{\mathrm{diag}} + \lambda\HI^{\mathrm{rand}}$,
both summands PSD with op-norm $1$, $\lambda$ tuned to target $\|[\HR,\HI]\|_\op \approx \eps_{\mathrm{comm}}$,
single representative per $\eps_{\mathrm{comm}}$.
Achievable commutator norm for diagonal $\HR$ saturates near $0.5$, so
$\eps_{\mathrm{comm}} = 1$ is the saturation case (actual $\approx 0.51$).
Both mean and median reported because the advantage distribution is fat-tailed.}
\label{tab:advantage}
\renewcommand{\arraystretch}{1.15}
\begin{tabular}{cccrrr}
\toprule
$n$ & $\HI$ Structure & $\betaI T$ & Adv. Mean & Adv. Median & Adv. Max \\
\midrule
4  & dense  & 10 & $1.6\times10^{3}$  & $1.2\times10^{2}$ & $8.6\times10^{4}$  \\
4  & dense  & 20 & $4.3\times10^{7}$  & $1.4\times10^{4}$ & $7.4\times10^{9}$  \\
8  & dense  & 10 & $1.5\times10^{3}$  & $2.2\times10^{2}$ & $1.1\times10^{5}$  \\
8  & dense  & 20 & $3.6\times10^{7}$  & $4.7\times10^{4}$ & $1.2\times10^{10}$ \\
16 & dense  & 10 & $1.1\times10^{3}$  & $3.4\times10^{2}$ & $3.0\times10^{4}$  \\
16 & dense  & 20 & $9.4\times10^{6}$  & $1.2\times10^{5}$ & $9.3\times10^{8}$  \\
\midrule
4  & rank-1 & 10 & $1.1\times10^{4}$  & $1.7\times10^{2}$ & $9.2\times10^{5}$  \\
4  & rank-1 & 20 & $3.4\times10^{9}$  & $2.9\times10^{4}$ & $8.5\times10^{11}$ \\
8  & rank-1 & 10 & $8.6\times10^{3}$  & $1.8\times10^{2}$ & $5.9\times10^{5}$  \\
8  & rank-1 & 20 & $2.1\times10^{9}$  & $3.4\times10^{4}$ & $3.5\times10^{11}$ \\
16 & rank-1 & 10 & $2.2\times10^{4}$  & $2.7\times10^{2}$ & $3.7\times10^{6}$  \\
16 & rank-1 & 20 & $4.7\times10^{10}$ & $7.4\times10^{4}$ & $1.4\times10^{13}$ \\
\midrule
8  & near-diag, $\eps_{\mathrm{comm}} = 0.01$ & 10 & $1.01$  & --- & --- \\
8  & near-diag, $\eps_{\mathrm{comm}} = 0.10$ & 10 & $1.48$  & --- & --- \\
8  & near-diag, $\eps_{\mathrm{comm}} \approx 0.51$ & 10 & $256$   & --- & --- \\
\bottomrule
\end{tabular}
\end{table*}

\begin{figure}[!htbp]
\centering
\includegraphics[width=\linewidth]{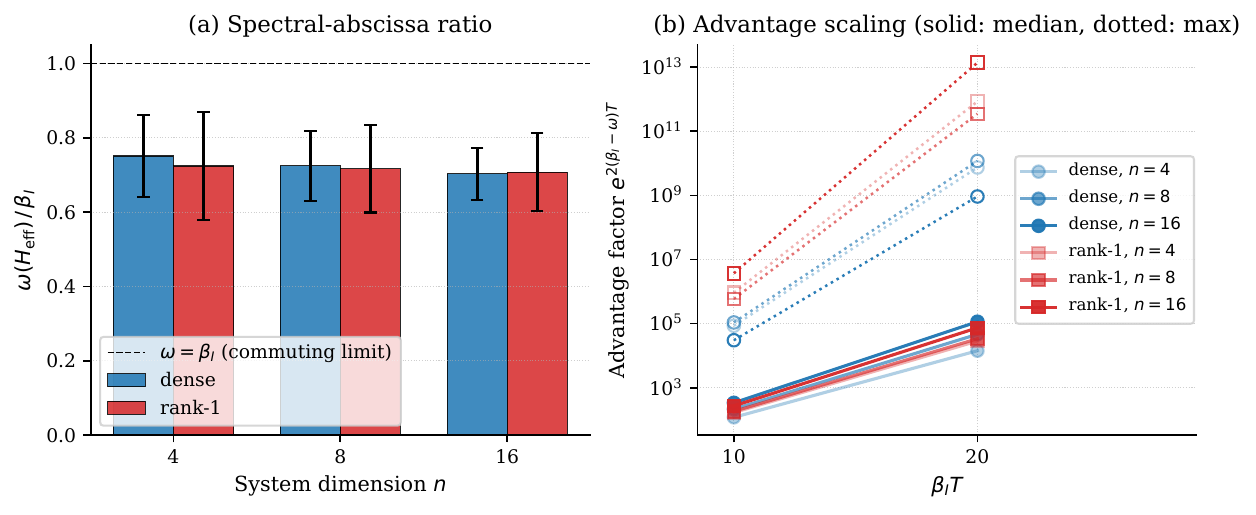}
\caption{Empirical distribution of the non-commuting
direct-access advantage factor $e^{2(\betaI-\omega)T}$, for the same
GUE\,$\times$\,Wishart and rank-1 ensembles reported in
Table~\ref{tab:advantage} ($N=500$ samples per cell, seed~42;
both $\HR$ and $\HI$ are operator-norm rescaled to $\betaI=1$).
(a)~Mean spectral-abscissa ratio $\omega(\Heff)/\betaI$ with
sample standard deviation (error bars) for system dimensions
$n\in\{4,8,16\}$ and two structural ensembles (dense Wishart vs rank-1
projector).  All samples lie in the range $\omega/\betaI\in[0.70,0.76]$,
below the commuting limit $\omega=\betaI$ (dashed line).
(b)~Advantage factor at $\betaI T\in\{10,20\}$.
Solid curves report ensemble median, dotted curves sample
maximum; log axis makes the asymptotic $e^{2(\betaI-\omega)T}$
scaling direct.  The $\omega/\betaI$ ratio decreases monotonically
with~$n$ for the dense ensemble (Proposition~\ref{prop:opnorm}(iii);
generic non-commutativity of GUE samples grows with dimension) and the
advantage factor grows correspondingly.}
\label{fig:advantage_ensemble}
\end{figure}

\begin{remark}[Structure dependence]
\label{rem:structure_dependence}

(i) \emph{Rank-1 $\HI$} maximizes the advantage (single absorbing direction concentrates the spectral abscissa gap).
The rank-1 advantage exceeds the dense case by $1$--$2$ orders of magnitude at $\betaI T = 20$.

(ii) \emph{Nearly commuting} ($\eps_{\mathrm{comm}} \to 0$) eliminates advantage, consistent with the commuting no-go result (Proposition~\ref{prop:commuting_nogo} of the main text): when $[\HR, \HI] = 0$, $\omega = \betaI$ and the intrinsic barrier equals the walk-operator barrier.

(iii) \emph{Structured physical Hamiltonians.}
For Lindblad systems or scattering problems with complex absorbing potentials, the spectral abscissa gap $\betaI - \omega$ significantly exceeds GUE estimates above, because physical non-commutativity tends to be more structured than random-matrix non-commutativity. 
The advantage factor for such systems at $\betaI T \geq 50$ exceeds $10^{10}$ by many orders of magnitude.
\end{remark}

\subsection{Pseudospectral characterization}
\label{app:pseudospectra}

Table~\ref{tab:pseudospectra} reports pseudospectral quantities relevant to the spectral abscissa gap characterization of Sec.~\ref{subsec:pseudospectral} of the main text.

\begin{table}[t]
\centering
\caption{Pseudospectral quantities for GUE-sampled Hamiltonians at $\alphaR = \betaI$, averaged over 500 samples.}
\label{tab:pseudospectra}
\begin{tabular}{ccccccc}
\toprule
$n$ & $\omega/\betaI$ & Kreiss $\mathcal{K}$ & $\alpha_{\eps=0.1}/\betaI$ & $\alpha_{\eps=0.5}/\betaI$ & $\sigma(\omega/\betaI$ & $\sigma(K)$\\
\midrule
4  & 0.754 & 1.32 & 0.890 & 1.353 & 0.12 & 0.26\\
8  & 0.740 & 1.44 & 0.883 & 1.343 & 0.07 & 0.35\\
16 & 0.698 & 1.64 & 0.860 & 1.348 & 0.07 & 0.25\\
\bottomrule
\end{tabular}
\end{table}

The Kreiss constant $\mathcal{K} \in [1.32, 1.64]$ indicates modest transient growth, confirming that the non-Hermitian evolution does not exhibit the large transient amplification that can occur in highly non-normal systems.
The pseudospectral abscissa $\alpha_\eps$ interpolates between $\omega$ (at $\eps = 0$) and values exceeding $\betaI$ (at $\eps \sim 0.5$), providing a continuous characterization of the barrier across perturbation scales.

\subsection{Gap predictor comparison}
\label{app:predictors}

Five candidate predictors for the spectral abscissa gap $\omega/\betaI$ were evaluated on the ensemble of 1000 random Hamiltonians with $n = 8$, $\alphaR = \betaI$:

\begin{enumerate}
  \item \emph{Projected commutator} $\|[\HR, \Pi_{\betaI}]\|/\|\HR\|$, where $\Pi_{\betaI}$ is the projector onto the top eigenspace of $\HI$: $R^2 = 0.458$.
  \item \emph{Weighted spectral overlap} $\sum_j |\langle \psi_j^{(R)} | \phi_{\max}^{(I)} \rangle|^2 \lambda_j^{(R)}$: $R^2 = 0.395$.
  \item \emph{Full commutator} $\|[\HR, \HI]\|_F / (\|\HR\|_F \|\HI\|_F)$: $R^2 = 0.163$.
  \item \emph{Spectral overlap entropy} $-\sum_j p_j \log p_j$ with $p_j = |\langle \psi_j^{(R)} | \phi_{\max}^{(I)} \rangle|^2$: $R^2 = 0.312$.
  \item \emph{Inverse participation ratio} $\sum_j p_j^2$: $R^2 = 0.289$.
\end{enumerate}

The projected commutator is the most promising, explaining approximately 46\% of the variance in $\omega/\betaI$.
Modest $R^2$ values indicate that no simple spectral quantity fully determines the gap; the spectral abscissa depends on eigenbases of $\HR$ and $\HI$, resisting reduction to a single scalar.

\begin{remark}[Extension factor correlations]
\label{rem:extension_correlations}
Correlation between $\log(\text{extension factor})$ (measuring blowup when extending a direct-access polynomial from the restricted domain to the full bitorus) and $\|[\HR, \HI]\|_F$ is $r \approx -0.05$, essentially zero.
The correlation with $\omega/\betaI$ is $r \approx +0.17$, indicating larger spectral abscissa gaps (more room for improvement) are associated with slightly larger extension factors (more difficult extension).
The extension problem and gap characterization are governed by distinct structural properties of $\Heff$, reinforcing an outlook where the extension to the full bitorus is a remaining obstacle.
\end{remark}
\bibliography{nonherm_qsp_refs}

\end{document}